\let\oldnl\nl
\newcommand{\nonl}{\renewcommand{\nl}{\let\nl\oldnl}}
\newtheorem{theorem}{Theorem}[section]
\newtheorem{claim}[theorem]{Claim}
\newtheorem{remark}[theorem]{Remark}
\newtheorem{lemma}[theorem]{Lemma}
\newtheorem{corollary}[theorem]{Corollary}
\newtheorem{definition}[theorem]{Definition}
\newtheorem{observation}[theorem]{Observation}
\newtheorem{example}[theorem]{Example}
\newcommand{\eps}{\varepsilon}
\renewcommand{\epsilon}{\varepsilon}
\newcommand{\eat}[1]{}
\newcommand{\R}{\mathbb{R}}
\newcommand{\poly}{\operatorname{poly}}
\newcommand{\diam}{\ensuremath{\mathsf{diam}}\xspace}
\newcommand{\OPT}{\ensuremath{\mathsf{OPT}}\xspace}
\newcommand{\FL}{Feldman-Langberg\xspace}
\newcommand{\calC}{{\mathcal{C}}}
\newcommand{\calG}{{\mathcal{G}}}
\newcommand{\calP}{{\mathcal{P}}}
\newcommand{\ProblemName}[1]{\textsc{#1}}
\newcommand{\kzC}{\ProblemName{$(k, z)$-Clustering}\xspace}
\newcommand{\kMedian}{\ProblemName{$k$-Median}\xspace}
\newcommand{\kMeans}{\ProblemName{$k$-Means}\xspace}
\newcommand{\kCenter}{\ProblemName{$k$-Center}\xspace}
\newcommand{\cost}{\ensuremath{\mathrm{cost}}\xspace}
\newcommand{\OkM}{\ProblemName{Ordered $k$-Median}\xspace}
\title{\bf Coresets for Clustering in Euclidean Spaces: Importance Sampling is Nearly Optimal}
\author{Lingxiao Huang \\ Yale University \and Nisheeth K. Vishnoi \\ Yale University}
\begin{document}

%
\maketitle

\begin{abstract}
	%
	%
	Given a collection of $n$ points in $\R^d$, the goal of the \kzC problem is to find a subset of $k$  ``centers'' that minimizes the sum of the $z$-th powers of the Euclidean distance of each point to the closest center.
	Special cases of the \kzC problem include the \kMedian and \kMeans problems.
	Our main result is a unified two-stage importance sampling framework that constructs an $\eps$-coreset for the \kzC problem.
	Compared to the results for \kzC in~\cite{feldman2011unified}, 
	our framework saves a $\eps^2 d$ factor in the coreset size.
	%
	{Compared to the results for \kzC in~\cite{sohler2018strong},
		our framework saves a $\poly(k)$ factor in the coreset size and avoids the $\exp(k/\eps)$ term in the construction time.}
	Specifically, our coreset for \kMedian ($z=1$) has size $\tilde{O}(\eps^{-4}k)$ which,
	when compared to the result in~\cite{sohler2018strong}, 
	saves a $k$ factor in the coreset size. 
	Our algorithmic results rely on a new dimension reduction technique
	%
	%
	%
	%
	that connects two well-known shape fitting problems: subspace approximation and clustering, and may be of  independent  interest.
	We also provide a size lower bound of $\Omega\left(k\cdot \min\left\{2^{z/20},d\right\}\right)$ for a $0.01$-coreset for \kzC, which has a linear dependence of size on $k$ and an exponential dependence on $z$ that matches our algorithmic results.
\end{abstract}

\thispagestyle{empty}

\newpage

\thispagestyle{empty}

\tableofcontents
\newpage

\setcounter{page}{1}

\section{Introduction}
\label{sec:introduction}

We study the problem of constructing coresets  for \kzC in Euclidean space $\R^d$ where $z\geq 1$ is constant.

\paragraph{\kzC in $\R^d$.} The input is a collection of $n$ points $X\subseteq \R^d$,
and the goal is to find a set $C \subseteq \R^d$ of $k$ points,
called \emph{center set}, that minimizes the objective function
\begin{equation} 
\label{eq:DefCost}
\cost_z(X, C) := \sum_{x \in X}{d^z(x, C)},
\end{equation}
where, throughout, $d^z$ denotes the Euclidean distance raised to power $z\ge 1$,
and 
\[
d(x, C):=\min\left\{d(x,c) = \sqrt{\sum_{i\in [d]} (x_i-c_i)^2}: c\in C\right\}.
\]
This formulation captures classical problems,
including the \kMedian problem (where $z = 1$) and  the \kMeans problem (where  $z=2$).
Moreover, this formulation can be generalized to weighted point sets in which each point $x\in X$ has a weight $u(x)$ and the goal is to compute a $k$-center set $C\subset \R^d$ that minimizes 
\[
\cost_z(X, C) := \sum_{x \in X}{u(x)\cdot d^z(x, C)}.
\]

\noindent
The \kzC problem is an essential tool in data analysis and is used in many application domains including approximation algorithms, unsupervised learning, and computational geometry~\cite{lloyd1982least,tan2006cluster,arthur2007k,coates2012learning}.
Due to its importance, several approximation algorithms for this clustering problem have been proposed~\cite{arya2004local,bandyapadhyay2016variants,mondal2018improved,cohen2019local}.

\paragraph{Coresets.} 
In recent years, a powerful data-reduction technique -- {coresets} --
has been used to find approximately optimal clustering in large datasets~\cite{harpeled2004on,feldman2011unified,feldman2013turning}.
Roughly speaking, a coreset is a ``compact'' summary of the data set, represented by a collection of weighted points, that approximates the clustering objective
for every possible choice of center set.
Let $\calC$ denote the collection of all ordered subsets (repetitions allowed) of $\R^d$ of size $k$ ($k$-center sets).
The coreset definition is as follows.

\begin{definition}[\bf{Coreset~\cite{langberg2010universal,feldman2011unified}}]
	\label{def:coreset}
	Given a collection $X\subseteq \R^d$ of $n$ weighted points and $\eps\in (0,1)$, an $\eps$-coreset for \kzC is a subset $S \subseteq \R^d$ with weights $w : S \rightarrow \R_{\geq 0}$ such that for any $k$-center set $C\in \calC$, the \kzC objective with respect to $C$ is $\eps$-approximately preserved, i.e.,
	\begin{equation*} 
	\sum_{x \in S}w(x) \cdot d^z(x, C)
	\in (1 \pm \eps) \cdot \cost_z(X, C).
	\end{equation*}
\end{definition}

\noindent
Coresets have been extensively studied in Euclidean spaces.
For \kzC in $\R^d$, Feldman and Langberg~\cite{feldman2011unified}
construct an $\eps$-coreset $O(\eps^{-2z} k d \log (k/\eps))$ based on an importance sampling framework.
Specifically, the dependence on dimension $d$ can be removed for \kMedian~\cite{sohler2018strong} and \kMeans~\cite{feldman2013turning,braverman2016new,sohler2018strong,becchetti2019oblivious}.

However, it is unknown whether we can similarly remove the dependence  on $d$ for general \kzC for arbitrary constant $z\geq 1$.
Also, the coreset for \kMedian~\cite{sohler2018strong} has a quadratic dependence of size on $k$, that does not match the size lower bound.
Moreover, the constructions of~\cite{sohler2018strong,becchetti2019oblivious} need a generalized notion of coreset instead of Definition~\ref{def:coreset}, which may increase the difficulty of applying existing clustering algorithms on coresets. 
Thus, it is an important problem to understand whether there exists a unified framework that constructs coresets satisfying Definition~\ref{def:coreset} for general \kzC, with a linear dependence on $k$, and no dependence on $d$.

\subsection{Our contributions}
\label{sec:contribution}

The main contribution of this paper is a unified framework that constructs $\eps$-coresets for \kzC of size $\tilde{O}(\min\left\{\eps^{-2z-2} k, 2^{2z}\eps^{-4} k^2\right\})$ and a nearly matching lower bound. 
We first propose a two-staged importance sampling framework that constructs coresets for \kzC (constant $z\geq 1$); summarized in the following theorem.
%

\begin{table}[t]
	\centering
	\caption{Summary of coreset size for \kzC.}
	\label{tab:result}
	
	\begin{tabular}{cc|c|c}
		\toprule
		& Reference & Objective & Coreset Size \\
		\midrule
		\multirow{10}{*}{Upper bounds} & \cite{feldman2011unified} & \kzC & $O\left(\eps^{-2z} k \textbf{\emph{d}} \log (k/\eps)\right)$ \\
		& \cite{varadarajan2012sensitivity} & \kzC & $O\left(2^{2z} \eps^{-2} k \textbf{\emph{d}} \log (k/\eps)\right)$ \\
		& \cite{sohler2018strong} & \kzC &  $\poly(k/\eps^z)$ \footnotemark[1]\\
		& This paper & \kzC & $O\left(\min\left\{\eps^{-2z-2}, 2^{2z} \eps^{-4} k \right\}\cdot k \log k \log(k/\eps)\right)$ \\
		\cline{2-4}
		& \cite{feldman2011unified} & \kMedian & $O\left(\eps^{-2} k \textbf{\emph{d}} \log k\right)$ \\
		& \cite{sohler2018strong} & \kMedian &  $O\left(\eps^{-4} k^{\mathbf{2}} \log k\right)$\\
		& This paper & \kMedian & $O\left(\eps^{-4}\cdot k\log k \log(k/\eps)\right)$ \\
		\cline{2-4}
		& \cite{braverman2016new} & \kMeans & $O\left(\eps^{-3} k^2 \log (k/\eps)\right)$ \\
		& \cite{becchetti2019oblivious} & \kMeans & $O\left(\eps^{-6} k \log^2 (k/\eps) \log (1/\eps)\right)$ \\
		& This paper & \kMeans & $O\left(\eps^{-6}\cdot k\log k \log(k/\eps)\right)$ \\
		\hline
		\multirow{2}{*}{Lower bounds} & \cite{braverman2019coresets} & \kMedian ($d=1$) & $\Omega\left(\eps^{-1/2} k\right)$ \\
		& This paper & \kzC &  $\Omega(2^{z/20} k)$ \\
		\bottomrule
	\end{tabular}

\end{table}

\footnotetext[1]{The paper did not present this result directly. But their approach can be easily generalized to \kzC.}

\begin{theorem}[\bf{Informal, see Theorem~\ref{thm:coreset}}]
	\label{thm:coreset_informal}
	There exists a randomized algorithm that, given a dataset $X$ of $n$ points in $\R^d$, $\eps\in (0,1/2)$, constant $z\geq 1$ and integer $k \geq 1$, constructs an $\eps$-coreset of size $\tilde{O}(\min\left\{\eps^{-2z-2} k, 2^{2z}\eps^{-4} k^2\right\})$ for \kzC, and runs in time $\tilde{O}(ndk)$.
\end{theorem}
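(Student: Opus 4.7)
The plan is a two-stage framework: in stage one, reduce the ambient dimension from $d$ to a target $m$ depending only on $k$ and $\eps$ via approximate subspace approximation; in stage two, run the Feldman-Langberg sensitivity-based importance sampling on the reduced instance and lift the sampled points back to $\R^d$ with a small weight correction.

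For stage one, the key observation is that any $k$-center set $C$ spans an affine subspace of dimension at most $k$, so a subspace $V \sse \R^d$ that approximately minimizes $\sum_{x \in X} d^z(x, V)$ over all $m$-dimensional subspaces (for suitable $m = \poly(k, 1/\eps)$) ought to approximately preserve the clustering cost against every $C$. I would compute such a $V$ in time $\tilde{O}(ndk)$ using a known fast subspace-approximation subroutine, and let $\tilde{x}$ denote the orthogonal projection of $x$ onto $V$. The target is a structural decomposition
\[
\cost_z(X, C) \;=\; \sum_{x \in X} d^z(\tilde{x}, C) \;+\; \sum_{x \in X} d^z(x, V) \;\pm\; \eps \cdot \cost_z(X, C),
\]
uniformly for all $C \in \calC$. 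For $z = 2$ this reduces to Pythagoras; for general $z$, I would exploit the identity $d^2(x, C) = d^2(\tilde{x}, C) + d^2(x, V)$ together with convexity / H\"older-type inequalities for $(a^2 + b^2)^{z/2}$, using that the subspace-approximation error $\sum_x d^z(x, V)$ is a small fraction of $\cost_z(X, C)$ whenever $m$ is large enough.

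For stage two, once the instance effectively lives in the $m$-dimensional $V$, I would invoke the Feldman-Langberg framework. First, compute an $O(1)$-approximate $k$-clustering on the projected points to obtain sensitivity upper bounds $\sigma(x)$, yielding total sensitivity $S = O(2^z k)$ by the standard ball-decomposition argument. Second, bound the pseudo-dimension of the range space $\{x \mapsto d^z(x, C) : C \in \calC\}$ restricted to $V$ by $\tilde{O}(km)$, using that Euclidean distance functions are semialgebraic of bounded complexity in dimension $m$. Sampling each point with probability proportional to $\sigma(x)$ then yields an $\eps$-coreset of the projected instance of size $\tilde{O}(S \cdot \mathrm{pdim} / \eps^2)$. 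Lifting: each sampled $\tilde{x}$ is replaced by its preimage $x \in \R^d$ with the same weight, and the additive term $\sum_x d^z(x, V)$ is absorbed by a constant weight shift or by $\tilde{O}(1)$ anchor points. The two branches of the min correspond to two choices of $m$: a small $m \sim \eps^{-2z}$ yields the $\eps^{-2z-2} k$ branch (linear in $k$, worse in $\eps$ and $z$), while $m \sim k$ yields the $2^{2z} \eps^{-4} k^2$ branch.

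The hardest step I anticipate is the structural dimension-reduction lemma for arbitrary $z \ge 1$: making the decomposition above a $(1 \pm \eps)$ approximation uniformly over all $C \in \calC$ requires tightly controlling cross-terms between the in-subspace distance $d(\tilde{x}, C)$ and the out-of-subspace distance $d(x, V)$ when raised to the power $z$, and showing that the accumulated error can be absorbed into $\eps \cdot \cost_z(X, C)$ by tuning $m$. A secondary challenge is meeting the $\tilde{O}(ndk)$ runtime end-to-end: the fast subspace-approximation routine, the bicriteria approximation used to compute sensitivities, and the sampling step must all fit within this budget, which will likely require chaining together several sketching-based primitives from prior work.
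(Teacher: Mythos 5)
There is a genuine gap at the heart of your stage one. The structural decomposition you aim for,
\[
\cost_z(X,C) \;=\; \sum_{x\in X} d^z(\tilde{x},C)\;+\;\sum_{x\in X} d^z(x,V)\;\pm\;\eps\cdot\cost_z(X,C)
\qquad\text{uniformly over all }C\in\calC,
\]
with a $C$-independent additive term, is false for general $z$ at the $(1\pm\eps)$ level — already for $z=1$. Per point, $d(x,C)=\bigl(d^2(\tilde{x},C)+d^2(x,V)\bigr)^{1/2}$ when the nearest center lies in $\mathrm{Span}(V\cup C)$-type configurations, whereas your right-hand side charges $d(\tilde{x},C)+d(x,V)$; the discrepancy is a constant fraction of $\min\{d(\tilde{x},C),d(x,V)\}$, and summed over $X$ it is a constant fraction of $\sum_x d^z(x,V)$, which for $m=\poly(k/\eps)$ can itself be comparable to $\cost_z(X,C)$ (increasing $m$ does not drive the subspace-approximation residual to zero in high dimension). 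This is exactly why \cite{sohler2018strong} had to work with the per-point combined quantity $\bigl(d^2(\pi(x),C)+d^2(x,\pi(x))\bigr)^{z/2}$ (Lemma~\ref{lm:projection}) or, alternatively, with a generalized ``coreset with offsets''; your proposed fix of absorbing the residual by ``a constant weight shift or $\tilde{O}(1)$ anchor points'' is essentially that offset notion, which does not yield a coreset in the sense of Definition~\ref{def:coreset} — the very thing the theorem claims.

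A second, independent gap is in stage two: projecting the \emph{data} does not reduce the dimension that enters the \FL sample-size bound, because the function dimension is governed by the space of center sets ($C$ ranges over $(\R^d)^k$), not by where the points live. To legitimately restrict attention to centers in a $\poly(k/\eps)$-dimensional subspace you need an argument that costs are (approximately) constant on equivalence classes of centers — the paper's representativeness property (Definition~\ref{def:representativeness}) — and, crucially, you need this for the \emph{weighted sample} as well as for $X$. The paper secures the latter by showing the sample is an $\eps$-weak-coreset for $(k,z)$-subspace approximation in $\Gamma^\perp$ and preserves the total projection cost (Lemma~\ref{lm:preserve}, Theorem~\ref{thm:subspace}); nothing in your proposal plays this role, and without it the lift-back step (preimages with the same weights) has no guarantee against centers outside $V$. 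Relatedly, the paper never projects the data when building the coreset: the coreset consists of original points, dimension reduction happens only in the analysis, and the second stage uses a \emph{terminal} embedding (Theorem~\ref{thm:embedding}) precisely because an ordinary Johnson--Lindenstrauss projection of the data cannot handle arbitrary centers in $\R^d$ (see Section~\ref{sec:failed}). Finally, the two branches of the $\min$ in the size bound come from choosing between the two sampling frameworks of Theorem~\ref{thm:fl11_bfl16} in the second stage (with total sensitivity $O(2^{2z}k)$), not from two choices of the reduced dimension $m$.
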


\noindent
We compare our results with existing coreset results for \kzC in  Table~\ref{tab:result}.
This is the first result that constructs an $\eps$-coreset for \kzC whose size is independent of $d$ and near-linearly dependent of $k$.
Note that if $\eps, z$ are constants, this result saves a $d$ factor compared to prior results~\cite{feldman2011unified,varadarajan2012sensitivity} and matches the size lower bound.
Compared to the result in~\cite{sohler2018strong}, our coreset saves a $\poly(k)$ factor in the size and can be constructed in polynomial time -- avoids the expontential term ($\exp(k/\eps)$) and the dependence of $1/\eps$ ($n \poly(k/\eps)$) in the construction time of~\cite{sohler2018strong}.
Specifically, for \kMedian, our result saves a $k$ factor compared to~\cite{sohler2018strong}.
Our construction applies a unified two-staged importance sampling framework (Algorithm~\ref{alg:coreset}). 
Compared to existing approaches~\cite{sohler2018strong,becchetti2019oblivious} that require to apply projection methods, our construction is simple to implement. 
Also note that our coreset satisfies (the standard) Definition~\ref{def:coreset}, instead of the one that requires offsets as in recent results~\cite{sohler2018strong,becchetti2019oblivious}.
Consequently, we can directly combine existing clustering algorithms with our coresets to estimate \kzC objectives. 
It is an interesting open problem to investigate whether one-stage importance sampling could produce coresets with a comparable size.

\sloppy
We also extend Theorem~\ref{thm:coreset_informal} to $\ell_p$-metrics whose distance function is $d_p(x,y) = \left(\sum_{i\in [d]} |x_i-y_i|^p \right)^{1/p}$ ($x,y\in \R^d$) instead of the Euclidean distance in Equation~\eqref{eq:DefCost}; see the following corollary.

\begin{corollary}[\bf{Informal, see Corollary~\ref{cor:coreset_lp}}]
	\label{cor:coreset_informal}
	There exists a randomized algorithm that, given a dataset $X$ of $n$ points in $\R^d$, $1\leq p<2$, $\eps\in (0,1/2)$, constant $z\geq 1$ and integer $k \geq 1$, constructs an $\eps$-coreset of size $\tilde{O}(\min\left\{\eps^{-4z-2} k, 2^{4z}\eps^{-4} k^2\right\})$ for \kzC with $\ell_p$-metric, and runs in time $\tilde{O}(ndk)$.
\end{corollary}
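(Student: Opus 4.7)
The plan is to lift the $\ell_p$ clustering instance into a Hilbert space via a classical isometric embedding and then invoke Theorem~\ref{thm:coreset_informal} on the lift with a larger exponent. The key fact is Schoenberg's theorem: for every $p \in (0, 2]$ the function $(x, y) \mapsto \|x - y\|_p^p$ is conditionally negative definite on $\R^d$, so there exists an isometric embedding $\phi : \R^d \to H$ into a separable Hilbert space satisfying
\[
\|\phi(x) - \phi(y)\|_H^2 \;=\; \|x - y\|_p^p \qquad \forall\, x, y \in \R^d.
\]
Raising this identity to the $z/p$ power and using that $z/p > 0$ preserves the ordering of distances (and hence the nearest center), we obtain, for every $k$-center set $C \subseteq \R^d$,
\[
\cost_z^{(p)}(X, C) \;=\; \sum_{x \in X} \|\phi(x) - \phi(C)\|_H^{2z/p}.
\]
Thus the $\ell_p$ $\kzC$ instance on $X$ is equivalent, via $\phi$, to a Euclidean $(k, z')$-clustering instance on $\phi(X) \subseteq H$ with exponent $z' := 2z/p$.

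Next, apply Theorem~\ref{thm:coreset_informal} to $\phi(X)$ with this exponent $z'$. Since the construction and analysis are dimension-independent and use only pairwise distances, they transfer to the Hilbert space $H$; in particular the embedding $\phi$ is never realized explicitly, as the algorithm only needs to evaluate $\|\phi(x) - \phi(y)\|_H^2 = \|x - y\|_p^p$, which costs no more than computing Euclidean distances and keeps the overall running time at $\tilde{O}(ndk)$. The resulting $\eps$-coreset $S$ has size
\[
\tilde{O}\!\left(\min\!\left\{\eps^{-2z'-2}\, k,\; 2^{2z'}\, \eps^{-4}\, k^2\right\}\right)
\;\le\;
\tilde{O}\!\left(\min\!\left\{\eps^{-4z-2}\, k,\; 2^{4z}\, \eps^{-4}\, k^2\right\}\right),
\]
where the inequality uses $z' = 2z/p \le 2z$ since $p \ge 1$. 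Interpreting $S$ as a weighted subset of $X$ via $\phi^{-1}$ and applying the Euclidean coreset guarantee at each image $\phi(C)$ then yields exactly the $\eps$-coreset condition of Definition~\ref{def:coreset} for $\ell_p$ $\kzC$ on $X$.

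The main obstacle I anticipate is verifying that the proof of Theorem~\ref{thm:coreset_informal} genuinely goes through in a separable Hilbert space. I expect this is essentially automatic: the sensitivity bounds, the subspace-approximation-based dimension reduction, and any VC-type or chaining covers used in its analysis all act within the linear span of the data together with the queried centers, which is a finite-dimensional Euclidean subspace of $H$ on which the original arguments apply verbatim, and the final size bound depends only on this intrinsic structure rather than on an ambient dimension. Confirming that every step of Theorem~\ref{thm:coreset_informal} is coordinate-free is routine bookkeeping; the real content of the corollary comes entirely from the exponent rescaling $z \mapsto 2z/p$ furnished by the Schoenberg lift, combined with the monotonicity in $p$ that lets us absorb $2z/p$ into the worst-case value $2z$.
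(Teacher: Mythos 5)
Your route is, at its core, the same as the paper's: reduce \kzC with $\ell_p$-metric ($1\leq p<2$) to Euclidean clustering with a larger exponent via a classical isometric embedding into ``$\ell_2$ squared,'' then invoke Theorem~\ref{thm:coreset_informal} and pull the coreset back. The differences are in execution. First, a cosmetic one: you use the Schoenberg normalization $\|\phi(x)-\phi(y)\|_H^2=\|x-y\|_p^p$, giving exponent $z'=2z/p\leq 2z$, while the paper uses an embedding with $d_p(x,y)=d_2^2(f(x),f(y))$, giving exponent exactly $2z$; both yield the claimed size bound. Second, and more substantively: the paper deliberately keeps the embedding finite-dimensional by defining $f$ only on $X$ together with a $\tau$-net $H_\tau$ of a bounded region around $X$, and then pays for this with a case analysis over center sets (centers far from $X$ handled directly via the diameter bound and the total-weight estimate, centers near $X$ snapped to the net with $(1\pm\eps)$ multiplicative plus $\eps\,\OPT_{p,z}/n$ additive error). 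Your proposal skips the net and the case analysis by working in the separable Hilbert space $H$ directly, which is cleaner \emph{provided} the main theorem is available there.

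That proviso is the one place where your argument is not yet a proof, and your justification as written is slightly off: you say the analysis ``acts within the linear span of the data together with the queried centers,'' but the coreset must be constructed before any center set is queried, so you cannot restrict to that span at construction time. The correct (and genuinely routine) fix is the same device the paper uses for $\Gamma'$: since $\phi(X)$ is finite, let $V=\mathrm{Span}(\phi(X))$, identify $V\oplus\R$ with $\R^{m+1}$ for $m=\dim V\leq n$, and observe that any center $c\in H$ can be replaced by $\tilde c=(\Pi_V c,\;\|c-\Pi_V c\|)\in V\oplus\R$ without changing any point-to-center distance $\|x-c\|$ for $x\in\phi(X)\subseteq V$. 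Hence an $\eps$-coreset for \ProblemName{$(k,z')$-Clustering} of $\phi(X)$ inside $\R^{m+1}$, guaranteed for all center sets there by Theorem~\ref{thm:coreset}, already certifies the guarantee for all center sets in $H$, and in particular for all $\phi(C)$ with $C\subseteq\R^d$; injectivity of $\phi$ and the option of forcing $C^\star\subseteq\phi(X)$ make the pullback $\phi^{-1}(S)\subseteq X$ well defined. With that paragraph added (and the observation, which you do make, that the sampling algorithm touches only pairwise $\ell_p$-distances of input points, so the running time is unaffected), your argument is complete; what it buys over the paper's proof is the elimination of the $\tau$-net, the total-weight lemma, and the far/near center case analysis, at the cost of having to state this finite-rank reduction explicitly.
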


\noindent
The main idea is that for $1\leq p < 2$, there exists an isometric embedding from $\ell_p$ to $\ell_2$ square~\cite{kahane1993some}. 
By this idea, we can reduce the problem of constructing an $\eps$-coreset for \kzC with $\ell_p$-metric (Definition~\ref{def:coreset_lp}) to constructing an $O(\eps)$-coreset for \ProblemName{$(k, 2z)$-Clustering} with $\ell_2$-metric (Definition~\ref{def:coreset}).
It is interesting to investigate whether the above corollary can be extended to all constant $p\geq 1$.

We also provide a matching size lower bound (Theorem~\ref{thm:lower}).

\begin{theorem}[\bf{Size lower bound}]
	\label{thm:lower}
	For every $z> 0$ and integers $d,k\geq 1$, there exists a point set $X$ in the Euclidean space $\R^d$ such that any $0.01$-coreset for \kzC over $X$ has size $\Omega\left(k\cdot \min\left\{2^{z/20},d\right\}\right)$.
\end{theorem}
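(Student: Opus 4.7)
The plan is to prove this in two stages: first reduce the $(k,z)$-clustering lower bound to a single-cluster lower bound of size $\Omega(M)$ for $M := \min\{2^{z/20}, d\}$, and then establish this single-cluster bound via an explicit construction using near-orthogonal basis vectors together with a pigeonhole argument on disjoint balls. The key quantitative ingredient is the exponential amplification of distance ratios produced by raising to the $z$-th power.

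For the reduction from $k$ to $1$, I would take $X := \bigsqcup_{j=1}^{k}(p_j + X_0)$, where $X_0 \subseteq \R^d$ is the hard single-cluster instance and $p_1,\ldots,p_k$ are placed in $k$ mutually orthogonal directions at distance $R \gg \diam(X_0)$ apart (using extra ambient coordinates if necessary). For $R$ large enough, the cost of any $k$-center set $C$ decomposes additively across translates, with each translate essentially served by one center. By fixing $k-1$ centers at the ``ideal'' positions for clusters other than $j$ and varying the remaining center in $\R^d$, the coreset restricted to the $j$-th translate must itself approximate every $(1,z)$-clustering cost for $X_0$ to within $0.01$. Hence each translate forces $\Omega(M)$ coreset points, giving $\Omega(kM)$ total.

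For the single-cluster construction, I would take $X_0 := \{e_1,\ldots,e_N\}$ with $N := M$, and use the test centers $c_i := -e_i$. A direct calculation gives
\[
\cost_z(X_0, c_i) = \|2 e_i\|^z + \sum_{j \neq i} \|e_j + e_i\|^z = 2^z + (N-1) \cdot 2^{z/2},
\]
which is at least $2^z$ and dominated by the $2^z$ term since $N \leq 2^{z/20} \ll 2^{z/2}$. Set $\rho := 1/\sqrt{2}$ so that the balls $B(e_i, \rho)$ are pairwise disjoint; by pigeonhole it suffices to prove that any $0.01$-coreset $(S, w)$ places at least one point in each $B(e_i, \rho)$, forcing $|S| \geq N$. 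To prove this claim, suppose toward contradiction that no coreset point lies in $B(e_{i^*}, \rho)$. Evaluating the coreset at $c = 0$ and at centers receding to infinity along various directions yields the moment bounds $\sum_s w(s) = (1 \pm O(\eps))N$ and $\sum_s w(s)\|s\|^z = (1 \pm O(\eps))N$. The parallelogram identity $\|s+e_{i^*}\|^2 + \|s-e_{i^*}\|^2 = 2\|s\|^2 + 2$ together with $\|s-e_{i^*}\| \geq \rho$ gives $\|s+e_{i^*}\|^2 \leq 2\|s\|^2 + 2 - \rho^2$. A two-constraint LP over the coreset weights then shows the coreset cost at $c_{i^*}$ is maximized by concentrating all weight on a unit-norm point outside the ball, yielding $\sum_s w(s)\|s+e_{i^*}\|^z \leq (1+O(\eps))\, N \cdot (7/2)^{z/2}$. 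Writing $(7/2)^{z/2} = 2^z \cdot (7/8)^{z/2}$, the ratio of the coreset cost to the true cost at $c_{i^*}$ is at most $N \cdot (7/8)^{z/2} = 2^{z/20 - (z/2)\log_2(8/7)} = 2^{-\Omega(z)}$, which is strictly less than $1 - 0.01$ for all sufficiently large $z$, contradicting the $0.01$-coreset property.

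The main obstacle is the LP argument ruling out coreset points of arbitrary norm (not only those within the unit ball containing $X_0$). The geometric bound $\|s + e_{i^*}\|^2 \leq 2\|s\|^2 + 2 - \rho^2$ combined with the two moment constraints must be leveraged carefully to rule out exotic weight distributions, and it must be verified that the LP optimum is indeed attained at unit-norm points; this is where the exponential gap between $(7/2)^{z/2}$ and $2^z$ supplies the quantitative heart of the bound. For constant $z$ where $2^{z/20}$ is a constant, the single-cluster bound degenerates to $\Omega(1)$ and the statement follows trivially from the $k$-cluster reduction applied to $k$ well-separated singletons.
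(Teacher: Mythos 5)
Your single-cluster argument is correct and takes a genuinely different route from the paper's. The paper uses $X=\{\pm e_1,\ldots,\pm e_d\}$, partitions $\R^d$ into the Voronoi cones of $X$, and bounds $d^z(x,-e_1)\le 2^{0.9z}$ inside the unit ball and $d^z(x,-e_1)/d^z(x,0)\le 2^{0.9z}$ outside it for every point avoiding the empty cone. You instead use only two moment constraints, $\sum_s w(s)\in(1\pm0.01)N$ (valid by letting the center recede to infinity, since $S$ is finite) and $\sum_s w(s)\|s\|^z\in(1\pm0.01)N$ (center $0$), together with the exclusion-ball estimate $\|s+e_{i^*}\|^2\le 2\|s\|^2+3/2$; the pointwise inequality $(2t^2+3/2)^{z/2}\le(7/2)^{z/2}(1+t^z)$ (split $t\le 1$ and $t\ge 1$) resolves the ``LP'' you flagged as the obstacle and gives coreset cost at $-e_{i^*}$ at most $2.02\,N(7/2)^{z/2}\le 2.02\cdot 2^{0.954z}<0.99\cdot 2^z$ for $z$ beyond a constant, which is the needed contradiction; small $z$ is the trivial case, as in the paper. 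One small repair: the theorem fixes the ambient space $\R^d$, so the $k$ translates must be placed far apart along an existing direction (e.g.\ collinearly along $e_1$), not in ``extra'' orthogonal coordinates.

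The genuine gap is your reduction from $k$ clusters to one. Fixing $k-1$ centers at the ideal positions of the other translates does \emph{not} make the restriction of $S$ to the $j$-th translate a $0.01$-coreset for $X_0$: the coreset guarantee is multiplicative in the \emph{total} cost, which at your test center sets is about $(k-1)N+2^z$, so the permitted error is $0.01\cdot\Theta(kN)$; once $k$ is large (say $kN\ge 2^z$) this slack swallows the per-cluster gap $\approx 2^z$ you exploit, and a coreset that entirely ignores one translate can still pass every such test. Likewise the per-cluster moment bounds cannot be obtained by sending a single center to infinity, because that translate is then served by the other clusters' centers. The paper avoids this by aggregating: it takes the set $A$ of at least $0.99k$ clusters receiving few coreset points and places the ``bad'' center $o_i-e_1$ \emph{simultaneously} in every cluster of $A$ (Lemma~\ref{lm:lowerk2}), so both the true cost ($\ge 0.99k\cdot 2^z$) and the coreset shortfall scale with $k$ and beat the relative slack, with per-cluster weight control coming from one far-but-dominant center per cluster (Claim~\ref{claim:lowerk1}). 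Your approach needs this (or an equivalent) simultaneous-bad-centers step; as written, the one-cluster-at-a-time reduction fails for large $k$, and this is precisely the part of the theorem where the paper's proof does real additional work.
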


\noindent
 To the best of our knowledge, this is the first result that shows that the coreset size for \kzC should exponentially depend on $z$.
However, tight dependence of size on the parameter $\eps$ is still unknown.

For the algorithmic result (Theorem~\ref{thm:coreset_informal}), our main technical contribution is a new dimension reduction technique which reduces the dimension of the space of $k$-center sets to $\poly(k/\eps)$. 
Towards this, we develop two new notions: representativeness and weak-coreset (Section~\ref{sec:definition}).
%
%
Given a collection $X$ in $\R^d$, we first divide all $k$-center sets into sub-groups by defining ``equivalence classes'' of $k$-center sets.
Our equivalence classes are induced by some subspace $\Gamma \subseteq \R^d$, in which each class is induced by  a $k$-center set in $\Gamma$ (Definition~\ref{def:equivalent}). 
Based on these equivalence classes, we define a  representativeness property (Definition~\ref{def:representativeness}), i.e., \kzC objectives over $X$ with respect to all $k$-center sets in an equivalence class are roughly the same.
We show that $X$ satisfies the representativeness property with respect to certain $\poly(k/\eps)$-dimensional subspace $\Gamma$ constructed by~\cite[Algorithm 1]{sohler2018strong} (Observation~\ref{ob:equivalent}).
Moreover, we present a sufficient condition for any weighted subset $S\subseteq X$ such that the representativeness property also holds for $S$, i.e.,  \kzC objectives over $S$ with respect to all $k$-center sets in an equivalence class are roughly the same (Observation~\ref{ob:equivalent_subset}).
The sufficient condition roughly requires that $S$ is a ``weak-coreset'' for $(k,z)$-subspace approximation over $X$ (Definition~\ref{def:weak_coreset}).
To satisfy this requirement, we only need the size of $S$ to be $\poly(k/\eps)$ (Theorem~\ref{thm:subspace}).
This enables us to only approximately preserve \kzC objectives in the low-dimensional subspace $\Gamma$ instead of $\R^d$, which leads to an $\eps$-coreset $D$ for \kzC of size $\poly(k/\eps)$. 

Compared to the \FL framework~\cite{feldman2011unified}, we successfully remove the dependence in coreset size of $d$ (Theorem~\ref{thm:upper}).
Moreover, by a well-known dimension reduction approach, called ``terminal embedding'' (Definition~\ref{def:embedding}), the dimension of $D$ can be further reduced to $O(\eps^{-2}\log(k/\eps))$. 
Thus, we can further reduce the coreset size to $\tilde{O}(\eps^{-2z-2} k)$ by applying an importance sampling framework over $D$ (Theorem~\ref{thm:reduction}).
Overall, our dimension reduction technique connects two well-known shape fitting problems: subspace approximation and clustering, and leads to a unified two-staged importance sampling framework for \kzC that removes the dependence of coreset size on $d$.
The geometric observations and notions, including equivalence classes and representativeness property, may be of independent  interest.

To establish a nearly tight size lower bound (Theorem~\ref{thm:lower}), we construct an instance $X$ of size $\Omega\left(k\cdot \min\left\{2^{z/20},d\right\}\right)$ such that for any point $x\in X$, there exists a $k$-center set $C_x$ which satisfies the condition that the clustering objective $d^z(x,C_x)\approx \cost_z(X,C_x)$.
To gain some intuition about the construction, if $k=1$ and $d\approx 2^{z/20}$ we let $X=\left\{e_1,-e_1,\ldots, e_d,-e_d\right\}$ and observe that $d^z(e_i,-e_i)=2^z \approx \cost_z(X,-e_i)$ for any $i\in [d]$.
Suppose $S$ is a $0.01$-coreset and $S\subseteq X$.
Then $S$ satisfies that $\cost_z(S,-e_i)\approx \cost_z(X,-e_i)$ for each $i\in [d]$.
Intuitively, each $e_i$ should be included in $S$ and, hence, $|S|= O(2^{z/20} k)$. 
The technical difficulty is that points in a coreset can come from $\R^d\setminus X$.
We show that even though $e_i$ may not be included in $S$, a  point close to $e_i$ must be included.
This results in a matching lower bound $|X|$.
%


\subsection{Other related works}
\label{sec:related}
Har-Peled and Mazumdar~\cite{harpeled2004on} constructed the first coreset for both \kMedian and \kMeans, however the size of their coresets depended exponentially on $d$.
Subsequently, Chen~\cite{chen2009on} improved dependence on the dimension to be polynomial for both \kMedian and \kMeans.
For \kMeans, Feldman et al.~\cite{feldman2013turning} designed coresets of size independent of $d$, which was improved by~\cite{braverman2016new} to be $\tilde{O}(\eps^{-3} k^2)$. 
Recently, the dependence on $k$ has been improved to near-linear by~\cite{becchetti2019oblivious}. 
For \kMedian, Sohler and Woodruff~\cite{sohler2018strong} show how to remove the dependence on $d$.
Recently, coreset for generalized clustering objective receives attention from the research community, for example, Braverman et al.~\cite{braverman2019coresets} obtain simultaneous coreset for \OkM.
Coresets for the fair version of \kMedian and \kMeans have also been investigated~\cite{schmidt2019fair,huang2019coresets}.
For another special case $z=\infty$, which is the \kCenter clustering, an $\eps$-coreset of size $O(\eps^{-d+1} k)$ can be constructed in near-linear time~\cite{agarwal2002exact,har2004clustering}.
This size has been proved to be tight for \kCenter [D. Feldman, private communication and~\cite{braverman2019coresets}].
For general \kzC (constant $z\geq 1$), Feldman and Langberg~\cite{feldman2011unified} construct an $\eps$-coreset of size $\tilde{O}(\eps^{-2z} kd)$, and recently this result has been generalized to doubling metrics~\cite{huang2018epsilon}.

For general metrics, an $\eps$-coreset for the \kzC problem of size $O(\eps^{-2z} k\log n)$ can be constructed in time $\tilde{O}(nk)$~\cite{feldman2011unified}, and
the $\log n$ factor is unavoidable~\cite{braverman2019treewidth}.
Specifically, for $k$-means clustering, Braverman et al.~\cite{braverman2016new} show a construction of size $O(\eps^{-2} k\log k\log n)$.

\section{Our notions of representativeness and weak-coreset}
\label{sec:definition}

In this section, we propose new definitions that are important for our dimension reduction technique.
Recall that $\calC$ denotes the collection of all ordered subsets (repetition allowed) of $\R^d$ of size $k$ ($k$-center sets).
We first define equivalence classes of $k$-center sets that partition $\calC$.

\paragraph{Equivalence classes of $k$-center sets and representativeness property.} 
Given a subspace $\Gamma \subsetneq \R^d$, we denote by $\pi_\Gamma: \R^d\rightarrow \Gamma$ the projection function from $\R^d$ to $\Gamma$, i.e., for any $x\in X$, 
\[
\pi_\Gamma(x) := \arg \min_{y\in \Gamma} d(x,y).
\]
If $\Gamma$ is clear from the context, we may denote $\pi_\Gamma$ by $\pi$.

\begin{definition}[\bf{Equivalence relations and equivalence classes induced by a subspace}]
	\label{def:equivalent}
	Given a subspace $\Gamma\subsetneq \R^d$, we define an equivalence relation $\sim_\Gamma$ between $k$-center sets as follows: for two $k$-center sets $C=\left(c_1,\ldots,c_k\right)$ and $C'=\left(c'_1,\ldots,c'_k\right)$, we say $C\sim_\Gamma C'$ if and only if for all $i\in [k]$,
	\[
	\pi_\Gamma(c_i) = \pi_\Gamma(c'_i) \quad \text{and} \quad d(c_i,\pi_\Gamma(c_i)) = d(c'_i,\pi_\Gamma(c'_i)).
	\]
	Let $\Gamma'$ be obtained from $\Gamma$ by appending an arbitrary dimension in $\R^d$ that is orthogonal to $\Gamma$.\footnote{Here, $\Gamma' = \left\{ ax+bu\mid x\in \Gamma, a\in \R,b\in \R\right\}$.}
	Let $\calC_\Gamma$ denote the collection of $k$-center sets $C$ all of whose points lie in $\Gamma'$, i.e.,
	\[
	\calC_\Gamma:= \left\{C=(c_1,\ldots,c_k) \in \mathcal{C}: c_i\in  \Gamma' \; \forall i\in[k]\right\}.
	\]
	The relation $\sim_\Gamma$ also induces equivalence classes of $k$-center sets $\left\{\Delta^{(\Gamma)}_C: C\in \calC_\Gamma \right\}$ where each $\Delta^{(\Gamma)}_C := \left\{C'\in \calC: C\sim_\Gamma C'\right\}$.
\end{definition}

\eat{
In the following, we explain why $\sim_\Gamma$ is an equivalence relation.
Given a subspace $\Gamma\subsetneq \R^d$, we define a mapping $\phi_\Gamma: \R^d \rightarrow \R^{d+1}$ where for any point $x\in \R^d$,
\[
\phi_\Gamma(x):= \left(\pi_\Gamma(x),d(x,\pi_\Gamma(x)\right).
\]
Given a $k$-center set $C=\left(c_1,\ldots,c_k\right)\in \calC$, we denote $\phi_\Gamma(C):= \left(\phi_\Gamma(c_1),\ldots,\phi_\Gamma(c_k)\right)$ to be a $k$-center set in $\R^{d+1}$.
Let $C=\left(c_1,\ldots,c_k\right)$ and $C'=\left(c'_1,\ldots,c'_k\right)$ be two center sets in $\calC$.
We have that $C\sim_\Gamma C'$ if and only if $\phi_\Gamma(C) = \phi_\Gamma(C')$, namely any pair of center points $c_i\in C$ and $c_i'\in C'$ satisfy that 1) the projections to $\Gamma$ are the same, i.e., $\pi_\Gamma(c_i) = \pi_\Gamma(c'_i)$; 2) the projection distances to $\Gamma$ are the same, i.e., $d(c_i,\pi_\Gamma(c_i)) = d(c'_i,\pi_\Gamma(c'_i))$.
This indicates that $\sim_\Gamma$ defines an equivalence relation.
}

\noindent
In the following, we explain why $\left\{\Delta^{(\Gamma)}_C: C\in \calC_\Gamma \right\}$ are equivalence classes induced by $\Gamma$.
Given a subspace $\Gamma\subsetneq \R^d$, we define a mapping $\phi: \R^d \rightarrow \R^{d+1}$ where for any point $x\in \R^d$,
\[
\phi(x):= \left(\pi_\Gamma(x),d(x,\pi_\Gamma(x)\right).
\]
Let $A: = \left\{\phi(C): C\in \calC\right\}$ denote the collection of images of $\phi$ with respect to $\calC$.
By the discussion above, each image $a\in A$ naturally corresponds to an equivalence class $\Delta_a^{(\Gamma)}:= \left\{C\in \calC: \phi(C)=a\right\}$.
Note that for any $x\in \R^d$, there must exist a point $x'\in \Gamma'$ such that
\[
\pi_\Gamma(x) = \pi_\Gamma(x') \quad \text{and} \quad d(x,\pi_\Gamma(x)) = d(x',\pi_\Gamma(x')),
\]
since $\Gamma'$ includes an additional dimension that is orthogonal to $\Gamma$. 
By the above equations, we have that $\phi(x) = \phi(x')$.
Thus, for any $a\in A$ and $C'\in \calC$ satisfying that $\phi(C')=a$, there must exist a $k$-center set $C\in \calC_\Gamma$ such that $\phi(C) = \phi(C')=a$.
This implies that $\Delta_a^{(\Gamma)}=\Delta^{(\Gamma)}_C$.
Since $\left\{\Delta^{(\Gamma)}_a: a\in A \right\}$ are equivalence classes induced by $\Gamma$, we conclude that $\left\{\Delta^{(\Gamma)}_C: C\in \calC_\Gamma \right\}$ are also equivalence classes induced by $\Gamma$.
\begin{example}
	\label{ex:equivalence}
	Let $d=3$, $k=1$ and $\Gamma$ denote the the first axis.
	Let $x=(x_1,x_2,x_3)$ and $x'=(x'_1,x'_2,x'_3)$ be two $1$-center sets in $\R^3$.
	By Definition~\ref{def:equivalent}, we have that $c\sim_\Gamma c'$ if and only if 
	\[
	x_1 = x'_1, \quad \text{and} \quad \sqrt{x_2^2 + x_3^2} = \sqrt{(x'_2)^2 + (x'_3)^2}.
	\]
	i.e., their first coordinates are the same and their distances to the first axis are the same.
	Hence, all points with the same $x$-coordinate form an equivalence class.

	Without loss of generality, let $\Gamma'$ be the plane spanned by the first and the second axes.
	Then each center $x = (x_1,x_2,0)\in \Gamma'$ corresponds to an equivalence class
	\[
	\Delta_x^{(\Gamma)} = \left\{x=(x_1,x_2,x_3)\in \R^d: x_1 = x_1, \sqrt{x_2^2 + x_3^2} = |x_2| \right\}.
	\]
\end{example}

\noindent
We also propose the following definition that indicates that \kzC objectives within an equivalence class are almost the same.

\begin{definition}[\bf{Representativeness property}]
	\label{def:representativeness}
	Given a weighted point set $A\subseteq \R^d$ together with a weight function $w: A\rightarrow \R_{\geq 0}$, $\eps\in (0,1)$ and a subspace $\Gamma$, we say that $A$ satisfies the $\eps$-representativeness property with respect to $\Gamma$ if for any equivalence class $\Delta_C^{\Gamma}$ and any two $k$-center sets $C_1,C_2\in \Delta_C^{\Gamma}$, the following property holds:
	\[
	\cost_z(A,C_1)\in (1\pm \eps)\cdot \cost_z(A,C_2).
	\]
\end{definition}

\noindent
It follows from the definition above that, if both the given dataset $X$ and a weighted point set $S$ satisfy the representativeness property with respect to a certain low-dimensional subspace $\Gamma$, then $S$ is a coreset if $S$ approximately preserves all \kzC objectives with respect to $k$-center sets $C\in \calC_\Gamma$. 
This observation enables us to only consider $k$-center sets in $\Gamma$ instead of $\R^d$, which is the key for our dimension reduction technique.

\paragraph{$(k,z)$-subspace approximation.} 
Let $\calP$ denote the collection of all $j$-flats in $\R^d$ with $j\leq k$, i.e., all subspaces in $\R^d$ of dimension at most $k$.

\begin{definition}[\bf{$(k,z)$-subspace approximation problem}]
	\label{def:subspace}
	Given a dataset $X$ in $\R^d$, $z>0$ and an integer $k\geq 1$, the goal of the $(k,z)$-subspace approximation problem is to find a subspace $P^\star\in \calP$ that minimizes $\sum_{x\in X} d^z(x,P)$ over all $P\in \calP$.
\end{definition}

\noindent
Subspace approximation is  a well-studied shape fitting problem. 
Several prior works have focussed on designing approximation algorithms~\cite{deshpande2011algorithms,chierichetti2017algorithms,ban2019ptas} and constructing coresets~\cite{feldman2011unified,cohen2015p,sohler2018strong} for subspace approximation.
In this paper, we propose the following version of ``weak-coreset'' for subspace approximation.

\begin{definition}[\bf{Weak-coreset for $(k,z)$-subspace approximation}]
	\label{def:weak_coreset}
	Given a collection $X\subseteq \R^d$ of $n$ weighted points and an $\eps\in (0,1)$, an $\eps$-weak-coreset for $(k,z)$-subspace approximation is a subset $S \subseteq \R^d$ with weights $w : S \rightarrow \R_{\geq 0}$ such that
	\begin{align}
	\label{eq:subspace}
	\min_{P\in \calP} \sum_{x\in S} w(x)\cdot d^z(x,P)\in (1\pm\eps)\cdot\min_{P\in \calP}\sum_{x\in X} d^z(x,P).
	\end{align}
\end{definition}

\noindent
Note that a weak-coreset may not approximately preserve all subspace approximation objectives like Definition~\ref{def:coreset}.
However, we can approximately compute the minimum $(k,z)$-subspace approximation objective via a weak-coreset.
We remark that Definition~\ref{def:weak_coreset} is a different version of a notion in~\cite{feldman2011unified}, in which a weak-coreset $S$ satisfies that any $(1+\eps)$-approximate solution for $(k,z)$-subspace approximation over $S$ is an $(1+O(\eps))$-approximate solution over $X$.

\section{Prior results on importance sampling and terminal embeddings}
\label{sec:result}

We first introduce two general frameworks for coresets for \kzC that will be used in our algorithm. 
Both were proposed by Feldman and Langberg~\cite{feldman2011unified}, and the second one is an improved version by~\cite{braverman2016new}; summarized as follows.

\begin{theorem}[\bf{\FL Framework~\cite{feldman2011unified,braverman2016new}}]
	\label{thm:fl11_bfl16}
	Let $\eps,\delta\in (0,1/2)$, $k\geq 1$ and constant $z\geq 1$.
	Let $X\subseteq \R^d$ denote a weighted point set of $n$ points together with a weight function $u: X\rightarrow \R_{\geq 0}$.
	Let $C^\star\in \calC$ denote a $k$-center set that is an $O(1)$-approximate solution for \kzC over $X$.
	We have two importance sampling frameworks as follows.
	\begin{enumerate}
	\item (\cite[Theorem 15.5]{feldman2011unified}) Suppose $\sigma:X\rightarrow \R_{\geq 0}$ is a sensitivity function satisfying that for any $x\in X$,
	\[
	\sigma(x) \geq \frac{ u(x)\cdot d^z(x,C^\star)}{\sum_{y\in X} u(y)\cdot d^z(y,C^\star)},
	\] 
	and $\calG := \sum_{x\in X}\sigma(x)$.
	Let $D\subseteq X$ be constructed by taking 
	\[
	O\left(\eps^{-2z} (dk \log k+\log(1/\delta))\right)
	\] 
	samples, where each sample $x\in X$ is selected with probability $\frac{\sigma(x)}{\calG}$ and has weight $w(x):= \frac{\calG}{|D|\cdot \sigma(x)}$.
	For each $c\in C^\star$, let $w(c):= (1+10\eps)\cdot\sum_{x\in X_c}u(x)-\sum_{x\in D\cap X_c} w(x)$ where $X_c$ is the collection of points in $X$ whose closest point in $C^\star$ is $c$.
	Then, with probability at least $1-\delta$, $S:=D\cup C^\star$ is an $\eps$-coreset for \kzC over $X$.\footnote{This conclusion is based on~\cite[Theorem 15.5]{feldman2011unified}. We discuss the theorem in Section~\ref{sec:fl11}.}
	\item (\cite[Theorem 5.2]{braverman2016new}) Suppose $\sigma:X\rightarrow \R_{\geq 0}$ is a sensitivity function satisfying that for any $x\in X$,
	\[
	\sigma(x) \geq \sup_{C\in \calC} \frac{u(x)\cdot d^z(x,C)}{\sum_{y\in X} u(y)\cdot d^z(y,C)},
	\] 
	and $\calG := \sum_{x\in X}\sigma(x)$.
	Let $S\subseteq X$ be constructed by taking 
	\[
	O\left(\eps^{-2} \calG (dk \log \calG +\log(1/\delta))\right)
	\] 
	samples, where each sample $x\in X$ is selected with probability $\frac{\sigma(x)}{\calG}$ and has weight $w(x):= \frac{\calG}{|S|\cdot \sigma(x)}$.
    Then, with probability at least $1-\delta$, $S$ is an $\eps$-coreset for \kzC over $X$.
	\end{enumerate}
\end{theorem}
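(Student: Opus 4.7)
The plan is to reduce both parts to a uniform concentration statement for an importance-sampling estimator over the function class $\mathcal{F} := \{f_C : C\in\calC\}$ where $f_C(x) := u(x)\cdot d^z(x,C)$. For any nonnegative sensitivity function $\sigma$ with $\calG = \sum_x \sigma(x)$, drawing each sample $x$ with probability $\sigma(x)/\calG$ and assigning weight $w(x) = \calG/(|D|\sigma(x))$ makes $\sum_{x\in D} w(x) f_C(x)$ an unbiased estimator of $\cost_z(X,C)$. Thus it suffices to establish a uniform $(1\pm\eps)$ approximation over all $C\in\calC$ with probability $\geq 1-\delta$. The sample complexity is driven by (i) the sup-norm $\max_{x,C} f_C(x)/\sigma(x)$ of the normalized functions and (ii) the pseudo-dimension of $\mathcal{F}$, which for $k$-center sets parametrized by $dk$ real coordinates and $z$th-power distances is known to be $O(dk\log k)$.

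For Part 2, the hypothesis $\sigma(x)\geq \sup_C f_C(x)/\cost_z(X,C)$ forces $f_C(x)/\sigma(x)\le \cost_z(X,C)$ for every $C$, so once we normalize by the $\cost_z(X,C)$-scale the functions are bounded by $\calG$. A Bernstein inequality applied pointwise, combined with a union bound over an $\eps$-net of the parameter space whose log-covering number is $O(dk\log\calG)$, yields the claimed $O(\eps^{-2}\calG(dk\log\calG+\log(1/\delta)))$ samples. This is the standard sensitivity-sampling argument as recast in \cite{braverman2016new}.

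For Part 1, the weaker sensitivity bound only controls $f_{C^\star}(x)/\cost_z(X,C^\star)$, so the sample $D$ alone cannot preserve $\cost_z(X,C)$ for arbitrary $C$. Instead, for each $c\in C^\star$ with cluster $X_c$ we use a generalized triangle inequality (with a $z$-dependent constant) to write $d^z(x,C) = d^z(c,C) + E(x,c,C)$ where the error $|E(x,c,C)|$ is bounded by an absorbable multiple of $d^z(x,C^\star) + \eps\cdot d^z(x,C)$. Summing over $X_c$, the dominant ``translation'' term $\sum_{x\in X_c} u(x)\cdot d^z(c,C)$ is captured exactly by including $c$ in the coreset with the weight $w(c) = (1+10\eps)\sum_{x\in X_c}u(x) - \sum_{x\in D\cap X_c} w(x)$: the ``$-\sum_{x\in D\cap X_c}w(x)$'' piece cancels the mass contributed by sampled points in $X_c$, while the $(1+10\eps)$ slack absorbs a one-sided additive error. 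The residual term is exactly the random variable whose variance is controlled by $\sigma(x)$ via its relation to $d^z(x,C^\star)$, so the required $O(\eps^{-2z}(dk\log k+\log(1/\delta)))$ sample size follows from a Bernstein bound plus a pseudo-dimension union bound, with the $\eps^{-2z}$ reflecting the $z$-dependent blow-up in the triangle inequality.

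The main obstacle I expect is Part 1: the bookkeeping that turns a one-sided triangle inequality into a genuine two-sided $(1\pm\eps)$-coreset, coupled with the verification that the specific reweighting of $C^\star$ correctly absorbs both the deterministic triangle-inequality slack and the random deviation of the sample average within $X_c$. Part 2 is comparatively mechanical once the pseudo-dimension bound for $\mathcal{F}$ is in hand. In both cases I would follow the arguments of \cite[Theorem~15.5]{feldman2011unified} and \cite[Theorem~5.2]{braverman2016new}, respectively, which carry out precisely this plan.
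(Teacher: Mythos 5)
Your plan is correct and coincides with how the paper treats this statement: the paper does not reprove it but imports it directly from \cite[Theorem 15.5]{feldman2011unified} and \cite[Theorem 5.2]{braverman2016new}, whose sensitivity-sampling arguments (unbiased estimator, pseudo-dimension bound, Bernstein-type concentration, and for Part 1 the triangle-inequality decomposition absorbed by the reweighted centers of $C^\star$) are exactly what you sketch. The only proof content the paper adds is the small patch in Section~\ref{sec:fl11} (Lemma~\ref{lm:proof_fl11}) fixing a step of the Feldman--Langberg argument, which is consistent with, and subsumed by, your stated plan of carrying out their proof.
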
 

\noindent
The \FL framework applies one-staged importance sampling and is easy to implement.
We only need to compute an approximate solution $C^\star$, sensitivities $\sigma(x)$, and take samples with probability proportional to $\sigma(x)$.
For the first framework, note that we can further guarantee that the output $S$ is a subset of $X$ by adding an additional constraint that $C^\star\subset X$.\footnote{Suppose $C^\star$ is a $k$-center set that is an $O(1)$-approximate solution for \kzC over $X$. Then $\left\{\arg\min_{x\in X} c^\star_i : i\in [n] \right\}$ is a $k$-center set that is an $O(2^z)$-approximate solution for \kzC over $X$. Hence, we can add this additional constraint.}
For the second framework, the total sensitivity $\calG$ is shown to be $O(2^{2z} k)$ by~\cite{varadarajan2012sensitivity} and, hence, the resulting coreset size is quadratic in $k$ by plugging the value of $\calG = O(2^{2z} k)$ in the size bound $O\left(\eps^{-2} \calG (dk \log \calG +\log(1/\delta))\right)$ in Theorem~\ref{thm:fl11_bfl16}.

\paragraph{Terminal embedding.}
We introduce the definition of terminal embedding that is useful for our dimension reduction result.
Roughly speaking, a terminal embedding projects a point set $X\subseteq \R^d$ to a low-dimensional space such that all pairwise distances between $X$ and $\R^d$ are approximately preserved.

\begin{definition}[\bf{Terminal embedding}]
	\label{def:embedding}
	Let $\eps\in (0,1)$ and $X\subseteq \R^d$ be a collection of $n$ points. 
	A mapping $f: \R^d\rightarrow \R^m$ is called a terminal embedding of $X$ if for any $x\in X$ and $y\in \R^d$,
	\[
	d(x,y)\leq d(f(x),f(y))\leq (1+\eps)d(x,y).
	\]
\end{definition}

\noindent
Note that a terminal embedding must be a one-to-one mapping over $X$ by definition.
The following is a recent result on terminal embedding.

\begin{theorem}[\bf{Small terminal embeddings~\cite{narayanan2019optimal}}]
	\label{thm:embedding}
	Let $\eps\in (0,1)$ and $X\subseteq \R^d$ be a collection of $n$ points. 
	There exists a terminal embedding with a target dimension $m=O(\eps^{-2}\log n)$.
\end{theorem}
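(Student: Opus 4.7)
The plan is to follow the two-step construction of Narayanan and Nelson. First, I apply a random Johnson--Lindenstrauss map $\Pi : \R^d \to \R^{m-1}$ with $m - 1 = O(\eps^{-2}\log n)$ to reduce dimension while preserving pairwise distances within $X$; then I lift the resulting embedding by one additional coordinate in order to accommodate distances from each $x \in X$ to an arbitrary $y \in \R^d$. Setting the target dimension to $m$ rather than $m-1$ does not affect the asymptotic bound.

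For the first step, take $\Pi$ to be a scaled Gaussian matrix with $m - 1 = O(\eps^{-2}\log n)$ rows. The standard JL lemma applied to $X$ gives, with high probability, $(1-\eps)\|x-x'\| \leq \|\Pi x - \Pi x'\| \leq (1+\eps)\|x-x'\|$ for every pair $x, x' \in X$. By slightly enlarging the row count (still $O(\eps^{-2}\log n)$) and taking a union bound over a sufficiently fine $\eps$-net of the sphere $S^{d-1}$, I additionally ensure that $|\langle \Pi x, \Pi u\rangle - \langle x, u\rangle|$ is small for every $x \in X$ and every unit vector $u$ in the net. This stronger concentration, rather than bare pairwise JL, is what will let a single extra coordinate suffice in the next step.

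For the second step, set $f(x) = (\Pi x, 0) \in \R^m$ for $x \in X$, and for general $y \in \R^d$ define $f(y) = (\Pi y, t(y))$, where the extra coordinate $t(y) \geq 0$ is chosen so that $\|f(x) - f(y)\|^2 = \|\Pi(x-y)\|^2 + t(y)^2$ falls into $[\|x-y\|^2,\, (1+\eps)^2\|x-y\|^2]$ for every $x \in X$ simultaneously. A natural candidate is $t(y)^2 = \max_{x \in X}\bigl(\|x-y\|^2 - \|\Pi(x-y)\|^2\bigr)$, or a slightly slackened version designed to absorb all contractions uniformly, with $t(y) = 0$ whenever $y \in X$.

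The main obstacle is precisely showing that this single scalar $t(y)$ suffices for every $x \in X$: one needs the defect $\|x-y\|^2 - \|\Pi(x-y)\|^2$ to be nearly independent of $x$ after subtracting a quantity that depends only on $y$. This is where the refined Gaussian analysis enters, replacing naive pairwise JL with uniform concentration of the quadratic form $z \mapsto z^\top(\Pi^\top\Pi - I)z$ over the relevant direction set associated with $\{x - y : x \in X\}$. That tighter concentration is what brings the target dimension from the earlier $O(\eps^{-4}\log n)$ bound of Mahabadi--Makarychev--Makarychev--Razenshteyn down to $O(\eps^{-2}\log n)$, and it is the technically delicate ingredient that I expect to be the crux of a full write-up.
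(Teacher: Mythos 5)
The paper does not prove this statement at all: Theorem~\ref{thm:embedding} is invoked as a black-box citation of Narayanan--Nelson~\cite{narayanan2019optimal}, so what matters is whether your sketch would actually yield that result, and as written it would not. The decisive gap is in your first step: you ask a Gaussian map $\Pi$ with $O(\eps^{-2}\log n)$ rows to approximately preserve $\langle x,u\rangle$ for every $x\in X$ and every unit vector $u$ in a net of the whole sphere $S^{d-1}$. Such a net has size $\exp(\Theta(d))$, so the union bound forces $m=\Omega(d)$, which destroys the dimension reduction; the entire difficulty of terminal embeddings is precisely that one cannot afford uniform control over all directions of $\R^d$, only over the $\poly(n)$ directions determined by the terminal set.

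The second step also does not work in the form you propose. With $f(y)=(\Pi y,\,t(y))$ and $t(y)\geq 0$, a single added coordinate can only increase distances, so it cannot repair pairs $x\in X$ for which $\|\Pi(x-y)\|>(1+\eps)\|x-y\|$, and since $y$ is arbitrary such expansions do occur; moreover the defect $\|x-y\|^2-\|\Pi(x-y)\|^2$ genuinely varies with $x$ (through the cross term $\langle x-x^\star,\,x^\star-y\rangle-\langle \Pi(x-x^\star),\,\Pi(x^\star-y)\rangle$), and making it nearly independent of $x$ is exactly the uniform-over-all-directions property you cannot get at dimension $O(\eps^{-2}\log n)$. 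The actual construction of~\cite{narayanan2019optimal} does not place $\Pi y$ in the first block: for each $y$ it takes the nearest terminal $x^\star\in X$, proves via a minimax/convex-duality argument (using concentration only over the directions $\{x-x^\star:x\in X\}$) that there exists $z\in\R^{m-1}$ with $\|z\|\leq\|y-x^\star\|$ whose inner products with $\Pi(x-x^\star)$ track $\langle y-x^\star,\,x-x^\star\rangle$ up to $\eps\|y-x^\star\|\,\|x-x^\star\|$, and sets $f(y)=\bigl(\Pi x^\star+z,\ \sqrt{\|y-x^\star\|^2-\|z\|^2}\bigr)$. That existence argument is the heart of the theorem, and your write-up explicitly defers it, so the crux remains unproven; if you want to use the theorem in this paper, cite it as the authors do, and if you want to prove it, you need the nearest-terminal recentering and the duality step rather than a net over $S^{d-1}$.
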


\noindent
It follows from this theorem that, if $|X|=\poly(k/\eps)$, then there is a terminal embedding of target dimension $O\left(\eps^{-2} \log(k/\eps)\right)$,  which is independent of $d$.
This  is useful for analyzing the correctness of the second stage of our framework.
%

\section{Technical overview}
\label{sec:overview}

%
In this section, we will show how to prove our algorithmic result (Theorem~\ref{thm:coreset_informal}) and how to achieve a nearly matching size lower bound (Theorem~\ref{thm:lower}).
We first introduce a new dimension reduction technique that is useful for our algorithmic result.
%

\paragraph{Dimension reduction.}
For simplicity, we take the \kMedian problem ($z=1$) as an example and let $X\subseteq \R^d$ be the input point set.
By Theorem~\ref{thm:fl11_bfl16}, there exists an $\eps$-coreset for \kMedian of size $O(\eps^{-2} dk \log k)$.
However, the coreset size contains a factor $d$ and our goal is to construct a coreset that does not depend on $d$.
%

To this end, a commonly used approach is called \emph{dimension reduction}~\cite{feldman2013turning,cohen2015dimensionality,braverman2016new,sohler2018strong}.
Roughly speaking, we would like to show that it suffices to only consider all $k$-center sets in some low-dimensional space instead of $\R^d$, and that enables us to remove the dependence on $d$ in the coreset size.
One can try the dimension reduction approach proposed by~\cite{makarychev2019performance}, however, this can be shown not to work.
We explain the details in Section~\ref{sec:failed}.
%


\paragraph{Representativeness property can allow to remove the dependence on $d$.} Our second attempt is motivated by the work of~\cite{sohler2018strong} and projects $X$ to a low-dimensional subspace such that all \kMedian objectives can be estimated by the projections of $X$.
For a subspace $\Gamma\subsetneq \R^d$, recall that $\Gamma'$ is obtained from $\Gamma$ by appending an arbitrary dimension in $\R^d$ that is orthogonal to $\Gamma$.
Also recall that $\calC_\Gamma$ denotes the collection of $k$-center sets $C\subset \Gamma'$, i.e.,
\[
\calC_\Gamma:= \left\{C=(c_1,\ldots,c_k): c_i\in  \Gamma' \; \forall i, C\in \calC\right\}
\]
Now suppose we have a subspace $\Gamma\subsetneq \R^d$ of dimension $\poly(k/\eps)$ and a weighted point set $D$, together with a weight function $u: D\rightarrow \R_{\geq 0}$, that approximately preserves all \kMedian objectives with respect to $k$-center sets in $\calC_\Gamma$, i.e., for any $k$-center set $C\in \calC_\Gamma$,
\begin{align}
\label{eq:SW2}
\sum_{x\in D} u(x)\cdot d(x,C) \in (1\pm \eps)\cdot \cost_1(X,C).
\end{align}
By Theorem~\ref{thm:fl11_bfl16}, the size of $D$ can be upper bounded by $\poly(k/\eps)$.
If we can show that Inequality~\eqref{eq:SW2} holds for all $k$-center sets in $\calC$, then we obtain an $\eps$-coreset $D$ of size independent of $d$.
Moreover, by the result on terminal embeddings (Theorem~\ref{thm:reduction}), we can further reduce the dimension to $O\left(\eps^{-2} \log(k/\eps)\right)$.
This observation leads to an $\eps$-coreset $S$ for \kMedian of size $\tilde{O}(\eps^{-4} k)$ by applying the first framework in Theorem~\ref{thm:fl11_bfl16} to $D$.
In the following, we show how to construct sufficient conditions for $\Gamma$ and $D$ such that Inequality~\eqref{eq:SW2} holds for all $k$-center sets in $\calC$, which leads to Theorem~\ref{thm:upper}.

Our key idea is to construct a $\Gamma$ such that both $X$ and $D$ satisfy the representativeness property with respect to $\Gamma$.
Recall that an equivalence class $\Delta^\Gamma_{C'}$ for a center set $C'\in \calC_\Gamma$ is the collection of center sets $C\in \calC$ such that $C\sim_\Gamma C'$. 
Now suppose we have that Inequality~\eqref{eq:SW2} holds for all $k$-center sets in $\calC_\Gamma$.
Consequently, for any $C\in \Delta^\Gamma_{C'}$,
\[
\sum_{x\in D} u(x)\cdot d(x,C) \approx \sum_{x\in D} u(x)\cdot d(x,C') \approx \cost_1(X,C') \approx \cost_1(X,C).
\]
Then $D$ is an $\eps$-coreset for \kMedian in $\R^d$.
Hence, we focus on proving the representativeness property.
Given a subset $A\subseteq \R^d$, let $\mathrm{Span}(A)$ denote the subspace spanned by $A$.
Recall that $\OPT_1$ denotes the optimal \kMedian objective over $X$.
Sohler and Woodruff~\cite[Algorithm 1]{sohler2018strong} show how to construct a subspace $\Gamma$ of dimension $O(\eps^{-2} k)$ such that for any $k$-center set $C\in \calC$, letting $\pi_C$ denote the projection from $X$ to $\mathrm{Span}(\Gamma\cup C)$ (the subspace spanned by $\Gamma \cup C$),
\begin{align}
\label{eq:SW1}
\sum_{x\in X} d(x,\pi_\Gamma(x))-d(x,\pi_C(x))= O(\eps^2) \cdot \OPT_1.
\end{align}
By~\cite{sohler2018strong} (summarized in Lemma~\ref{lm:projection}), this implies that for any $k$-center set $C\in \calC$,
\begin{align}
\label{eq:offset}
\sum_{x\in X} \left(d^2(\pi_\Gamma(x),C)+d^2(x,\pi_\Gamma(x))\right)^{1/2} \in (1\pm \eps)\cdot \cost_1(x,C),
\end{align}
i.e., we can use $\Gamma$ to approximately preserve all \kMedian objectives.
Note that for any two center sets $C_1,C_2$ in an equivalence class, we have
\[
\sum_{x\in X} \left(d^2(\pi_\Gamma(x),C_1)+d^2(x,\pi_\Gamma(x))\right)^{1/2} = \sum_{x\in X} \left(d^2(\pi_\Gamma(x),C_2)+d^2(x,\pi_\Gamma(x))\right)^{1/2}.
\]
Combining the above equation with Inequality~\eqref{eq:offset}, we conclude that $X$ satisfies the representativeness property with respect to $\Gamma$ (Observation~\ref{ob:equivalent}).
The technical difficulty is to show that $D$ also satisfies the representativeness property with respect to $\Gamma$.
%

\paragraph{A failed attempt.}
By a similar construction as~\cite[Algorithm 1]{sohler2018strong}, we can construct a subspace $\Gamma$ for any given $D$ that satisfies for any $k$-center set $C\in \calC$,
\begin{align}
\label{eq:SW3}
\sum_{x\in D} u(x)\cdot \left(d^2(\pi_\Gamma(x),C)+d^2(x,\pi_\Gamma(x))\right)^{1/2} \in (1\pm \eps)\cdot \cost_1(D,C),
\end{align}
i.e., we can also use $\Gamma$ to approximately preserve \kMedian objectives over the weighted point set $D$.
Similarly, we conclude that $D$ also satisfies the representativeness property with respect to $\Gamma$.
Then assuming Inequality~\eqref{eq:SW2} holds for all $k$-center sets $C\subset \Gamma'$, $D$ is an $O(\eps)$-coreset for \kMedian and we are done.
However, this assumption only has a guarantee when $\Gamma$ is independent of the choice of $D$.
Thus, our task is to construct a subspace $\Gamma$ satisfying Inequalities~\eqref{eq:SW1} and~\eqref{eq:SW3}, and meanwhile independent on the choice of $D$.
%

\paragraph{Weak coreset for subspace approximation implies the representativeness property for $D$.}
By~\cite[Algorithm 1]{sohler2018strong}, we construct a subspace $\Gamma$ of dimension $\poly(k/\eps)$ that satisfies Inequality~\eqref{eq:SW1} and only depends on $X$, which implies that $\Gamma$ is independent on the choice of $D$.
The remaining task is to find a sufficient condition for $D$ such that Inequality~\eqref{eq:SW3} holds.
Again by~\cite{sohler2018strong} (summarized in Lemma~\ref{lm:projection}), Inequality~\eqref{eq:SW3} holds if for any $k$-center set $C\in \calC$,
\begin{align}
\label{eq:SW4}
\sum_{x\in D} u(x)\cdot \left(d(x,\pi_\Gamma(x))-d(x,\pi_C(x))\right)= O(\eps^2) \cdot \OPT_1.
\end{align}
Since Inequality~\eqref{eq:SW1} holds, for the above inequality it suffices to guarantee that
\begin{align}
\label{eq:approx}
\sum_{x\in D} u(x)\cdot \left(d(x,\pi_\Gamma(x))-d(x,\pi_C(x))\right) \approx \sum_{x\in X} d(x,\pi_\Gamma(x))-d(x,\pi_C(x)).
\end{align}
Our first assumption is that the total projection distance to $\Gamma$ is approximately preserved by $D$:
\[
\sum_{x\in D} u(x)\cdot d(x,\pi_\Gamma(x)) \approx \sum_{x\in X} d(x,\pi_\Gamma(x)).
\] 
Then for Inequality~\eqref{eq:approx}, it suffices to ensure that 
\begin{align}
\label{eq:SW5}
\min_{C\in \calC}\sum_{x\in D} u(x)\cdot d(x,\pi_C(x))\approx \min_{C\in \calC}\sum_{x\in X} d(x,\pi_C(x)).
\end{align}
Interestingly, this relates to the subspace approximation problem by regarding $X$ as a point set in space $\Gamma^\perp$.
From this viewpoint, Inequality~\eqref{eq:SW5} can be reduced to ensuring that $D$ is a weak-coreset for $(k,1)$-subspace approximation over $X$ (Definition~\ref{def:weak_coreset}).
Overall, roughly, a weak-coreset $D$ for $(k,1)$-subspace approximation satisfies the representativeness property with respect to $\Gamma$.

We still need to verify that the size of a weak-coreset $D$ can be independent of  dimension $d$. 
By applying~\cite{shyamalkumar2007efficient}, we know that $D$ is a weak-coreset if $D$ approximately preserves all $(k,1)$-subspace approximation objectives in the collection $\calP'$ of all $k$-flats spanned by at most $\tilde{O}(\eps^{-1} k^2)$ points from $X$ (Lemma~\ref{lm:projective_clustering}), i.e., $D$ is an $\eps$-coreset for $(k,1)$-subspace approximation over $X$ with respect to $\calP'$.
By~\cite[Theorem 4]{varadarajan2012sensitivity}, it follows that the coreset size $|D|$ only depends on $k,\eps$ and the ``function dimension'' of $\calP'$\footnote{Since this paper only uses function dimension as a black box, we do not present the definition. We refer interested readers to~\cite[Definition 6.4]{feldman2011unified} or~\cite[Definition 4.5]{braverman2016new} for concrete definitions.}
Since the function dimension of $\calP'$ is $\poly(k/\eps)$ by~\cite[Theorem 4]{varadarajan2012sensitivity}, the size of a weak-coreset $|D|$ can be also upper bounded by $\poly(k/\eps)$, which is independent of $d$ (Theorem~\ref{thm:subspace}).

Overall, we  develop a two-stage importance sampling framework that constructs an $\eps$-coreset for \kMedian of size $\tilde{O}(\eps^{-4} k)$ (Algorithm~\ref{alg:coreset}).
In the first stage, we construct a weighted point set $D$ of size $\poly(k/\eps)$ that is a coreset for \kMedian in $\Gamma'$ and an $\eps$-weak-coreset for $(k,1)$-subspace approximation in $\Gamma^\perp$.
By the discussion above, $D$ is an $\eps$-coreset for \kMedian in $\R^d$ of size $\poly(k/\eps)$ (Theorem~\ref{thm:upper}).
In the second stage, we further construct an $\eps$-coreset $S$ over $D$ of size $\tilde{O}(\eps^{-4} k)$ using the result on terminal embeddings (Theorem~\ref{thm:reduction}), where $S$ is also an $O(\eps)$-coreset for \kMedian over $X$ (Theorem~\ref{thm:coreset_informal}).
For general $z> 1$, the proof is similar to $z=1$.
The  difference is that in the first stage of Algorithm~\ref{alg:coreset}, we need the weighted point set $D$ to be a coreset for \kzC in $\Gamma'$ and an $\eps$-weak-coreset for $(k,z)$-subspace approximation in $\Gamma^\perp$.
To achieve this, the number of samples $|D|$ should be $z^{O(z)} \cdot \poly(k/\eps)$, which is still independent of $d$.
Again, using the result on terminal embeddings, we can construct a coreset for \kzC of size $\tilde{O}(\eps^{-2z-2} k)$ (Theorem~\ref{thm:coreset_informal}).
%

\paragraph{Size lower bounds.}
We also provide a nearly matching size lower bound of coresets for \kzC (Theorem~\ref{thm:lower}), by constructing a point set $X \subset \mathbb{R}^d$ in $\R^d$ such that any $0.01$-coreset for \kzC over $X$ has size $\Omega\left(k\cdot \min\left\{2^{z/20},d\right\}\right)$.
The main idea is to ensure that for any point $x\in X$, there exists a $k$-center set $C_x$ satisfying that the clustering objective 
$
d^z(x,C_x)\approx \cost_z(X,C_x).
$
Intuitively, we need to include a close point for each $x\in X$ in any coreset such that $\cost_z(X,C_x)$ can be approximately preserved.

We discuss a simple case that $k=1$ and $d\approx 2^{z/20}$, and show how to construct such a bad instance $X$.
The general case that $k\geq 1$ can be proved by making $k$ copies of $X$ in which all copies are far away from each other.
We let $X=\left\{e_1,-e_1,\ldots, e_d,-e_d\right\}$ and observe that for each $e_i\in X$, 
\[
d^z(e_i,-e_i)\approx \cost_z(X,-e_i).
\]
Suppose $S$ is a $0.01$-coreset for \kzC together with weights $w(x)$.
If we restrict that $S\subseteq X$, we can conclude that $S=X$ which implies that $|S|= 2^{d+1}=\Omega(2^{z/20})$.
Intuitively, if there exists $e_i\in X\setminus S$, then it is unlikely that the \kzC objective with respect to center $-e_i$ can be approximately preserved by $S$.
The obstacle is that points in a coreset can come from outside of $X$.

We divide $\R^d$ into $|X|$ Voronoi cells $(P_x)_{x}$ induced by $X$, where each $P_x$ is the collection of points in $\R^d$ whose closest point in $X$ is $x$.
If $|S|< |X|$, there must exist a Voronoi cell that does not contain points in $S$, say $P_{e_1}$ without loss of generality.
The key idea is to show that $S$ can not approximately preserve the \kzC objectives with respect to center $0$ and center $-e_1$ simultaneously, which implies that the size of $S$ should be at least $|X|=2d$.
Let $H$ denote the unit $l_2$-ball centered at the origin. 
On the one hand, we can show that the contribution $\sum_{x\in S\cap H} w(x)\cdot d^z(x,-e_1)$ is tiny compared to $\sum_{x\in X}d^z(x,-e_1)\approx 2^z$, since each point $x\in H\setminus P_{e_1}$ satisfies that 
\[
d^z(x,-e_1)\leq 2^{0.9z} \qquad (\text{Inequality~\eqref{eq:lower3}})
\] 
and the total weights $\sum_{x\in S\cap H} w(x)$ can be shown at most $2^{0.05 z}$ (Claim~\ref{claim:lower1}).
On the other hand, for point $x\in S\setminus (H\cup P_{e_1})$, we can verify that
\[
\frac{d^z(x,-e_1)}{d^z(x,0)}\leq 2^{0.9z}. \qquad \text{(Inequality~\eqref{eq:lower4})}
\]
This gap is much smaller compared to $\frac{\sum_{x\in X} d^z(x,-e_1)}{\sum_{x\in X} d^z(x,0)} \approx 2^{0.95 z}$, which implies that $S$ can not approximately preserve the \kzC objectives with respect to center 0 and center $-e_1$ simultaneously.
This indicates that any $0.01$-coreset for \kMedian over $X$ has size at least $|X|=2d$, which proves Theorem~\ref{thm:lower}.

\begin{algorithm}[ht]
	\caption{Coreset construction for \kzC}
	\label{alg:coreset}
	\KwIn{A dataset $X$ of $n$ points in $\R^d$, $\eps, \delta \in (0,1/2)$, constant $z\geq 1$ and an integer $k \geq 1$.}
	\KwOut{A point set $S\subseteq \R^d$ together with a weight function $w:S\rightarrow \R_{\geq 0}$.} 
	\nonl \hrulefill \\
	\tcc{The first importance sampling stage}
	$N_1 \leftarrow O\left((168z)^{10z} \eps^{-5z-15}  k^5 \log \frac{k}{\delta}  \right)$ \;
	Compute a $k$-center set $C^\star\subseteq \R^d$ as an $\alpha$-approximation of the \kzC problem over $X$ ($\alpha=O(1)$)\;
	For each $x\in X$, compute $c^\star(x)$ to be the closest point to $x$ in $C^\star$ (ties are broken arbitrarily). For each $c\in C^\star$, denote $X_c$ to be the set of points $x\in X$ with $c^\star(x)=c$\;
	For each $x\in X$, let $\sigma_1(x)\leftarrow 2^{2z+2} \alpha^2 \left(\frac{d^z(x,c^\star(x))}{ \cost_z(X,C^\star)}+\frac{1}{|X_{c^\star(x)}|}\right)$\;
	Pick a non-uniform random sample $D_1$ of $N_1$ points from $X$, where each $x\in X$ is selected with probability $\frac{\sigma_1(x)}{\sum_{y\in X}\sigma_1(y)}$.
	For each $x\in D_1$, let $u(x)\leftarrow \frac{\sum_{y\in X}\sigma_1(y)}{|D_1|\cdot \sigma_1(x)}$\;
	%
	%
	\nonl \hrulefill \\
	\tcc{The second importance sampling stage}
	$N_2\leftarrow O\left(\eps^{-2z-2} k\log k \log \frac{k}{\eps \delta}\right)$\;
	For each $c\in C^\star$, compute $D_c$ to be the set of points in $D_1$ whose closest point in $C^\star$ is $c$ (ties are broken arbitrarily)\;
	For each $x\in D_1$, let $\sigma_2(x)\leftarrow \frac{u(x)\cdot d^z(x,C^\star)}{\sum_{y\in D_1} u(y)\cdot d^z(y,C^\star)}$\;
	Pick a non-uniform random sample $D_2$ of $N_2$ points from $X$, where each $x\in X$ is selected with probability $\frac{\sigma_2(x)}{\sum_{y\in D_1}\sigma_2(y)}$.
	For each $x\in D_2$, let $w(x)\leftarrow \frac{\sum_{y\in D_1}\sigma_2(y)}{|D_2|\cdot \sigma_2(x)}$.
	For each $c\in C^\star$, let $w(c)\leftarrow (1+10\eps)\sum_{x\in D_c}u(x)-\sum_{x\in D_2\cap D_c} w(x)$\;
	$S\leftarrow D_2\cup C^\star$\;
	Output $(S,w)$\;
\end{algorithm}

\section{Our algorithm and main theorem}
\label{sec:upper}

Let $X\subseteq \R^d$ denote a collection of $n$ points in $\R^d$.
Let $\OPT_z$ denote the optimal \kzC objective over $X$.
The main theorem of this paper is the following.

\begin{theorem}[\bf{Coreset for \kzC; near-linear size in $k$}]
	\label{thm:coreset}
	There exists a randomized algorithm that, for a given dataset $X$ of $n$ points in $\R^d$, $\eps,\delta \in (0,0.5)$, constant $z\geq 1$ and integer $k \geq 1$, with probability at least $1-\delta$, constructs an $\eps$-coreset for \kzC of size 
	\[
	O\left(\min\left\{\eps^{-2z-2}, 2^{2z} \eps^{-4} k \right\} k\log k \log\frac{k}{\eps\delta}\right)
	\] 
	and runs in time 
	\[
	O\left(ndk+nd \log(n/\delta) + k^2 \log^2 n + \log^2(1/\delta) \log^2 n\right).
	\]
\end{theorem}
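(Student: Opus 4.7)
The plan is to analyze Algorithm~\ref{alg:coreset} in two stages, showing first that the weighted set $D_1$ from the first stage is already an $\eps$-coreset for \kzC of size $\poly(k/\eps)$ (matching Theorem~\ref{thm:upper}), and then that the second stage reduces the size to $\tilde{O}(\eps^{-2z-2}k)$ while preserving the coreset guarantee. Throughout, I follow the outline of Section~\ref{sec:overview} and will need to fill in the technical reductions rigorously.

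For the first stage, I would begin by invoking~\cite[Algorithm~1]{sohler2018strong} to construct a subspace $\Gamma \subsetneq \R^d$ of dimension $\poly(k/\eps)$, depending only on $X$, that satisfies the $(k,z)$-analog of inequality~\eqref{eq:SW1}. Combined with Observation~\ref{ob:equivalent}, this gives that $X$ satisfies the $\eps$-representativeness property with respect to $\Gamma$, so that any coreset guarantee against center sets in $\calC_\Gamma$ automatically extends to all of $\calC$ for $X$. The sensitivity function $\sigma_1$ used in the algorithm is a sum of (i)~a standard \kzC sensitivity term, which upper-bounds the true sensitivity by a generalized triangle inequality contributing the factor $2^{2z+2}\alpha^2$, and (ii)~a uniform-in-cluster term $1/|X_{c^\star(x)}|$ whose total mass is $O(k)$. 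Term~(i) allows importance sampling to produce an $\eps$-coreset for \kzC restricted to $k$-center sets in $\calC_\Gamma$ via the first framework of Theorem~\ref{thm:fl11_bfl16}, where the effective dimension is that of $\Gamma'$, namely $\poly(k/\eps)$. Term~(ii) is what allows $D_1$ to simultaneously serve as an $\eps$-weak-coreset for $(k,z)$-subspace approximation in $\Gamma^{\perp}$ in the sense of Definition~\ref{def:weak_coreset}; this property, by the chain of implications~\eqref{eq:SW3}--\eqref{eq:SW5}, transfers the $\eps$-representativeness property with respect to $\Gamma$ from $X$ to $D_1$. Combining the two representativeness statements with the restricted coreset guarantee yields $\sum_{x \in D_1} u(x)\, d^z(x,C) \in (1 \pm O(\eps))\, \cost_z(X,C)$ for every $C \in \calC$, which is exactly Theorem~\ref{thm:upper}, with $|D_1| = (168z)^{O(z)}\poly(k/\eps)$.

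For the second stage, I would apply Theorem~\ref{thm:embedding} to $D_1 \cup C^\star$, a set of size $\poly(k/\eps)$, to get a terminal embedding $f : \R^d \to \R^m$ with $m = O(\eps^{-2}\log(k/\eps))$. Since $f$ approximately preserves distances from points of $D_1$ to arbitrary centers in $\R^d$, any $\eps$-coreset of $f(D_1)$ for \kzC in $\R^m$ pulls back through $f^{-1}|_{f(D_1)}$ to an $\eps$-coreset of $D_1$, and hence of $X$ up to constants. In the embedded space I would apply the first framework of Theorem~\ref{thm:fl11_bfl16} using the sensitivities $\sigma_2$ from the algorithm, obtaining size $O(\eps^{-2z}(mk\log k + \log(1/\delta))) = \tilde{O}(\eps^{-2z-2}k)$; the correction weights $w(c) = (1+10\eps)\sum_{x \in D_c} u(x) - \sum_{x \in D_2 \cap D_c} w(x)$ added at the centers of $C^\star$ are exactly those required by that framework. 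For the alternative bound $\tilde{O}(2^{2z}\eps^{-4}k^2)$, I would instead invoke the second framework of Theorem~\ref{thm:fl11_bfl16} using the total sensitivity bound $\calG = O(2^{2z}k)$ of~\cite{varadarajan2012sensitivity}; the final coreset size is the minimum of the two. Composing the two stages with $\eps$ rescaled by constants and with failure probabilities summed by a union bound yields the $\eps$-coreset guarantee with probability at least $1-\delta$; the running-time bound follows by computing $C^\star$ via a fast bi-criteria $O(1)$-approximation in $O(ndk)$ time, evaluating sensitivities in $O(nd)$, and sampling in $O(N_1 + N_2)$, with an additional $O(nd\log(n/\delta))$ term for sampling data structures and $O((k^2 + \log^2(1/\delta))\log^2 n)$ for the second-stage processing on the compressed set.

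The main obstacle I expect is establishing that the single importance sample $D_1$ of size $\poly(k/\eps)$ is genuinely a weak-coreset for $(k,z)$-subspace approximation, independent of $d$. This requires (i)~reducing, via~\cite{shyamalkumar2007efficient}, the infinite family of $k$-flats in $\calP$ to a finite family $\calP'$ of flats spanned by $\tilde{O}(k^2/\eps)$ input points; (ii)~verifying that $\sigma_1$ upper-bounds the subspace-approximation sensitivity on $\calP'$, which uses $d^z(x,P) \le d^z(x,c^\star(x))$ together with the within-cluster uniform term; and (iii)~bounding the function dimension of $\calP'$ by $\poly(k/\eps)$ so that~\cite[Theorem~4]{varadarajan2012sensitivity} yields the $\poly(k/\eps)$ sample-size guarantee. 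A secondary subtlety is absorbing the $(168z)^{O(z)}$ factors when translating the weak-coreset and representativeness statements into a coreset bound in $\R^d$, which relies on the $z$-th power triangle inequality being applied carefully inside~\eqref{eq:SW3}--\eqref{eq:SW5} so that the resulting slack remains $O(\eps)$.
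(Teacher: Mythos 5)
Your proposal follows essentially the same two-stage route as the paper: a subspace $\Gamma$ from \cite[Algorithm 1]{sohler2018strong} plus the representativeness/weak-coreset machinery to show the first-stage sample $D_1$ is a dimension-free $\eps$-coreset (Theorem~\ref{thm:upper}), then a terminal embedding plus the \FL framework with the $C^\star$-offset weights for the second stage (Theorem~\ref{thm:reduction}), with the alternative $\tilde{O}(2^{2z}\eps^{-4}k^2)$ bound obtained by switching to the second framework exactly as in Remark~\ref{remark:another_alg}. Your list of obstacles (reduction to flats spanned by few input points via \cite{shyamalkumar2007efficient}, sensitivity domination, function-dimension bound via \cite{varadarajan2012sensitivity}) is also the paper's.

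Two concrete points in your first-stage argument are stated incorrectly and would need to be repaired (the repair uses ingredients you already have, so this is not a fatal gap). First, you claim that term (i) of $\sigma_1$ alone yields, via the \emph{first} framework of Theorem~\ref{thm:fl11_bfl16}, an $\eps$-coreset against $\calC_\Gamma$. The first framework outputs $D\cup C^\star$ with corrective weights on $C^\star$, but the first stage of Algorithm~\ref{alg:coreset} adds no such points to $D_1$; to conclude that $D_1$ alone is a coreset in $\Gamma'$ one must use the \emph{second} (true-sensitivity) framework, and for that the uniform in-cluster term $1/|X_{c^\star(x)}|$ is indispensable: the true clustering sensitivity is only bounded by the \emph{sum} of both terms (the paper's Claim~\ref{claim:sen_bound} / Lemma~\ref{lm:vx12}), so your decomposition ``term (i) handles clustering, term (ii) handles subspace approximation'' does not hold — both terms enter both uses of $\sigma_1$. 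Second, you take $\Gamma$ straight from \cite[Algorithm 1]{sohler2018strong}; the paper instead modifies that construction so that $C^\star\subset\Gamma$, and this containment is what makes the subspace-approximation sensitivities dominated by a rescaling of $\sigma_1$ (via $d^z(x,\pi_\Gamma(x))\le d^z(x,c^\star(x))$ and the bound $\sum_{x\in X}d^z(x,\pi_C(x))\le \cost_z(X,C^\star)$ used in the proof of Lemma~\ref{lm:preserve}), as well as what controls the Hoeffding step that preserves $\sum_{x}d^z(x,\pi_\Gamma(x))$ up to $\eps'\cdot\OPT_z$. Without $C^\star\subset\Gamma$ your step (ii) in the final paragraph, $d^z(x,P)\le d^z(x,c^\star(x))$, has no justification.
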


\noindent
We propose a unified two-staged importance sampling framework for \kzC; see Algorithm~\ref{alg:coreset}.
Note that Algorithm~\ref{alg:coreset} provides an $\eps$-coreset of size $\tilde{O}(\eps^{-2z-2} k)$.
By applying another importance sampling approach in the second stage (see discussion in Remark~\ref{remark:another_alg}), we can achieve an $\eps$-coreset of size $\tilde{O}(2^{2z}\eps^{-4} k^2)$.
This gives the size bound stated in Theorem~\ref{thm:coreset}.

\subsection{Proof of the main Theorem~\ref{thm:coreset}}
\label{sec:analysis}

The proof of Theorem \ref{thm:coreset} relies on  the following two theorems.
The first theorem shows that the first stage of Algorithm~\ref{alg:coreset} constructs a coreset for \kzC of size $N_1=\poly(k,1/\eps)$ with high probability.
The second one is a size reduction theorem that constructs a coreset of size $N_2=\tilde{O}(\eps^{-2z-2} k)$ based on the output of the first stage.

\begin{theorem}[\bf{First stage of Algorithm~\ref{alg:coreset}}]
	\label{thm:upper}
	For every dataset $X$ of $n$ points in $\R^d$, $\eps,\delta \in (0,0.5)$, constant $z\geq 1$ and integer $k \geq 1$, with probability at least $1-\delta/2$, the first stage of Algorithm~\ref{alg:coreset} constructs an $\eps$-coreset $D_1$ for \kzC, and runs in time \[
	O\left(ndk+nd \log(n/\delta) + k^2 \log^2 n + \log^2(1/\delta) \log^2 n\right).
	\]
\end{theorem}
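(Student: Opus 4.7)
The plan is to follow the reduction outlined in Section~\ref{sec:overview}. Fix the $\poly(k/\eps)$-dimensional subspace $\Gamma\subseteq \R^d$ produced by Sohler--Woodruff's construction (see Lemma~\ref{lm:projection}), so that $X$ already satisfies the representativeness property with respect to $\Gamma$ (Observation~\ref{ob:equivalent}). The argument will establish, with probability at least $1-\delta/2$, two properties of the weighted sample $(D_1,u)$ output by the first stage: \textbf{(a)} for every $k$-center set $C\in \calC_\Gamma$, $\sum_{x\in D_1} u(x)\cdot d^z(x,C)\in (1\pm \eps)\cdot \cost_z(X,C)$; and \textbf{(b)} $D_1$ is an $\eps$-weak-coreset for $(k,z)$-subspace approximation over $X$ (Definition~\ref{def:weak_coreset}). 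Granting (a) and (b), Observation~\ref{ob:equivalent_subset} implies that $D_1$ also satisfies the representativeness property with respect to $\Gamma$; combining this with (a) and with the $X$-representativeness, and noting that every $C\in\calC$ is $\sim_\Gamma$-equivalent to some $C'\in \calC_\Gamma$, yields an $O(\eps)$-coreset guarantee for every $C\in\calC$, which, after a constant rescaling of $\eps$, gives the theorem.

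For (a), I plan to invoke the first framework of Theorem~\ref{thm:fl11_bfl16} applied to the \emph{restricted} family $\calC_\Gamma$. The sensitivity $\sigma_1(x)$ defined in Algorithm~\ref{alg:coreset} dominates the Feldman--Langberg sensitivity with respect to $C^\star$: the factor $2^{2z+2}\alpha^2$ absorbs the $O(1)$ loss from using a bicriteria approximation, while the added intra-cluster term $1/|X_{c^\star(x)}|$ is exactly what is needed to certify the coreset guarantee for centers close to the representatives of $C^\star$. The effective ``dimension'' in the sample-size bound of Theorem~\ref{thm:fl11_bfl16} is then $\Dim(\Gamma)+1=\poly(k/\eps)$ rather than $d$, and a direct verification of exponents shows that $N_1$ dominates the required bound, yielding (a) with probability at least $1-\delta/4$. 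For (b), I will apply Theorem~\ref{thm:subspace}, which produces an $\eps$-weak-coreset for $(k,z)$-subspace approximation via importance sampling from essentially the same clustering sensitivities; because $N_1$ is at least $z^{O(z)}\poly(k/\eps)\log(1/\delta)$, the hypothesis of that theorem is satisfied and (b) holds with probability at least $1-\delta/4$. A union bound delivers (a) and (b) simultaneously.

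The main technical obstacle is that a \emph{single} sampling distribution must support two qualitatively different approximation tasks --- clustering restricted to the low-dimensional $\Gamma'$ and subspace approximation in $\Gamma^\perp$; this is resolved by the observation that both tasks are, up to $z$-dependent constant factors, controlled by the clustering sensitivity, so no second distribution is needed. The runtime analysis is then standard: the bicriteria $O(1)$-approximation $C^\star$ can be computed by $k$-means$++$-style seeding in $O(ndk)$ time and amplified to succeed with probability $1-\delta$ in $O(nd\log(n/\delta))$; computing every $\sigma_1(x)$ takes $O(1)$ per point once $C^\star$ is known; and drawing $N_1$ samples is linear in $N_1$. The remaining $k^2\log^2 n+\log^2(1/\delta)\log^2 n$ terms come from the approximation subroutine, giving the claimed running time.
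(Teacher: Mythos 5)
Your skeleton is the same as the paper's (a coreset guarantee for centers in $\Gamma'$ plus representativeness of both $X$ and $D_1$, then chaining through an equivalent $C'\in\calC_\Gamma$), and your step (a) is exactly the paper's Lemma~\ref{lm:vx12}. The gap is in how you certify that $D_1$ satisfies the representativeness property. Observation~\ref{ob:equivalent_subset} has \emph{two} hypotheses, and you only address the first. The second --- that $D_1$ preserves the projection cost to $\Gamma$ itself, $\sum_{x\in D_1}u(x)\,d^z(x,\Gamma)\in\sum_{x\in X}d^z(x,\Gamma)\pm\eps'\cdot\OPT_z$ --- is nowhere in your argument, and it does not follow from (a) (distance to the $\poly(k/\eps)$-dimensional $\Gamma$ is not a $k$-clustering objective) nor from (b). In the paper this is Inequality~\eqref{eq:case1_2}, proved by Hoeffding, and the bound on the individual terms $u(x)\,d^z(x,\pi(x))$ hinges on the fact that $\Gamma$ is \emph{not} the vanilla Sohler--Woodruff subspace: the paper modifies their Algorithm~1 so that $C^\star\subset\Gamma$, which gives $d^z(x,\pi(x))\le d^z(x,c^\star(x))$ and hence $\sigma_1(x)\gtrsim d^z(x,\pi(x))/\OPT_z$. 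You take $\Gamma$ off the shelf, so this step is unavailable as stated.

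Your step (b) is also not justified as written, and the claim you use to dismiss the difficulty --- that the subspace-approximation task is ``up to $z$-dependent constant factors controlled by the clustering sensitivity'' --- is false in general. Theorem~\ref{thm:subspace} requires the sampling probabilities to dominate the $(k,z)$-subspace-approximation sensitivities in $\Gamma^\perp$, and its sample bound scales with $\calG^2$. The clustering sensitivities $\sigma_1$ dominate those sensitivities only after multiplying by the ratio $\cost_z(X,C^\star)\big/\min_{C\in\calC}\sum_{x\in X}d^z(x,\pi_C(x))$, which can be arbitrarily large (e.g.\ when $X$ lies essentially on a low-dimensional flat, so the subspace-approximation optimum is far below $\OPT_z$). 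Hence a \emph{multiplicative} $\eps$-weak-coreset guarantee with the fixed budget $N_1=\poly(k/\eps)$ cannot be extracted the way you assert. The paper's Lemma~\ref{lm:preserve} avoids this by proving only the additive bound~\eqref{eq:case1} via a case analysis: if $\min_C\sum_x d^z(x,\pi_C(x))\le\eps'\OPT_z/2$ the bound is trivial, and otherwise the rescaling factor is at most $2\alpha(k+1)/\eps'$, so Theorem~\ref{thm:subspace} applies with bounded $\calG$ (this step again uses $C^\star\subset\Gamma$). To repair your proof you would need to (i) build $\Gamma$ so that it contains $C^\star$, (ii) add the Hoeffding argument for the projection-cost condition, and (iii) replace the unconditional weak-coreset claim by the paper's case analysis (or otherwise only demand the additive guarantee that Observation~\ref{ob:equivalent_subset}'s proof actually uses).
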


\begin{theorem}[\bf{Second stage of Algorithm~\ref{alg:coreset}}]
	\label{thm:reduction}
	Let $X\subseteq \R^d$ be a collection of $n$ points, $\eps,\delta \in (0,1/2)$, constant $z\geq 1$ and integer $k\geq 1$. 
	Suppose the first stage of Algorithm~\ref{alg:coreset} constructs an $\eps$-coreset $D_1$ for \kzC.
	Then with probability at least $1-\delta/2$, the second stage of Algorithm~\ref{alg:coreset} outputs an $O(\eps)$-coreset $S$ for \kzC, and runs in time $O(nd)$.
\end{theorem}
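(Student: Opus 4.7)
The plan is to prove Theorem~\ref{thm:reduction} by recognizing the second stage of Algorithm~\ref{alg:coreset} as exactly the Feldman-Langberg importance-sampling scheme (Theorem~\ref{thm:fl11_bfl16}, part 1) applied to the weighted instance $(D_1, u)$, and then using a terminal embedding (Theorem~\ref{thm:embedding}) to replace the ambient dimension $d$ in the FL sample-size bound by $m = O(\eps^{-2}\log(k/\eps))$. Since Theorem~\ref{thm:upper} lets us assume $|D_1| = N_1 = \poly(k/\eps)$, Theorem~\ref{thm:embedding} supplies a mapping $f: \R^d \to \R^m$ satisfying $d(f(x), f(y)) = (1 \pm \eps)\, d(x,y)$ for every $x \in D_1$ and every $y \in \R^d$.

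The argument proceeds in four steps. First, I would observe that $C^\star$ remains an $O(1)$-approximate solution for \kzC on $(D_1, u)$: since $D_1$ is an $\eps$-coreset of $X$, both $\cost_z(D_1, C^\star)$ and the optimum cost on $(D_1, u)$ are $(1 \pm \eps)$-close to their values on $X$, which combined with the $\alpha$-approximation guarantee of $C^\star$ on $X$ yields the bound. Second, under $f$, the distances $d^z(x, C^\star)$ that define $\sigma_2(x)$ change only by a factor $(1 \pm \eps)^z$, so $\sigma_2(x)$ is a valid sensitivity (up to constants) for the embedded instance $(f(D_1), u)$ in $\R^m$, with total sensitivity $O(1)$. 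Third, applying Theorem~\ref{thm:fl11_bfl16}(1) to $(f(D_1), u)$ in dimension $m$, the sample size needed for an $\eps$-coreset with probability $\geq 1 - \delta/2$ is
\[
O\bigl(\eps^{-2z}(mk\log k + \log(1/\delta))\bigr) = O\bigl(\eps^{-2z-2} k\log k \log(k/(\eps\delta))\bigr) = N_2,
\]
and because the sampling distribution, the weights $w(x)$, and the correction $w(c) = (1+10\eps)\sum_{x \in D_c} u(x) - \sum_{x \in D_2 \cap D_c} w(x)$ all coincide between the original and embedded instances, the framework's guarantee implies that the output $(S, w)$, viewed inside $\R^m$ via $f$, is an $\eps$-coreset for \kzC on $(f(D_1), u)$. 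Fourth, I would pull back via $f$: for any $k$-center set $C \in \calC$, the terminal-embedding property gives $\sum_{x \in S} w(x)\, d^z(x, C) = (1 \pm O(\eps))\sum_{x \in S} w(x)\, d^z(f(x), f(C))$ and the analogous estimate for $(D_1, u)$, so $(S, w)$ is an $O(\eps)$-coreset for $(D_1, u)$ in $\R^d$; composing with the $\eps$-coreset guarantee of $D_1$ for $X$ completes the proof.

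The main technical obstacle I expect is the fourth step, namely, ensuring that the terminal embedding controls the cost against \emph{every} center set $C \in \calC$ and not merely a fixed one. This is exactly where Definition~\ref{def:embedding} is indispensable: the guarantee $d(f(x), f(y)) \approx d(x, y)$ for every $x \in D_1$ and every $y \in \R^d$ yields the four-term comparison $\cost_z(S, C) \approx \cost_z(f(S), f(C)) \approx \cost_z(f(D_1), f(C)) \approx \cost_z(D_1, C)$ uniformly over $C \in \calC$. The remaining pieces, including the running time $O(nd)$ which is dominated by the distance evaluations on $D_1$ of size $\poly(k/\eps)$, amount to routine bookkeeping.
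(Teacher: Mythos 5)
Your proposal is correct and follows essentially the same route as the paper's proof: it views the second stage as the Feldman--Langberg framework applied to the terminal-embedded instance $(f(D_1),u)$ in dimension $m=O(\eps^{-2}\log(k/\eps))$, notes that $C^\star$ (hence $f(C^\star)$) remains an $O(1)$-approximation, and then pulls the coreset guarantee back to $\R^d$ via the four-term chain $\cost_z(S,C)\approx\cost_z(f(S),f(C))\approx\cost_z(f(D_1),f(C))\approx\cost_z(D_1,C)\approx\cost_z(X,C)$, exactly the properties P1--P4 in the paper. No substantive differences or gaps beyond the same level of detail the paper itself leaves implicit (e.g., that $\sigma_2$ is a valid sensitivity for the embedded instance up to a $(1\pm\eps)$ factor).
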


\noindent
The proof of Theorem~\ref{thm:upper} can be found in Section~\ref{sec:first_stage} and the proof of Theorem~\ref{thm:reduction} can be found in Section~\ref{sec:second_stage}.
Observe that Theorem~\ref{thm:coreset} is a direct corollary of Theorems~\ref{thm:upper} and~\ref{thm:reduction}. 

\begin{proof}[Proof of Theorem~\ref{thm:coreset}]
	The overall running time is a direct corollary of Theorems~\ref{thm:upper} and~\ref{thm:reduction}.

	By Theorem~\ref{thm:upper}, with probability at least $1-\delta/2$, $(D_1,u)$ is an $\eps$-coreset for \kzC of size $N_1$.
	Then by Theorem~\ref{thm:reduction}, with probability at least $1-\delta$, the output $S$ is an $\eps$-coreset which completes the proof.
\end{proof}

\subsection{Analyzing the first stage of Algorithm~\ref{alg:coreset}}
\label{sec:first_stage}

In this section, we prove Theorem~\ref{thm:upper} that provides a theoretical guarantee for the first stage.
Given a subset $A\subseteq \R^d$, let $\mathrm{Span}(A)$ denote the convex hull of $A$.
For preparation, we have the following lemmas.
The first lemma shows that there exists a subspace $\Gamma$ such that the projections of $X$ to $\Gamma$ can be used to estimate all \kzC objectives.
Note that~\cite{sohler2018strong} only considers unweighted point sets, but it can be easily generalized to weighted point sets.

\begin{lemma}[\textbf{Restatement of~\cite[Lemma 6 and Theorem 10]{sohler2018strong}}]
	\label{lm:projection}
	Let $A\subseteq \R^d$ be a weighted point set together with a weight function $w:A\rightarrow \R_{\geq 0}$.
	Let $\OPT_z$ be the optimal weighted $(k,z)$-clustering objective over $A$, $\eps\in (0,0.5)$, constant $z\geq 1$, and $\Gamma$ be a subspace of $\R^d$.
	Suppose for any $k$-center set $C\in \calC$, we have
	\begin{eqnarray}
	\label{eq:projection1}
	\sum_{x\in A} w(x)\cdot \left(d^z(x,\pi(x))-d^z(x,\pi_C(x))\right)\leq \frac{ \eps^{z+3}}{3\cdot (84z)^{2z}} \cdot \OPT_z,
	\end{eqnarray}
	where $\pi$ and $\pi_C$ denote the projection from $A$ to $\Gamma$ and $\mathrm{Span}(\Gamma\cup C)$ respectively.
	Then for any $k$-center set $C\in \calC$, the following inequality holds
	\begin{eqnarray}
	\label{eq:projection2}
	\sum_{x\in A} w(x)\cdot \left(d^2(\pi(x),C)+d^2(x,\pi(x))\right)^{z/2} \in (1\pm \eps)\cdot \sum_{x\in A} w(x)\cdot d^z(x,C).
	\end{eqnarray}
\end{lemma}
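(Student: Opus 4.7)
The plan is to fix any $C\in\calC$, pass through the projection $x_C:=\pi_C(x)$ to $U:=\mathrm{Span}(\Gamma\cup C)$ as a pivot, and then reduce the desired multiplicative bound to the additive hypothesis via a pointwise Pythagorean comparison and a Hölder-type summation. First I would observe that because $x-x_C\perp U$ while $C,\Gamma\subseteq U$, Pythagoras gives the two identities
\begin{align*}
d^z(x,C) &= \bigl(d^2(x,x_C)+d^2(x_C,C)\bigr)^{z/2},\\
d^2(x,\pi(x)) &= d^2(x,x_C) + d^2(x_C,\pi(x)).
\end{align*}
Writing $a_x:=d(x,\pi(x))$, $b_x:=d(x,x_C)$, and $r_x:=d(x_C,\pi(x))$, I thus have $a_x^2=b_x^2+r_x^2$. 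The elementary power-mean inequality $a^z-b^z\ge c_z\cdot r^z$ whenever $a^2=b^2+r^2$ (provable by splitting into the regimes $r\le b$ and $r>b$ and using $r^z\le a^{z-2}\cdot r^2$) gives the pointwise bound $r_x^z\le c_z^{-1}(d^z(x,\pi(x))-d^z(x,x_C))$, so the hypothesis translates to the total bound
\[
\sum_{x\in A} w(x)\,r_x^z \;\le\; \frac{\eps^{z+3}}{3c_z(84z)^{2z}}\cdot\OPT_z.
\]

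Next I would do the pointwise comparison between $T(x):=(d^2(\pi(x),C)+d^2(x,\pi(x)))^{z/2}$ and $d^z(x,C)$. The triangle inequality within $U$ yields $|d(\pi(x),C)-d(x_C,C)|\le r_x$, which together with $a_x^2-b_x^2=r_x^2$ gives
\[
\bigl|d^2(\pi(x),C)+d^2(x,\pi(x))-d^2(x_C,C)-d^2(x,x_C)\bigr|\;\le\;2r_x\,d(x,C)+2r_x^2.
\]
Applying the mean-value-type estimate $|A^{z/2}-B^{z/2}|\le\tfrac{z}{2}\max(A,B)^{z/2-1}|A-B|$ at scale $O(d^2(x,C)+r_x^2)$ produces a pointwise error $|T(x)-d^z(x,C)|\le\lambda_z\bigl(r_x\,d^{z-1}(x,C)+r_x^z\bigr)$ for some $\lambda_z=z^{O(z)}$. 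Summing with weights and applying Hölder to $\sum_x w(x)\,r_x\,d^{z-1}(x,C)\le(\sum_x w(x)r_x^z)^{1/z}(\sum_x w(x)d^z(x,C))^{(z-1)/z}$, and using $\sum_x w(x)d^z(x,C)\ge\OPT_z$, the bound on $\sum_x w(x)r_x^z$ from the previous paragraph controls the total error by $\eps\cdot\sum_x w(x)d^z(x,C)$, which is exactly the desired $(1\pm\eps)$-approximation.

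The main obstacle is the bookkeeping of the $z$-dependent constants: the factor $(84z)^{2z}$ in the hypothesis must absorb $c_z^{-1}$ from the power-mean conversion, as well as $\lambda_z^z$ amplified through Hölder's $1/z$-root, and the extra $\eps^{z+3}$ (as opposed to the bare $\eps$ in the conclusion) pays for both that root and the additive $r_x^z$ tail. Since~\cite{sohler2018strong} establish this for unweighted $A$ in their Lemma~6 and Theorem~10, and every step above is linear (or monotone) in the weight function $w$, the same argument goes through verbatim for the weighted case asserted here.
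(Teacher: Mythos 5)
The paper never reproves this lemma: it is imported verbatim from Sohler--Woodruff (their Lemma 6 and Theorem 10) with the one-line remark that the extension to weighted point sets is immediate, so your attempt is a from-scratch argument rather than a reproduction. As written it has a genuine gap at its pivotal step, namely the claimed inequality $a^z-b^z\ge c_z\,r^z$ whenever $a^2=b^2+r^2$. This is false for every $z\in[1,2)$ --- in particular for the $k$-median case $z=1$ that the paper emphasizes. Indeed, by the mean value theorem $a^z-b^z\le z\,a^{z-1}(a-b)=z\,a^{z-1}\,r^2/(a+b)\le z\,a^{z-2}r^2$, which tends to $0$ as $b\to\infty$ with $r$ fixed (for $z=1$, simply $a-b=r^2/(a+b)\approx r^2/(2b)$); your parenthetical justification uses $r^z\le a^{z-2}r^2$, which itself requires $z\ge 2$. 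Consequently, for $1\le z<2$ the hypothesis~\eqref{eq:projection1} gives no control at all on $\sum_x w(x)r_x^z$: each difference $d^z(x,\pi(x))-d^z(x,\pi_C(x))$ can be made arbitrarily small while $r_x$ stays large, as long as $b_x=d(x,\pi_C(x))\gg r_x$. Since your Hölder step is fed entirely by that bound, the error estimate collapses precisely in this range of $z$. (A secondary issue in the same range: the mean-value bound $|A^{z/2}-B^{z/2}|\le\tfrac z2\max(A,B)^{z/2-1}|A-B|$ also needs $z\ge 2$, since for $z<2$ the exponent is negative and the maximum should be a minimum, or one should instead use $|A^{p}-B^{p}|\le|A-B|^{p}$ for $p\le 1$.)

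The gap is not a matter of tuning $c_z$: in the bad regime above the conclusion is still true, but only because the offending $r_x$ is small \emph{relative} to $d(x,C)\ge b_x$, so the pointwise error $\lambda_z\bigl(r_x\,d^{z-1}(x,C)+r_x^z\bigr)$ is already absorbed by $\eps\,d^z(x,C)$. A correct argument therefore needs an $\eps$-dependent case split, e.g.\ on whether $r_x\le\eps\cdot\max\{b_x,\,d(x_C,C)\}$ (charge the error to $\eps\,d^z(x,C)$) or not (in which case one can show $a_x^z-b_x^z\ge \eps^{O(1)}z^{-O(z)}r_x^z$ and charge to the hypothesis); this relative-error case analysis is exactly what the $\eps^{z+3}/(84z)^{2z}$ slack in~\eqref{eq:projection1} is there to pay for, and it is essentially the route taken in Sohler--Woodruff. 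For $z\ge 2$ your outline is sound (superadditivity of $t\mapsto t^{z/2}$ gives $a^z-b^z\ge r^z$, and the Hölder bookkeeping fits in the stated budget), and your closing remark that every step is linear in $w$, so the weighted case follows, is fine; but as it stands the proof does not cover $1\le z<2$.
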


\noindent
we have the following lemma showing that $D_1$ is a coreset in an arbitrary low dimensional subspace with high probability.

\begin{lemma}[\bf{$D_1$ is a coreset in a low-dimensional subspace}]
	\label{lm:vx12}
	Suppose $\Gamma$ is an arbitrary subspace of dimension $O\left((84z)^{2z}\eps^{-z-3} k\right)$ in $\R^d$. 
	Let $\Gamma'$ be obtained from $\Gamma$ by appending an arbitrary dimension in $\R^d$ that is orthogonal to $\Gamma$.
	With probability at least $1-\delta/4$, for any $k$-center set $C\subset \Gamma'$ we have
	\[
	\sum_{x\in D_1} u(x)\cdot d^z(x,C) \in (1\pm \eps) \cdot \cost_z(X,C).
	\]
\end{lemma}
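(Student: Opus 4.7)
The plan is to apply the Feldman--Langberg framework (specifically Theorem~\ref{thm:fl11_bfl16}(2)) to $X$ with the computed sensitivities $\sigma_1$, but with the universe of $k$-center sets restricted from $\calC$ to $\calC_{\Gamma'}$. The crucial point is that the dimensional parameter in the FL sample complexity is really the pseudo-dimension of the cost function class; when we restrict $C$ to the $\poly(k/\eps)$-dimensional subspace $\Gamma'$, this pseudo-dimension drops to $\poly(k/\eps)$, removing the $d$-dependence.

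First I would verify that $\sigma_1(x)$ upper-bounds the full sensitivity $\sup_{C\in\calC} u(x)d^z(x,C)/\sum_y u(y)d^z(y,C)$. This is a Varadarajan--Xiao-style bound: using the generalized triangle inequality $d^z(x,c)\le 2^{z-1}\bigl(d^z(x,c^\star(x))+d^z(c^\star(x),c)\bigr)$ for $z\ge 1$ together with the $\alpha$-approximation guarantee of $C^\star$, one upper-bounds the numerator $u(x)d^z(x,C)$ and lower-bounds $\sum_y u(y)d^z(y,C)$ in terms of $\cost_z(X,C^\star)$ and $|X_{c^\star(x)}|$, yielding exactly the form of $\sigma_1$. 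Summing $\sigma_1$ then gives the total sensitivity $\calG := \sum_{x\in X}\sigma_1(x) = O(2^{2z}\alpha^2 k)$: the $d^z$-terms contribute $2^{2z+2}\alpha^2$ (they telescope to $\cost_z(X,C^\star)/\cost_z(X,C^\star) = 1$), while the $1/|X_{c^\star(x)}|$-terms contribute at most $2^{2z+2}\alpha^2 k$ (one unit per cluster of $C^\star$).

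Next I would bound the pseudo-dimension of the restricted function class $\calF_{\Gamma'} := \{x\mapsto d^z(x,C) : C\in\calC_{\Gamma'}\}$ by $\tilde{O}(k\cdot\dim(\Gamma')) = \poly(k/\eps)$, independently of $d$. Since each $C\in\calC_{\Gamma'}$ is a $k$-tuple of points in the $(\dim(\Gamma')+1)$-dimensional space $\Gamma'$ and the cost $d^z(\cdot,C)$ is a minimum of $k$ functions each polynomial in the $\dim(\Gamma')+1$ coordinates of a single center, a standard VC/pseudo-dimension argument for min-of-polynomials classes gives a bound proportional to $k\cdot\dim(\Gamma')$ up to logarithmic factors.

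Finally, I would plug $\sigma_1$, $\calG = O(2^{2z}\alpha^2 k)$, and the pseudo-dimension bound into Theorem~\ref{thm:fl11_bfl16}(2), where now the ``$d$'' in the sample complexity $O(\eps^{-2}\calG(dk\log\calG + \log(1/\delta)))$ is replaced by $\dim(\Gamma') = O((84z)^{2z}\eps^{-z-3}k)$. A direct calculation then shows that $N_1 = O\bigl((168z)^{10z}\eps^{-5z-15}k^5\log(k/\delta)\bigr)$ comfortably dominates this bound, so with probability at least $1-\delta/4$ the sample $D_1$ with weights $u$ is an $\eps$-coreset for \kzC over $X$ with respect to every $C\in\calC_{\Gamma'}$. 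The main obstacle is the pseudo-dimension step: one has to formally argue that constraining centers to $\Gamma'$ shrinks the relevant combinatorial parameter of $\calF_{\Gamma'}$ from something $d$-dependent to something $\dim(\Gamma')$-dependent, rather than inheriting the looser ambient-dimension bound used in the standard FL analysis.
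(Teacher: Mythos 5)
Your proposal is correct and follows essentially the same route as the paper's proof: a Varadarajan--Xiao-style sensitivity bound showing $\sigma_1$ dominates $\sup_{C\in\calC} d^z(x,C)/\cost_z(X,C)$, a total-sensitivity bound of $O(2^{2z}\alpha^2 k)$, and an application of the second framework of Theorem~\ref{thm:fl11_bfl16} with the dimension parameter taken to be $\dim(\Gamma')=O\left((84z)^{2z}\eps^{-z-3}k\right)$ rather than $d$, which $N_1$ comfortably covers. The only difference is presentational: you spell out the pseudo-dimension argument for the restricted class $\calC_{\Gamma'}$, a step the paper leaves implicit when it plugs $m=\dim(\Gamma)$ into the $km\log(\cdot)$ term of the sample bound.
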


\begin{proof}
	Let $\eps' = \frac{ \eps^{z+3}}{6\cdot (84z)^{2z}}$ and $m=O(k/\eps')$ be the dimension of $\Gamma$.
	We first have the following claim.
	\begin{claim}
		\label{claim:sen_bound}
		For any $x\in X$,
		\[
		\sup_{C\in \calC} \frac{d^z(x,C)}{\cost_z(X,C)}\leq 2^z\cdot \frac{d^z(x,c^\star(x))}{\OPT_z}+ 2^{2z+1}\alpha\cdot \frac{1}{|X_{c^\star(x)}|}.
		\]
	\end{claim}
    \begin{proof}
    The proof idea is similar to~\cite[Theorem 7]{varadarajan2012sensitivity}.
	We first note that $d^z$ satisfies the relaxed triangle inequality, i.e., for any $x,x',x''\in \R^d$, we have
	\begin{align}
	\label{eq:triangle}
	d^z(x,x'')\leq 2^z\cdot\left(d^z(x,x')+d^z(x',x'')\right).
	\end{align}
	Then for any $x\in X$ and any $k$-center set $C\in \calC$, we have
	\begin{eqnarray}
	\label{eq:sen1}
	\begin{split}
	d^z(x,C)  &\leq && 2^z\left(d^z(x,c^\star(x)) + d^z(c^\star(x),C) \right) && (\text{Ineq.~\eqref{eq:triangle}}) \\
	& \leq && 2^z\cdot d^z(x,c^\star(x)) + \frac{2^z}{|X_{c^\star(x)}|}\cdot \sum_{y\in X} d^z(c^\star(y),C) && (\text{Defn. of $X_{c^\star(x)}$}) \\
	& \leq && 2^z\cdot d^z(x,c^\star(x)) + \frac{2^z}{|X_{c^\star(x)}|}\cdot \sum_{y\in X} 2^z\cdot \left(d^z(c^\star(x),x)+ d^z(x,C)\right) && (\text{Ineq.~\eqref{eq:triangle}}) \\
	& = && 2^z\cdot d^z(x,c^\star(x)) + \frac{2^{2z}}{|X_{c^\star(x)}|}\cdot \left(\cost_z(X,C^\star) + \cost_z(X,C) \right). && (\text{Defn. of $\cost_z$})
	\end{split}
	\end{eqnarray}
	Thus, we have that
	\begin{eqnarray*}
	\begin{split}
	\frac{d^z(x,C)}{\cost_z(X,C)} & \leq && 2^z\cdot \frac{d^z(x,c^\star(x))}{\cost_z(X,C)} + \frac{2^{2z}}{|X_{c^\star(x)}|}\cdot (1+\frac{\cost_z(X,C^\star)}{\cost_z(X,C)}) && (\text{Ineq.~\eqref{eq:sen1}}) \\
	&\leq && 2^z\cdot \frac{d^z(x,c^\star(x))}{\OPT_z} + \frac{2^{2z}}{|X_{c^\star(x)}|}\cdot (1+\alpha)  && (\text{Defn. of $C^\star$}) \\
	&\leq && 2^z\cdot \frac{d^z(x,c^\star(x))}{\OPT_z}+ 2^{2z+1}\alpha\cdot \frac{1}{|X_{c^\star(x)}|},
	\end{split}
	\end{eqnarray*}
	which implies the claim since $C$ is arbitrary.
	\end{proof}
	Then for any $x\in X$,
	\begin{eqnarray*}
	\begin{split}
	\sup_{C\in \calC} \frac{d^z(x,C)}{\cost_z(X,C)}
	& \leq && 2^z\cdot \frac{d^z(x,c^\star(x))}{\OPT_z}+ 2^{2z+1}\alpha\cdot \frac{1}{|X_{c^\star(x)}|} && (\text{Claim~\ref{claim:sen_bound}})\\
	& \leq && 2^z\alpha\cdot \frac{d^z(x,c^\star(x))}{ \cost_z(X,C^\star)}+ 2^{2z+1} \alpha\cdot \frac{1}{|X_{c^\star(x)}|} && (\text{Defn. of $C^\star$})\\
	&\leq && \sigma_1(x). &&
	\end{split}
	\end{eqnarray*}
	Also note that
	\begin{eqnarray*}
	\begin{split}
	\sum_{x\in X}\sigma_1(x) &= &&2^{2z+2} \alpha^2\cdot \sum_{x\in X} \left(\frac{d^z(x,c^\star(x))}{ \cost_z(X,C^\star)}+\frac{1}{|X_{c^\star(x)}|}\right) &&\\
	&\leq && 2^{2z+2} \alpha^2\cdot(1+k) && (|C^\star|=k) \\
	&\leq && 2^{2z+3} \alpha^2k. &&
	\end{split}
	\end{eqnarray*}
	Thus, we have
	\[
	N_1=\Omega\left(\frac{\sum_{x\in X}\sigma_1(x)}{(\eps')^2}\cdot(km \log (\sum_{x\in X}\sigma_1(x))+\log \frac{1}{\delta})\right),
	\] 
	Then by Theorem~\ref{thm:fl11_bfl16}, we complete the proof.
\end{proof}

\noindent
Next, we give the main technical lemma.
It indicates that if a subspace $\Gamma$ satisfies Inequality~\eqref{eq:projection1}, then clustering objectives over $D_1$ can be estimated by the projections of $D_1$ to $\Gamma$, similar to Inequality~\eqref{eq:projection2}.
The proof can be found in Section~\ref{sec:proof_mainlm}.

\begin{lemma}[\bf{$\Gamma$ preserves \kzC objectives over $D_1$}]
	\label{lm:preserve}
	Suppose $\Gamma$ is a subspace of $\R^d$ satisfying that $C^\star\subset \Gamma$ and for any $k$-center set $C\in \calC$, 
	\[
	\sum_{x\in X} \left(d^z(x,\pi(x))-d^z(x,\pi_C(x))\right)= \frac{ \eps^{z+3}}{3\cdot (84z)^{2z}} \cdot \OPT_z,
	\]
	where $\pi$ and $\pi_C$ denote the projection from $X$ to $\Gamma$ and $\mathrm{Span}(\Gamma\cup C)$ respectively.
	Let $D_1$ together with $u$ be the weighted point set obtained by the first stage of Algorithm~\ref{alg:coreset}.
	With probability at least $1-\delta/4$, for any $k$-center set $C\in \calC$,
	\[
	\sum_{x\in D_1} u(x)\cdot d^z(x,C)\in (1\pm 2\eps)\cdot \sum_{x\in D_1} u(x)\cdot \left(d^2(\pi(x),C)+d^2(x,\pi(x))\right)^{z/2}.
	\]
\end{lemma}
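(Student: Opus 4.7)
The plan is to apply Lemma~\ref{lm:projection} to the weighted sample $(D_1,u)$ (in place of the generic $A$) with the same subspace $\Gamma$ and with parameter $2\eps$; its conclusion~\eqref{eq:projection2} is then literally the statement of the present lemma, so the real work is to verify its hypothesis~\eqref{eq:projection1} for $(D_1,u)$: uniformly over every $k$-center set $C \in \calC$,
\[
\sum_{x\in D_1} u(x)\bigl(d^z(x,\pi(x)) - d^z(x,\pi_C(x))\bigr) \;\le\; \frac{(2\eps)^{z+3}}{3\,(84z)^{2z}}\cdot \OPT_z^{(D_1)},
\]
where $\OPT_z^{(D_1)}$ denotes the optimal weighted \kzC objective over $(D_1,u)$.

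Set $f_C(x):=d^z(x,\pi(x)) - d^z(x,\pi_C(x))\ge 0$. The hypothesis of the present lemma yields $\sum_{x\in X} f_C(x) \le \tfrac{\eps^{z+3}}{3(84z)^{2z}}\OPT_z$ for every $C$. Since $C^\star\subset\Gamma$, we have $f_C(x)\le d^z(x,\pi(x))\le d^z(x,c^\star(x))$, so plugging in the definition of $\sigma_1$ together with the bound $\sum_x\sigma_1(x)=O(2^{2z}\alpha^2 k)$ derived in the proof of Lemma~\ref{lm:vx12} shows that $f_C(x)/\sigma_1(x) = O(\cost_z(X,C^\star)/k)$ uniformly in $x$ and in $C$. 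Thus the estimator $\sum_{D_1}u(x)f_C(x)$ is an unbiased average of $N_1$ bounded samples, and a Bernstein-type concentration combined with a union bound over an $\eps$-net of the function family $\{f_C:C\in\calC\}$---whose pseudo-dimension is $\poly(k,\Dim(\Gamma)) = \poly(k,1/\eps)$ by the standard Feldman--Langberg framework---yields, with probability $\ge 1-\delta/4$ and simultaneously for every $C\in\calC$,
\[
\sum_{x\in D_1} u(x) f_C(x) \;\le\; \sum_{x\in X} f_C(x) + \tfrac{\eps^{z+3}}{3(84z)^{2z}}\OPT_z \;\le\; \tfrac{2\eps^{z+3}}{3(84z)^{2z}}\OPT_z.
\]

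To turn this into the required bound with $\OPT_z^{(D_1)}$ on the right we finally compare the two optima. The Pythagorean surrogate on the right-hand side of~\eqref{eq:projection2} depends on $C$ only through the image pair $(\pi(c_i),d(c_i,\pi(c_i)))_i$, so Lemma~\ref{lm:projection} applied to $X$ (using our hypothesis) certifies that $\cost_z(X,\cdot)$ is $(1\pm\eps)$-constant within every equivalence class $\Delta^{(\Gamma)}_C$ of Definition~\ref{def:equivalent}; hence $\OPT_z$ is attained, up to a $(1\pm\eps)$ factor, by some $C\subset\Gamma'$. Lemma~\ref{lm:vx12} then transfers this value to $(D_1,u)$ and gives $\OPT_z^{(D_1)}\ge (1-O(\eps))\OPT_z$. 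Substituting into the previous display verifies hypothesis~\eqref{eq:projection1} for $(D_1,u)$ with parameter $2\eps$ (the factor $2^{z+3}$ in $(2\eps)^{z+3}$ absorbs the constant loss), and Lemma~\ref{lm:projection} delivers the conclusion. The main obstacle is the uniform concentration in the second paragraph: the sensitivities $\sigma_1$ were engineered for \kzC-costs $d^z(\cdot,C)$, and one must check that they also dominate the sensitivities of the signed gaps $f_C$---this uses crucially that $C^\star\subset\Gamma$---and that the pseudo-dimension of $\{f_C:C\in\calC\}$ is small enough that $N_1$ samples suffice. A secondary subtlety is the lower bound on $\OPT_z^{(D_1)}$, which is not immediate because $D_1$'s optimum ranges over all of $\R^d$ while Lemma~\ref{lm:vx12} only controls $\Gamma'$-centered clusterings; the equivalence-class structure of Section~\ref{sec:definition} is precisely what allows us to reduce that optimum to $\Gamma'$-centered one.
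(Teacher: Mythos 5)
Your overall plan (verify hypothesis~\eqref{eq:projection1} for $(D_1,u)$ and then invoke Lemma~\ref{lm:projection}) matches the paper's, but the way you verify it has a genuine gap. You try to prove \emph{uniform} concentration of $\sum_{x\in D_1}u(x)f_C(x)$ around $\sum_{x\in X}f_C(x)$ over all $C\in\calC$ via a net/pseudo-dimension argument, asserting that the family $\{f_C\}$ has pseudo-dimension $\poly(k,\Dim(\Gamma))=\poly(k,1/\eps)$. That is not true: $f_C(x)$ depends on $C$ through $d^z(x,\pi_C(x))=d^z\bigl(x,\mathrm{Span}(\Gamma\cup C)\bigr)$, and as $C$ ranges over $\calC$ the relevant subspaces range over \emph{all} subspaces of dimension at most $k$ in $\Gamma^\perp\subseteq\R^d$, a family parameterized by $\Theta(dk)$ real degrees of freedom. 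Its function dimension therefore scales with $d$, while $N_1$ is independent of $d$, so a generic Feldman--Langberg/Bernstein-plus-net argument with $N_1$ samples cannot give the claimed uniform bound; if such a pseudo-dimension bound held, the entire weak-coreset machinery of the paper would be unnecessary. (As a smaller point, your per-sample bound $f_C(x)/\sigma_1(x)=O(\cost_z(X,C^\star)/k)$ is off by a factor $k$; the correct scale, as in the paper's Inequality~\eqref{eq:variance}, is that each $N_1 u(x)f_C(x)=O(\alpha k\,\OPT_z)$, which is fine for a single $C$ via Hoeffding but again does not rescue the union bound.)

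The paper sidesteps exactly this obstacle with a one-sided decomposition: it bounds $\sum_{D_1}u(x)f_C(x)$ by $\bigl(\sum_{D_1}u(x)d^z(x,\pi(x))\bigr)-\min_{C'\in\calC}\sum_{D_1}u(x)d^z(x,\pi_{C'}(x))$, so it only needs (i) concentration of the \emph{single fixed} quantity $\sum u(x)d^z(x,\pi(x))$ (plain Hoeffding, Inequality~\eqref{eq:case1_2}), and (ii) a lower bound on the \emph{minimum} subspace-approximation objective over $D_1$ in $\Gamma^\perp$ (Inequality~\eqref{eq:case1}), obtained from the weak-coreset Theorem~\ref{thm:subspace}; the $d$-independence there comes from Lemma~\ref{lm:projective_clustering}, which says a near-optimal $k$-flat is spanned by $\poly(k/\eps)$ points of $X$, reducing the relevant family to one of function dimension $\poly(k/\eps)$. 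To repair your argument you would need to replace the uniform-convergence step by this minimum-only reduction (or some other $d$-independent mechanism). Your final step comparing $\OPT_z^{(D_1)}$ with $\OPT_z$ addresses a point the paper glosses over, but note it conditions on the event of Lemma~\ref{lm:vx12}, an extra $\delta/4$ failure probability beyond the $1-\delta/4$ budget claimed in the lemma statement.
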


\noindent
By the above lemmas, we are ready to prove Theorem~\ref{thm:upper}.
The proof idea is to first show the existence of a subspace $\Gamma$ that satisfies Inequality~\eqref{eq:projection1} for $A=X$ in Lemma~\ref{lm:projection}. 
By Lemma~\ref{lm:projection}, we can prove that $X$ satisfies the representativeness property with respect to $\Gamma$.
Similarly, we can also show that $D_1$ satisfies the representativeness property with respect to $\Gamma$ by Lemma~\ref{lm:preserve}.
Recall that $\Gamma'$ is obtained from $\Gamma$ by appending an arbitrary  dimension in $\R^d$ that is orthogonal to $\Gamma$.
Finally, by Lemma~\ref{lm:vx12}, $D_1$ is an $\eps$-coreset for \kzC in $\Gamma'$.
Combining with these properties, we can conclude that $D_1$ is an $\eps$-coreset for \kzC in $\R^d$, which proves the theorem.

\begin{proof}[Proof of Theorem~\ref{thm:upper}]
	Let $\eps' = \frac{ \eps^{z+3}}{6\cdot (84z)^{2z}}$.
	We slightly modify~\cite[Algorithm 1]{sohler2018strong} by initiating $\Gamma$ to be the subspace containing $C^\star$ (whose dimension is at most $k$) instead of $\emptyset$. 
	Then by~\cite[Algorithm 1]{sohler2018strong}, there exists a subspace $\Gamma$ satisfying the following properties: 
	\begin{enumerate}
		\item[P1.] $\Gamma$ satisfies that for any $C\in \calC$,
		\[
		\sum_{x\in X} \left(d^z(x,\pi(x))-d^z(x,\pi_C(x))\right)\leq \eps' \cdot \OPT_z/2,
		\] 
		where $\pi$ and $\pi_C$ denote the projection from $X$ to $\Gamma$ and $\mathrm{Span}(\Gamma\cup C)$ respectively.
		\item[P2.] $\Gamma$ is of dimension $O(k/\eps')$.
		\item[P3.] $C^\star\subset \Gamma$.
	\end{enumerate}
	Since the dimension of $\Gamma'$ is $O(k/\eps')$, we have that with probability at least $1-\delta/4$, $D_1$ is an $\eps$-coreset for \kzC in $\Gamma'$ by Lemma~\ref{lm:vx12}.
	It means that for any $k$-center set $C\subset \Gamma'$,
	\begin{align}
	\label{eq:Q2}
	\sum_{x\in D_1} u(x)\cdot d^z(x,C) \in (1\pm \eps)\cdot \cost_z(X,C).
	\end{align}
	Moreover, with probability at least $1-\delta/4$, for any $k$-center set $C\in \calC$,
	\begin{align}
	\label{eq:Q1} 
	\sum_{x\in X} \left(d^2(\pi(x),C)+d^2(x,\pi(x))\right)^{z/2}\in (1\pm \eps)\cdot \cost_z(X,C)
	\end{align}
	by Lemma~\ref{lm:projection}, and
	\begin{align}
	\label{eq:coreset_preserve}
	\sum_{x\in D_1} u(x)\cdot d^z(x,C)\in (1\pm \eps) \sum_{x\in D_1} u(x)\cdot \left(d^2(\pi(x),C)+d^2(x,\pi(x))\right)^{z/2}.
	\end{align}
	by Lemma~\ref{lm:preserve}.
	Then we have the following claim.
	
	\begin{claim}
		\label{claim:alg}
		Both $X$ and $D_1$ satisfy the $2\eps$-representativeness property with respect to $\Gamma$.
	\end{claim}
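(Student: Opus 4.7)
The plan is to show that the ``projected cost''
$\left(d^2(\pi(x),C)+d^2(x,\pi(x))\right)^{z/2}$
appearing in Inequalities~\eqref{eq:Q1} and~\eqref{eq:coreset_preserve} is constant on each equivalence class $\Delta^{(\Gamma)}_C$, and then combine this invariance with those two inequalities.

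First, I would verify the key geometric invariance. Fix $x \in \R^d$ with projection $\pi(x) \in \Gamma$ and fix any center $c \in \R^d$. Since $\pi(x),\pi(c) \in \Gamma$ while $c-\pi(c) \perp \Gamma$, Pythagoras gives
\[
d^2(\pi(x),c) \;=\; d^2(\pi(x),\pi(c)) + d^2(c,\pi(c)).
\]
Hence, for any $k$-center set $C=(c_1,\dots,c_k)$, the value
$d^2(\pi(x),C) = \min_{i} \bigl( d^2(\pi(x),\pi(c_i)) + d^2(c_i,\pi(c_i)) \bigr)$
depends on $C$ only through the tuple $\bigl(\pi(c_i),\, d(c_i,\pi(c_i))\bigr)_{i\in[k]}$. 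By Definition~\ref{def:equivalent}, these are precisely the quantities preserved by $\sim_\Gamma$. Consequently, for any $C_1,C_2\in\Delta^{(\Gamma)}_C$ and every $x\in\R^d$,
\[
\bigl(d^2(\pi(x),C_1)+d^2(x,\pi(x))\bigr)^{z/2} \;=\; \bigl(d^2(\pi(x),C_2)+d^2(x,\pi(x))\bigr)^{z/2}.
\]

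Next, I would deduce representativeness for $X$. Summing the displayed identity over $x\in X$ shows that $T(C) := \sum_{x\in X}\bigl(d^2(\pi(x),C)+d^2(x,\pi(x))\bigr)^{z/2}$ is constant on $\Delta^{(\Gamma)}_C$. Inequality~\eqref{eq:Q1} then gives $\cost_z(X,C_1)\in(1\pm\eps)T(C)$ and $\cost_z(X,C_2)\in(1\pm\eps)T(C)$; taking the ratio yields $\cost_z(X,C_1)\in \tfrac{1+\eps}{1-\eps}\cdot\cost_z(X,C_2) \subseteq (1\pm 2\eps)\cdot\cost_z(X,C_2)$ for $\eps\leq 1/2$. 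This is exactly the $2\eps$-representativeness property for $X$ with respect to $\Gamma$.

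For $D_1$, I would repeat the same argument using the weighted version: the weighted sum $T_u(C):=\sum_{x\in D_1}u(x)\cdot\bigl(d^2(\pi(x),C)+d^2(x,\pi(x))\bigr)^{z/2}$ is again constant on every equivalence class by the pointwise identity above, and Inequality~\eqref{eq:coreset_preserve} (from Lemma~\ref{lm:preserve}) sandwiches $\sum_{x\in D_1}u(x)d^z(x,C_1)$ and $\sum_{x\in D_1}u(x)d^z(x,C_2)$ around the same value $T_u(C)$ to within a $(1\pm\eps)$ factor, yielding $2\eps$-representativeness for $D_1$. The main conceptual step is the Pythagorean invariance argument; after that, both statements are immediate consequences of the previously established sandwich bounds, so there is no substantial remaining obstacle.
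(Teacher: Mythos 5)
Your proof is correct and follows essentially the same route as the paper: combine the $(1\pm\eps)$ approximation of $\cost_z$ by the projected cost (Inequality~\eqref{eq:Q1} for $X$, Inequality~\eqref{eq:coreset_preserve} for $D_1$) with the fact that the projected cost is constant on each equivalence class. The only difference is that you spell out the Pythagorean invariance explicitly, which the paper simply attributes to Definition~\ref{def:equivalent}, and your $(1+\eps)/(1-\eps)$ bookkeeping matches the paper's own loose absorption into a $2\eps$ factor.
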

	
	\begin{proof}
		For any equivalence class $\Delta_C^{\Gamma}$ and any two $k$-center sets $C_1,C_2\in \Delta_C^{\Gamma}$, we have
		\begin{eqnarray*}
			\begin{split}
				\cost_z(X,C_1) & \in && (1\pm \eps)	\cdot \sum_{x\in X} \left(d^2(\pi(x),C_1)+d^2(x,\pi(x))\right)^{z/2} && (\text{Ineq.~\eqref{eq:Q1}}) \\
				& \in && (1\pm \eps) \cdot \sum_{x\in X} \left(d^2(\pi(x),C_2)+d^2(x,\pi(x))\right)^{z/2} && (\text{Defition~\ref{def:equivalent}}) \\
				& \in && (1\pm 2\eps)\cdot \cost_z(X,C_2). && (\text{Ineq.~\eqref{eq:Q1}})
			\end{split}
		\end{eqnarray*}
		By the same argument, Inequality~\eqref{eq:coreset_preserve} implies that $D_1$ also satisfies the $\eps$-representativeness property.
		This completes the proof.
	\end{proof}
	
	Now we are ready to prove the theorem.
	Given a $k$-center set $C\in \calC$, suppose $C$ belongs to the equivalence class $\Delta^\Gamma_{C'}$ for some $C'\in \Gamma'$.
	\begin{eqnarray*}
		\begin{split}
			& && \sum_{x\in D_1} u(x)\cdot d^z(x,C) &&\\
			&\in && (1\pm 2\eps)\cdot \sum_{x\in D_1} u(x)\cdot d^z(x,C') && (\text{Claim~\ref{claim:alg}}) \\
			&\in &&(1\pm 2\eps) \cdot \cost_z(X,C') && (\text{Ineq.~\eqref{eq:Q2}}) \\
			&\in && (1\pm 4\eps) \cdot \cost_z(X,C), && (\text{Claim~\ref{claim:alg}})
		\end{split}
	\end{eqnarray*}
	which completes the proof of correctness by letting $\eps' = O(\eps)$.

	For the running time, it costs $O\left(ndk+nd \log(n/\delta) + k^2 \log^2 n + \log^2(1/\delta) \log^2 n\right)$ time to construct a $2^{O(z)}$-approximate solution $C^\star$~\cite{mettu2004optimal},\footnote{\cite{mettu2004optimal} only discuss $k$-median, but their construction can be easily generalized to \kzC by the relaxed triangle inequality of $d^z$.} $O(ndk)$ time to compute all $X_c$ and $\sigma_1(x)$, and $O(N_1)=O(n)$ time to construct $D_1$.
	Hence, we prove for the overall running time.
\end{proof}

\subsection{Analyzing the second stage of Algorithm~\ref{alg:coreset}}
\label{sec:second_stage}

Next, we prove the reduction theorem (Theorem~\ref{thm:reduction}) that provides a theoretical guarantee for the second stage.
The main idea is to apply the result on terminal embeddings such that the dimension is further reduced to $O\left(\eps^{-2} \log(k/\eps)\right)$.

\begin{proof}
	%
	It costs $O(N_1 dk)=O(ndk)$ time to compute all $D_c$ and $\sigma_2(x)$, and $O(N_2)=O(n)$ time to construct $S$.
	Hence, we only need to focus on the correctness. 
	Since we suppose that the ouput $D_1$ of the first stage is an $\eps$-coreset over $X$, we have that
	\[
	\sum_{x\in D_1} u(x)\cdot d^z(x,C^\star)\leq (1+\eps)\cdot \cost_z(X,C^\star) \leq 2\alpha\cdot \OPT_z.
	\]
	Hence, $C^\star$ is also an $O(1)$-approximation of the \kzC problem over $D_1$.
	Let $f:\R^d\rightarrow \R^m$ be a terminal embedding of $D_1$ where $m=O(z^2 \eps^{-2}\log N_1)$.
	By Theorem~\ref{thm:embedding}, we have that for any $x\in D_1$ and $y\in \R^d$,
	\begin{eqnarray}
	\label{eq:embedding}
	d^z(x,y)\leq d^z\left(f(x),f(y)\right)\leq \left(1+\frac{\eps}{10z}\right)^z\cdot d^z(x,y)\leq (1+\eps)\cdot d^z(x,y).
	\end{eqnarray}
	Hence, we have for any set $A\subseteq \R^d$,
	\begin{eqnarray}
	\label{eq:set}
	d^z(x,A)\leq d^z\left(f(x),f(A)\right)\leq (1+\eps)\cdot d^z(x,A).
	\end{eqnarray}
	Then $f(C^\star)$ is an $O(1)$-approximation of the \kzC problem over the weighted point set $f(D_1)$ with weights $u(x)$.
	By Theorem~\ref{thm:fl11_bfl16}, with probability at least $1-\delta/2$, $f(S)$ together with weights $w(x)$ is an $\eps$-coreset for \kzC over $(f(D_1),u)$ since $N_2= \Omega\left(\eps^{-2z} (km \log k+\log(1/\delta)\right)$.
	Then it suffices to prove that $S$ together with weights $w(x)$ is an $O(\eps)$-coreset for \kzC over $X$.
	For any $k$-center set $C\in \calC$, we have the following
	\begin{enumerate}
		\item[P1.] $\sum_{x\in D_1} u(x)\cdot d^z(x,C)\in (1\pm \eps)\cdot \cost_z(X,C)$ by the assumption of the theorem.
		\item[P2.] $\sum_{x\in S} w(x)\cdot d^z\left(f(x),f(C)\right)\in (1\pm \eps)\cdot \sum_{x\in D_1} u(x)\cdot d^z\left(f(x),f(C)\right)$ by the definition of $S$ and Theorem~\ref{thm:fl11_bfl16}.
		\item[P3.] $\sum_{x\in D_1} u(x)\cdot d^z\left(f(x),f(C)\right)\in (1\pm \eps)\cdot \sum_{x\in D_1} u(x)\cdot d^z(x,C)$ by Inequality~\eqref{eq:set}.
		\item[P4.] $\sum_{x\in S} w(x)\cdot d^z(x,C) \in (1\pm \eps)\cdot \sum_{x\in S} w(x)\cdot d^z\left(f(x),f(C)\right)$ by Inequality~\eqref{eq:set}.
	\end{enumerate}
	Combining the above properties, we have that
	\begin{eqnarray*}
		\begin{split}
			\sum_{x\in S} w(x)\cdot d^z(x,C)  & \in &&  (1\pm \eps)\cdot \sum_{x\in S} w(x)\cdot d^z(f(x),f(C)) &&   (\text{P4}) \\
			&  \in && (1\pm 2\eps)\cdot \sum_{x\in D_1} u(x)\cdot d^z(f(x),f(C)) &&  (\text{P2}) \\
			&   \in && (1\pm 3\eps)\cdot \sum_{x\in D_1} u(x)\cdot d^z(x,C)  &&  (\text{P3}) \\
			&  \in && (1\pm 4\eps)\cdot \cost_z(X,C). && (\text{P1}) 
		\end{split}
	\end{eqnarray*}
	which completes the proof.
\end{proof}

\begin{remark}
	\label{remark:another_alg}
	In the second stage of Algorithm~\ref{alg:coreset}, we apply the first framework stated in Theorem~\ref{thm:fl11_bfl16}.
	This is because we want to reduce the dependence of size on $k$ to be linear.
	In the case that $\eps$ is small, we can apply the second framework stated in Theorem~\ref{thm:fl11_bfl16} instead. 
	By Theorem~\ref{thm:fl11_bfl16}, the coreset size should be $O\left((2^{2z} \eps^{-2} k\cdot (km \log k + \log(1/\delta)) \right)$ where $m=O(\eps^{-2} \log(N_1/\eps))$ by Theorem~\ref{thm:reduction}.
	This provides us an $\eps$-coreset of size 
	\[
	O(2^{2z} \eps^{-4} k^2\log(k/\eps) \log (k/\eps \delta)).
	\]
\end{remark}

\subsection{Proof of the main technical Lemma~\ref{lm:preserve}}
\label{sec:proof_mainlm}

For preparation, we introduce the following theorem showing the existence of a weak-coreset $S$ for $(k,z)$-subspace approximation over $X$ of size independent of $n,d$.
Recall that $\calP$ is the collection of all $j$-flats in $\R^d$ with $j\leq k$, i.e., all subspaces in $\R^d$ of dimension at most $k$.

\begin{theorem}[\bf{Weak-coreset for subspace approximation}]
	\label{thm:subspace}
	Given a dataset $X$ of $n$ points in $\R^d$, $\eps,\delta \in (0,0.5)$, constant $z\geq 1$ and integer $k \geq 1$, suppose $\sigma: X\rightarrow \R_{\geq 0}$ is a sensitivity function satisfying that
	\[
	\sigma(x) \geq \sup_{P\subseteq \calP} \frac{d^z(x, P)}{\sum_{y\in X} d^z(y, P)}
	\] 
	for each $x\in X$.
	Let $\calG = \sum_{x\in X} \sigma(x)$ denote the total sensitivity.
	Suppose $S\subseteq X$ is constructed by taking 
	\[
	O\left(\frac{\calG^2}{\eps^2}\cdot (\eps^{-1} k^3\log (k/\eps)+\log(1/\delta))\right)
	\] 
	samples, where each sample $x\in X$ is selected with probability $\frac{\sigma(x)}{\calG}$ and has weight $w(x):= \frac{\calG}{|S|\cdot \sigma(x)}$.
	Then with probability at least $1-\delta$, $S$ is an $\eps$-weak-coreset for the $(k,z)$-subspace approximation problem over $X$.
\end{theorem}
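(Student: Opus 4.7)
My plan is to combine a dimension-reduction result for subspace approximation with the standard sensitivity-sampling framework for coreset construction. The first step is to reduce the weak-coreset inequality over the full collection $\calP$ of all $j$-flats ($j\le k$) to a \emph{strong}-coreset guarantee over a much smaller restricted family $\calP'$ consisting of all $k$-flats spanned by at most $\tilde{O}(\eps^{-1}k^2)$ points of $X$. This reduction is exactly what~\cite{shyamalkumar2007efficient} (stated in the text as Lemma~\ref{lm:projective_clustering}) provides: $\calP'$ contains a $(1+\eps)$-approximate minimizer of $\sum_{x\in X} d^z(x,\cdot)$, and, applied to the positively weighted set $(S,w)$, it also contains a $(1+\eps)$-approximate minimizer of $\sum_{x\in S} w(x)\cdot d^z(x,\cdot)$ since $S\subseteq X$. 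Thus, if $S$ is a strong $\eps$-coreset for $\calP'$, then the two minima on $\calP$ and on $\calP'$ are within a $(1+O(\eps))$ factor on both $X$ and $(S,w)$, and chaining these inequalities yields the weak-coreset property of Definition~\ref{def:weak_coreset}.

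The second step is to produce a strong $\eps$-coreset for $\calP'$ by the Feldman--Langberg / Varadarajan--Xiao sensitivity sampling framework~\cite{feldman2011unified,varadarajan2012sensitivity}. The hypothesized sensitivity function $\sigma(x)$ dominates the pointwise ratio $d^z(x,P)/\sum_y d^z(y,P)$ uniformly over $P\in\calP\supseteq\calP'$, so in particular it is a valid sensitivity bound for $\calP'$, with total sensitivity $\calG$. The standard sample-complexity bound then asserts that
\[
N \;=\; \Theta\!\left(\frac{\calG^{2}}{\eps^{2}}\bigl(\mathrm{fdim}(\calP') + \log(1/\delta)\bigr)\right)
\]
samples drawn with probability $\sigma(x)/\calG$ and reweighted by $\calG/(N\sigma(x))$ yield a strong $\eps$-coreset for $\calP'$ with probability at least $1-\delta$. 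What remains is to bound $\mathrm{fdim}(\calP')$ independently of $d$: every element of $\calP'$ is specified by a choice of at most $\tilde{O}(\eps^{-1}k^2)$ indices from $[n]$ together with at most $k$ spanning coefficients, and \cite[Theorem 4]{varadarajan2012sensitivity} converts this combinatorial description into a function-dimension bound of $\tilde{O}(\eps^{-1}k^3)$. Plugging this bound into the display above recovers exactly the sample size stated in Theorem~\ref{thm:subspace}.

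\paragraph{Main obstacle.}
The genuinely subtle point is the lower-bound half of the weak-coreset inequality, namely ruling out the possibility that some $P^\dagger\in\calP\setminus\calP'$ achieves a spuriously small value of $\sum_{x\in S}w(x)\cdot d^z(x,P^\dagger)$, which would destroy the $(1-\eps)\cdot\OPT(X)$ lower bound. The cleanest way to address this is, as indicated above, to apply the Shyamalkumar--Varadarajan reduction a \emph{second} time, now to the weighted instance $(S,w)$: this produces a $(1+\eps)$-approximate minimizer on $(S,w)$ spanned by $\tilde{O}(\eps^{-1}k^2)$ points of $S\subseteq X$, which therefore lies in $\calP'$. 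The strong-coreset property transports the cost of this subspace from $S$ back to $X$, and the resulting value is at least $\OPT(X)$ by definition. Once this is in place, the upper and lower bounds together yield the weak-coreset property, and Theorem~\ref{thm:subspace} follows.
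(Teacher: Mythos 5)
Your proposal is correct and follows essentially the same route as the paper's proof: restrict to the family $\calP'$ of $k$-flats spanned by $\tilde{O}(\eps^{-1}k^2)$ points of $X$, invoke the Shyamalkumar--Varadarajan result (Lemma~\ref{lm:projective_clustering}) on both $X$ and the weighted set $(S,w)$ to place near-optimal flats inside $\calP'$, obtain a strong $\eps$-coreset for $\calP'$ via sensitivity sampling with the $\tilde{O}(\eps^{-1}k^3)$ function-dimension bound, and chain the two-sided inequalities. The only difference is cosmetic attribution (the paper cites \cite[Lemma 8.2]{feldman2011unified} for the dimension bound and \cite[Theorem 4]{varadarajan2012sensitivity} for the sampling guarantee), and your treatment of the lower-bound direction is exactly the paper's argument.
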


\noindent
Actually, the above construction implies an algorithm to compute a nearly optimal solution for the $(k,z)$-subspace approximation problem over $X$; see discussion in Remark~\ref{remark:weak_coreset}.
To prove the theorem, we need the following lemma based on~\cite[Theorem 9]{deshpande2007sampling}.
It indicates that a nearly optimal solution for $(k,z)$-subspace approximation exists in some low dimensional space.

\begin{lemma}[\bf{Existence of approximate $k$-flats in low dimensional subspaces}]
	\label{lm:projective_clustering}
	Given a weighted dataset $X$ of $n$ points together with weights $u(x)$ in $\R^d$, $ \eps \in (0,0.5)$, constant $z\geq 1$ and integer $k \geq 1$, there exists a $k$-flat $P$ that is spanned by at most $O(\eps^{-1} k^2 \log(k/\eps))$ points from $X$, such that 
	\[
	\sum_{x\in X} u(x)\cdot d^z(x,P)\leq (1+\eps)\cdot \min_{P'\in \calP}\sum_{x\in X} u(x)\cdot d^z(x,P').
	\]
\end{lemma}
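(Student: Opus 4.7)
The plan is to prove the lemma by iterative \emph{adaptive sampling}: draw points from $X$ with probability proportional to the current $z$-th power residual distance, extending the Deshpande--Vempala-style argument of \cite{deshpande2007sampling} (stated for $z=2$) to general $z$ via the relaxed triangle inequality for $d^z$. The target is to produce a set $T\subseteq X$ with $|T|=O(\eps^{-1}k^2\log(k/\eps))$ whose affine hull contains a $k$-flat of cost $\le (1+\eps)\OPT$; taking $P$ to be the optimal $k$-sub-flat inside $\operatorname{span}(T)$ yields the lemma, where $\OPT:=\min_{P'\in\calP}\sum_{x\in X}u(x)\,d^z(x,P')$.

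The core step is a one-round contraction: given any flat $Q\subseteq \R^d$, if $T'$ is a multiset of $s=O(k)$ points sampled i.i.d.\ from $X$ with probability proportional to $u(x)\,d^z(x,Q)$ and $Q':=\operatorname{span}(Q\cup T')$, then
\begin{align*}
\E\!\left[\sum_{x\in X}u(x)\,d^z(x,Q')\right] \le \beta_z\,\OPT + \tfrac{1}{2}\sum_{x\in X}u(x)\,d^z(x,Q),
\end{align*}
with $\beta_z=2^{O(z)}$. For $z=2$ this is essentially \cite[Theorem 9]{deshpande2007sampling}; for general $z$ the same template works once one replaces the ordinary triangle inequality by $d^z(a,c)\le 2^{z-1}(d^z(a,b)+d^z(b,c))$, which produces the constant $\beta_z$. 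I would iterate this contraction $T_1=O(\log(k/\eps))$ rounds starting from $Q_0=\{0\}$ to obtain a flat $\widehat P$ spanned by $O(k\log(k/\eps))$ input points with expected cost $O(\beta_z)\,\OPT$, a constant-factor approximation.

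With $\widehat P$ in hand, I would perform one final batch of $s^{*}=O(\eps^{-1}k^2\log(k/\eps))$ adaptive samples $T^{*}$ from $X$ proportional to $u(x)\,d^z(x,\widehat P)$, and argue that the best $k$-flat $P$ inside $\operatorname{span}(\widehat P\cup T^{*})$ attains cost $\le(1+\eps)\OPT$. The argument is a union bound over the $k$ basis directions of $P^\star$ projected into $(\widehat P)^\perp$: sampling $O(k/\eps)$ points proportional to $d^z(\cdot,\widehat P)$ suffices to approximate a single such direction to $(1+\eps)$ in the $z$-th power loss (via a Bernstein-type concentration applied to the residual cost, using the constant-factor anchor $\widehat P$ to control variance), and union-bounding over $k$ directions contributes the extra factor of $k$ and $\log(k/\eps)$. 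The resulting $P\subseteq \operatorname{span}(\widehat P\cup T^{*})$ is a $k$-flat spanned by $O(\eps^{-1}k^2\log(k/\eps))$ input points, as required; a standard Markov/repetition argument converts the ``in expectation'' statements into existence.

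The main obstacle is the $z$-th power generalization of the contraction and refinement lemmas. For $z=2$ these rest on a clean spectral/Frobenius argument (squared-distance sampling naturally captures the top singular directions in expectation), whereas for $z\neq 2$ no such spectral handle is available and one must work entirely through the relaxed triangle inequality, together with a case split separating points with small residual to $\widehat P$ (which contribute at most $\eps\,\OPT$ whether sampled or not) from those with large residual (which are sampled with correspondingly large probability, so $O(k/\eps)$ of them suffice to cover each direction of $P^\star$). Once this $z$-th power contraction is established, iterating it and extracting the optimal $k$-sub-flat inside the sampled affine span is routine.
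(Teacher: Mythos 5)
There is a genuine gap, and it sits exactly at the two lemmas you defer. Your one-round contraction --- that $s=O(k)$ samples drawn proportionally to $u(x)\,d^z(x,Q)$ yield $\E[\mathrm{cost}_z(X,\operatorname{span}(Q\cup T'))]\le \beta_z\OPT+\tfrac12\,\mathrm{cost}_z(X,Q)$ --- is a genuinely spectral fact at $z=2$: the known proof builds explicit linear combinations of the sampled points that approximate the top singular directions and then uses the orthogonal (Pythagorean) decomposition of squared norms and linearity of expectations of squared lengths. None of that survives if you merely replace the triangle inequality by $d^z(a,c)\le 2^{z-1}(d^z(a,b)+d^z(b,c))$; no $O(k)$-sample halving contraction for general $z$ is established in the sources you invoke, and in fact \cite{deshpande2007sampling} and \cite{shyamalkumar2007efficient} avoid it precisely because it is not available: they grow the flat one dimension at a time, spending $\Theta(\eps^{-1}\log(1/\eps))$ samples per round and losing a multiplicative $(1+\eps)$ per round, which compounds to $(1+\eps)^{O(kz)}$. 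Your refinement step has the same problem in a different guise: the ``union bound over the $k$ basis directions of $P^\star$ projected into $(\widehat P)^\perp$'' treats the directions as capturable independently, but in the known arguments a sample is only useful for the $j$-th direction conditioned on the first $j-1$ directions already lying in the span --- this sequential dependence is exactly why the literature uses the incremental scheme and why the compounded factor appears. Finally, the case split ``points with small residual to $\widehat P$ contribute at most $\eps\,\OPT$ whether sampled or not'' does not bound the quantity you need: $d^z(x,P)-d^z(x,P^\star)$ for the candidate flat $P$ inside the sampled span is not controlled by $d^z(x,\widehat P)$ (a point lying on $\widehat P$ can be far from every $k$-flat in the span), so the Bernstein-type argument is not set up on a valid decomposition.

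For contrast, the paper's proof is a two-line reduction: it cites \cite[Theorem 1.3]{shyamalkumar2007efficient}, which already asserts the existence of $O(\eps^{-1}k\log(1/\eps))$ input points whose span contains a $k$-flat of cost at most $(1+\eps)^{(k+1)z}\cdot\min_{P'\in\calP}\sum_{x\in X}u(x)\,d^z(x,P')$, and then rescales $\eps'=\Theta(\eps/(zk))$, turning the compounded factor into $1+O(\eps)$ and the number of spanning points into $O(\eps^{-1}k^2\log(k/\eps))$ for constant $z$. If you want a self-contained argument in your style, you would effectively have to reprove the per-dimension sampling lemmas of \cite{shyamalkumar2007efficient,deshpande2007sampling} for $z$-th powers; the global halving with $O(k)$ samples and the independent-directions union bound cannot simply be asserted.
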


\begin{proof}
	By~\cite[Theorem 1.3]{shyamalkumar2007efficient}, there exists a collection $D\subseteq X$ of $O(\eps^{-1} k \log(1/\eps)$ points such that the spanned subspace of $D$ contains a $k$-flat $P$ satisfying that
	\[
	\sum_{x\in X} u(x)\cdot d^z(x,P)\leq (1+\eps)^{(k+1)z}\cdot \min_{P'\in \calP}\sum_{x\in X} u(x)\cdot d^z(x,P').
	\]
	Replacing $\eps'=O(\eps/zk)$, we complete the proof.
\end{proof}

\noindent
We are ready to prove the theorem.

\begin{proof}[Proof of Theorem~\ref{thm:subspace}]
	Denote $\calP'$ to be the collection of all $k$-flats that are spanned by at most $O\left(\eps^{-1} k^2 \log(k/\eps)\right)$ points from $X$.
	By~\cite[Lemma 8.2]{feldman2011unified}, the function dimension\footnote{Since this paper only uses function dimension as a black box, we do not present the definition. We refer interested readers to~\cite[Definition 6.4]{feldman2011unified} or~\cite[Definition 4.5]{braverman2016new} for concrete definitions.} of $(X,\calP')$ is $O\left(\eps^{-1} k^3 \log(k/\eps)\right)$.
	Then by~\cite[Theorem 4]{varadarajan2012sensitivity}, with probability at least $1-\delta$, for any $k$-flat $P\in \calP'$, 
	\begin{align}
	\label{eq:vx12_flat}
	\sum_{x\in S} w(x)\cdot d^z(x,P)\in (1\pm \eps)\cdot\sum_{x\in X} d^z(x,P).
	\end{align}
	Then we have
	\begin{eqnarray*}
		\begin{split}
			\min_{P\in \calP} \sum_{x\in S} w(x)\cdot d^z(x,P) &\geq &&	(1-\eps)\cdot\min_{P\in \calP'} \sum_{x\in S} w(x)\cdot d^z(x,P)&& (\text{Lemma~\ref{lm:projective_clustering}})\\
			& \geq && (1-\eps)^2\cdot\min_{P\in \calP'} \sum_{x\in X} d^z(x,P) && (\text{Ineq.~\eqref{eq:vx12_flat}}) \\
			&\geq &&(1-\eps)^2\cdot\min_{P\in \calP} \sum_{x\in X} d^z(x,P). &&
		\end{split}
	\end{eqnarray*}
	We also have
	\begin{eqnarray*}
		\begin{split}
			\min_{P\in \calP} \sum_{x\in S} w(x)\cdot d^z(x,P) &\leq &&	\min_{P\in \calP'} \sum_{x\in S} w(x)\cdot d^z(x,P)&& \\
			&\leq &&(1+\eps)\cdot\min_{P\in \calP'} \sum_{x\in X} d^z(x,P) &&  (\text{Ineq.~\eqref{eq:vx12_flat}}) \\
			& \leq && (1+\eps)^2\cdot \min_{P\in \calP}\sum_{x\in X} w(x)\cdot d^z(x,P). && (\text{Lemma~\ref{lm:projective_clustering}})
		\end{split}
	\end{eqnarray*}

	Letting $\eps'=O(\eps)$, we complete the proof.
\end{proof}

\begin{remark}
	\label{remark:weak_coreset}
	Theorem~\ref{thm:subspace} actually provides an approach to compute a $(1+\eps)$-approximate solution for the $(k,z)$-subspace approximation problem.
	Suppose $P^\star\in \calP'$ is a $k$-flat satisfying that
	\[
	\sum_{x\in S} w(x)\cdot d^z(x,P^\star)\leq (1+\eps)\cdot \min_{P\in \calP'} \sum_{x\in S} w(x)\cdot d^z(x,P).
	\]
	Then by the above proof, we directly have
	\[
	(1-\eps)\cdot \min_{P\in \calP} \sum_{x\in X} d^z(x,P)\leq \sum_{x\in S} w(x)\cdot d^z(x,P^\star)\leq (1+\eps)^3\cdot \min_{P\in \calP} \sum_{x\in X} d^z(x,P),
	\]
	which indicates that $\sum_{x\in S} w(x)\cdot d^z(x,P^\star)$ is a $(1\pm O(\eps))$-approximation of the $(k,z)$-subspace approximation objective $\min_{P\in \calP} \sum_{x\in X} d^z(x,P)$.
	Moreover, since $P^\star\in \calP'$, we also have that
	\[
	\sum_{x\in S} w(x)\cdot d^z(x,P^\star) \in (1\pm \eps) \sum_{x\in X}  d^z(x,P^\star)
	\]
	by Inequality~\eqref{eq:vx12_flat}.
	Thus, $P^\star$ is a $(1+O(\eps))$-approximate solution for the $(k,z)$-subspace approximation problem.
\end{remark}

\noindent
Now we can prove the main lemma.

\begin{proof}[Proof of Lemma~\ref{lm:preserve}]
	Let $\eps' = \frac{ \eps^{z+3}}{6\cdot (84z)^{2z}}$.
	Recall that we have $\Gamma$ is a subspace of $\R^d$ satisfying that $C^\star\subset \Gamma$ and for any $k$-center set $C\in \calC$, 
	\[
	\sum_{x\in X} \left(d^z(x,\pi(x))-d^z(x,\pi_C(x))\right)= \frac{\eps'}{2} \cdot \OPT_z,
	\]
	We first have the following observations
	\begin{eqnarray}
	\label{eq:sampling1} 
	\begin{split}
	\sum_{x\in X} d^z\left(x,\pi(x)\right) &\leq&& \sum_{x\in X} d^z(x,C^\star) && (C^\star\in \Gamma)\\
	&\leq &&\alpha\cdot \OPT_z, && (\text{Defn. of $C^\star$})
	\end{split}
	\end{eqnarray}
	and
	\begin{eqnarray}
	\label{eq:sampling2} 
	\begin{split}
	\sigma_1(x)&>&& 2^{2z+2} \alpha^2 \cdot \frac{d^z(x,c^\star(x))}{ \cost_z(X,C^\star)} && (\text{Defn. of $\sigma_1(x)$})\\
	&\geq && \frac{2^{2z+2} \alpha\cdot d^z(x,\pi(x))}{\OPT_z}, && (C^\star\in \Gamma)
	\end{split}
	\end{eqnarray}
	and
	\begin{eqnarray}
	\label{eq:sampling3}
	\begin{split}
	\sum_{x\in X}\sigma_1(x) &= &&2^{2z+2} \alpha^2\cdot \sum_{x\in X} \left(\frac{d^z(x,c^\star(x))}{ \cost_z(X,C^\star)}+\frac{1}{|X_{c^\star(x)}|}\right) &&\\
	&\leq && 2^{2z+2} \alpha^2\cdot(1+k) && (|C^\star|=k) \\
	&\leq && 2^{2z+3} \alpha^2k. &&
	\end{split}
	\end{eqnarray}

	For a $k$-center set $C\in \calC$, recall that $\pi_C$ is the projection from $X$ to $\mathrm{Span}(\Gamma\cup C)$.
	We claim that
	\begin{align}
	\label{eq:case1}
	\min_{C\in \calC}\sum_{x\in D_1} u(x)\cdot d^z(x,\pi_C(x))\geq \min_{C\in \calC}\sum_{x\in X} d^z(x,\pi_C(x))-\frac{\eps'\cdot \OPT_z}{2}.
	\end{align}
	Let $\widehat{C}\in \calC$ denote the $k$-center set such that $\sum_{x\in X} d^z(x,\pi_C(x))$ is minimized.
	To prove this inequality, we consider two cases.
	If $\sum_{x\in X} d^z(x,\pi_{\widehat{C}}(x))\leq \frac{\eps'\cdot \OPT_z}{2}$, we directly have
	\[
	\min_{C\in \calC}\sum_{x\in D_1} u(x)\cdot d^z(x,\pi_C(x))\geq 0\geq \min_{C\in \calC}\sum_{x\in X} d^z(x,\pi_C(x))-\frac{\eps'\cdot \OPT_z}{2}.
	\]
	Otherwise, suppose $\sum_{x\in X} d^z(x,\pi_{\widehat{C}}(x))> \eps'\cdot \OPT_z/2$.
	Since $C^\star\subseteq \Gamma$, we have that for any $k$-center set $C\in \calC$, $\sum_{x\in X} d^z(x,\pi_C(x))\leq \cost_z(X,C^\star)$.
	We regard $X$ as a point set in $\Gamma^\perp$ (i.e., the orthogonal complement of $\Gamma$).
	Then each $k$-center set $C\in \calC$ corresponds to a subspace $H\subseteq \Gamma^\perp$ of dimension at most $k$, satisfying that $\mathrm{Span}(\Gamma\cup C) =\mathrm{Span}(\Gamma\cup H)$. 
	This enables us to apply Theorem~\ref{thm:subspace} to $\Gamma^\perp$.
	We set $\sigma(x)$ in Theorem~\ref{thm:subspace} as follows:
	\begin{eqnarray*}
		\begin{split}
			\sigma(x) &:= &&\frac{\sigma_1(x)}{2^{2z+2}\alpha^2} \cdot \frac{\cost_z(X,C^\star)}{\sum_{x\in X} d^z(x,\pi_{\widehat{C}}(x))} &&\\
			&= && \frac{d^z(x,c^\star(x))}{\sum_{x\in X} d^z(x,\pi_{\widehat{C}}(x))}  +\frac{\cost_z(X,C^\star)}{|X_{c^\star(x)}|\cdot \left(\sum_{x\in X} d^z(x,\pi_{\widehat{C}}(x))\right)} && (\text{Defn. of $\sigma_1(x)$})\\
			&\geq && \sup_{C\in \calC} \frac{d^z(x,\pi_C(x))}{\sum_{x\in X} d^z(x,\pi_C(x))}. && (C^\star\in \Gamma\text{ and Defn. of $\widehat{C}$})
		\end{split}
	\end{eqnarray*}
	Note that the sampling distribution with respect to $\sigma$ is exactly the same as to $\sigma_1$.
	Moreover, we have
	\begin{eqnarray*}
		\begin{split}
			\calG &:=&&\sum_{x\in X} \sigma(x) &&\\
			&=&& \sum_{x\in X} \frac{d^z(x,c^\star(x))}{\sum_{x\in X} d^z(x,\pi_{\widehat{C}}(x))}  +\frac{\cost_z(X,C^\star)}{|X_{c^\star(x)}|\cdot \left(\sum_{x\in X} d^z(x,\pi_{\widehat{C}}(x))\right)} &&\\
			&= &&  \frac{(k+1)\cdot \cost_z(X,C^\star)}{\sum_{x\in X} d^z(x,\pi_{\widehat{C}}(x))} && (|C^\star|=k)\\
			&\leq && \frac{\alpha(k+1)\cdot \OPT_z}{\eps'\cdot \OPT_z/2} && (\sum_{x\in X} d^z(x,\pi_{\widehat{C}}(x))> \frac{\eps'\cdot \OPT_z}{2})\\
			&=&&\frac{2\alpha (k+1)}{\eps'}. &&
		\end{split}
	\end{eqnarray*}
	Hence, $N_1 = \Omega\left(\frac{\calG^2}{(\eps')^2}\cdot ((\eps')^{-1} k^3\log \frac{k}{\eps'}+\log\frac{1}{\delta})\right)$ as stated in Theorem~\ref{thm:subspace}.
	By Theorem~\ref{thm:subspace}, we have that with probability at least $1-\delta/8$,
	\begin{eqnarray*}
		\begin{split}
			\min_{C\in \calC}\sum_{x\in D_1} u(x)\cdot d^z(x,\pi_C(x)) &	\geq && (1-\frac{\eps'}{2\alpha})\cdot \min_{C\in \calC}\sum_{x\in X} d^z(x,\pi_C(x)) && \\
			&\geq && \min_{C\in \calC}\sum_{x\in X} d^z(x,\pi_C(x))-\frac{\eps'}{2\alpha}\cdot \cost_z(X,C^\star). && \\
			&\geq &&\min_{C\in \calC}\sum_{x\in X} d^z(x,\pi_C(x))-\frac{\eps'\cdot \OPT_z}{2}, && 
		\end{split}
	\end{eqnarray*}
	which completes the proof of Inequality~\eqref{eq:case1}.

	Next, we prove that with probability at least $1-\delta/8$, the following property holds:
	\begin{align}
	\label{eq:case1_2}
	\sum_{x\in D_1} u(x)\cdot d^z(x,\pi(x)) \leq \sum_{x\in X} d^z(x,\pi(x)) + \frac{\eps' \cdot \OPT_z}{2}.
	\end{align}
	For each sample $x\in D_1$, we note that
	\begin{eqnarray}
	\label{eq:variance}
	\begin{split}
	|D_1|\cdot u(x)\cdot d^z(x,\pi(x))& = && \frac{\sum_{y\in X}\sigma_1(y)}{\sigma_1(x)}\cdot d^z(x,\pi(x)) && \\
	&\geq &&\frac{2^{2z+3} \alpha^2k}{\frac{2^{2z+2} \alpha\cdot d^z(x,\pi(x))}{\OPT_z}}\cdot d^z(x,\pi(x)) && (\text{Ineqs.~\eqref{eq:sampling2} and~\eqref{eq:sampling3}}) \\
	&= && 2\alpha k \cdot \OPT_z. &&
	\end{split}
	\end{eqnarray}
	Then by Hoeffding's inequality, we have that
	\begin{align*}
	&	&& \Pr\left[\left|\sum_{x\in X} d^z(x,\pi(x))-\sum_{x\in D_1} u(x)\cdot d^z(x,\pi(x))\right|\geq \frac{\eps'\cdot \OPT_z}{2}\right] &&\\
	&\leq && 2\cdot \exp\left(-\frac{2(\frac{\eps'\cdot \OPT_z}{2})^2}{N_1\cdot (2\alpha k \cdot \OPT_z)^2}\right)&& (\text{Ineq.~\eqref{eq:variance}})\\ 
	&\leq &&\frac{\delta}{8}, && (\text{value of $N_1$})
	\end{align*}
	which completes the proof of Inequality~\eqref{eq:case1_2}.

	Now we are ready to prove the lemma. 
	With probability at least $1-\delta/4$, Inequalities~\eqref{eq:case1} and~\eqref{eq:case1_2} hold (union bound). 
	Then for any $k$-center set $C\in \calC$,
	\begin{eqnarray*}
		\begin{split}
			&	&& \sum_{x\in D_1} u(x)\cdot d^z(x,\pi(x)) - \sum_{x\in D_1} u(x)\cdot d^z(x,\pi_C(x)) &&\\
			&\leq && \sum_{x\in D_1} u(x)\cdot d^z(x,\pi(x)) - \min_{C'\in \calC}\sum_{x\in D_1} u(x)\cdot d^z(x,\pi_{C'}(x)) && \\
			&\leq && \sum_{x\in X} d^z(x,\pi(x))+\frac{\eps'\cdot \OPT_z}{2} &&\\
			& &&-\min_{C'\in \calC}\sum_{x\in X} d^z(x,\pi_{C'}(x)) + \frac{\eps' \cdot \OPT_z}{2} && (\text{Ineqs.~\eqref{eq:case1} and~\eqref{eq:case1_2}})\\
			&\leq &&\sum_{x\in X} d^z(x,\pi(x)) - \sum_{x\in X} d^z(x,\pi_C(x)) + \eps'\cdot \OPT_z &&  \\
			&\leq && 2\eps' \cdot \OPT_z.&& (\text{by assumption})
		\end{split}
	\end{eqnarray*}
	By Lemma~\ref{lm:projection}, we complete the proof of Lemma~\ref{lm:preserve}.
\end{proof}

\subsection{Geometric observations}
\label{sec:discussion}

Note that the first stage of Algorithm~\ref{alg:coreset} is almost the same to the second framework stated in Theorem~\ref{thm:fl11_bfl16} except that the coreset size $N_1$ is independent of $d$. 
In this section, we discuss the geometric observations that makes $N_1$ samples enough for an $\eps$-coreset.

Construct a subspace $\Gamma\subsetneq \R^d$ of dimension $\poly(k/\eps)$ by~\cite[Algorithm 1]{sohler2018strong}, which leads to Inequality~\eqref{eq:Q1} by Lemma~\ref{lm:projection}.
Recall that $\Gamma'$ is obtained from $\Gamma$ by appending an arbitrary dimension in $\R^d$ that is orthogonal to $\Gamma$.
Also recall that $\calC_\Gamma$ denotes the collection of $k$-center sets $C\subset \Gamma'$.
We have the following geometric observations implying that we only need to approximately preserve all \kzC objectives with respect to $k$-center sets in $\Gamma'$ instead of the whole $\calC$.
This reduces the function dimension of $k$-center sets from $O(dk)$ to $\poly(k/\eps)$.
The first observation follows from Claim~\ref{claim:alg}.

\begin{observation}[\bf{Representativeness property for $X$}]
	\label{ob:equivalent}
	$X$ satisfies the $\eps$-representativeness property with respect to $\Gamma$.
\end{observation}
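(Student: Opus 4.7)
The plan is to show the representativeness property by combining the projection guarantee of Lemma~\ref{lm:projection} with the geometry of the equivalence relation $\sim_\Gamma$. First, I would note that the subspace $\Gamma$ produced by the slightly modified \cite{sohler2018strong} construction satisfies the hypothesis of Lemma~\ref{lm:projection} with $A = X$ and uniform weights. Applying that lemma immediately yields that for every $k$-center set $C \in \calC$,
\[
\sum_{x \in X} \bigl(d^2(\pi(x),C) + d^2(x,\pi(x))\bigr)^{z/2} \in (1 \pm \eps') \cdot \cost_z(X,C),
\]
where $\pi = \pi_\Gamma$ and $\eps'$ can be taken as $\eps/3$ by rescaling the input parameter.

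The second step is the geometric heart of the argument. For any center $c \in \R^d$, write $c = \pi(c) + (c - \pi(c))$, where the second summand is orthogonal to $\Gamma$. Since $\pi(x) \in \Gamma$ as well, $\pi(x) - c = (\pi(x) - \pi(c)) + (\pi(c) - c)$ is an orthogonal decomposition, so Pythagoras gives
\[
d^2(\pi(x),c) = d^2(\pi(x),\pi(c)) + d^2(c,\pi(c)).
\]
Now, for $C_1 = (c_1,\ldots,c_k)$ and $C_2 = (c'_1,\ldots,c'_k)$ in the same class $\Delta_C^{(\Gamma)}$, Definition~\ref{def:equivalent} enforces $\pi(c_i) = \pi(c'_i)$ and $d(c_i,\pi(c_i)) = d(c'_i,\pi(c'_i))$ for every $i \in [k]$. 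Combining these equalities with the Pythagorean identity gives $d(\pi(x),c_i) = d(\pi(x),c'_i)$ for all $i$ and all $x$, and therefore $d(\pi(x), C_1) = d(\pi(x), C_2)$. In particular the quantity $\sum_{x\in X}(d^2(\pi(x),C) + d^2(x,\pi(x)))^{z/2}$ is invariant across $\Delta_C^{(\Gamma)}$.

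Finally, I would chain these observations. For $C_1, C_2 \in \Delta_C^{(\Gamma)}$,
\[
\cost_z(X,C_1) \;\in\; (1 \pm \eps')\!\!\sum_{x \in X}\!\! \bigl(d^2(\pi(x),C_1) + d^2(x,\pi(x))\bigr)^{z/2} \;=\; (1 \pm \eps')\!\! \sum_{x \in X}\!\! \bigl(d^2(\pi(x),C_2) + d^2(x,\pi(x))\bigr)^{z/2},
\]
and applying Lemma~\ref{lm:projection} again in the reverse direction to $C_2$ gives $\cost_z(X,C_1) \in (1 \pm 3\eps') \cdot \cost_z(X,C_2) \subseteq (1 \pm \eps) \cdot \cost_z(X,C_2)$, which is the $\eps$-representativeness property. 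The only subtle point is the Pythagorean step; since $\pi(x) \in \Gamma$ and $c - \pi(c) \in \Gamma^\perp$ by definition of the orthogonal projection, this decomposition is automatic, and no further work is needed.
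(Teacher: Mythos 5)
Your proposal is correct and follows essentially the same route as the paper: apply Lemma~\ref{lm:projection} to $X$ (using the guarantee of the modified~\cite{sohler2018strong} construction of $\Gamma$) and then use the invariance of $\sum_{x\in X}\left(d^2(\pi(x),C)+d^2(x,\pi(x))\right)^{z/2}$ within an equivalence class, exactly as in Claim~\ref{claim:alg}. The only difference is that you spell out, via the Pythagorean decomposition, why Definition~\ref{def:equivalent} forces $d(\pi(x),C_1)=d(\pi(x),C_2)$, a step the paper leaves implicit; this is a welcome clarification rather than a deviation.
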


\noindent
Moreover, the representativeness property can be generalized to subsets of $X$ that are weak-coresets for the $(k,z)$-subspace approximation problem.

\begin{observation}[\bf{Representativeness property for weighted subsets of $X$}]
	\label{ob:equivalent_subset}
	Let $S$ be a weighted subset of $X$ together with a weight function $w: S\rightarrow \R_{\geq 0}$ and $\eps' = \frac{ \eps^{z+3}}{6\cdot (84z)^{2z}}$.
	$S$ satisfies the $\eps$-representativeness property with respect to $\Gamma$ if the following holds:
	\begin{enumerate}
		\item $S$ is an $\eps'$-weak-coreset for the $(k,z)$-subspace approximation problem in $\Gamma^\perp$.
		\item $S$ approximately preserves the $l_z$-subspace approximation objective with respect to $\Gamma$, i.e., 
		\[
		\sum_{x\in S} w(x)\cdot d^z(x,\Gamma) \in \sum_{x\in X} d^z(x,\Gamma)\pm \eps'\cdot \OPT_z.
		\]
	\end{enumerate}
\end{observation}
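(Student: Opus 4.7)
The plan is to reduce the representativeness property for $S$ to an application of Lemma~\ref{lm:projection} on the weighted point set $(S,w)$. Once Lemma~\ref{lm:projection} applies to $S$, its conclusion says that for any $k$-center set $C \in \calC$,
\[
\sum_{x\in S} w(x) \left(d^2(\pi(x),C)+d^2(x,\pi(x))\right)^{z/2} \in (1\pm \eps)\sum_{x\in S} w(x)\, d^z(x,C).
\]
For any two $C_1,C_2$ lying in the same equivalence class $\Delta_C^{(\Gamma)}$, Definition~\ref{def:equivalent} forces the left hand side to be \emph{identical} for $C_1$ and $C_2$ (the projections and projection distances of centers coincide). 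Chaining the two approximations then yields $\sum_{x\in S}w(x)\,d^z(x,C_1)\in(1\pm O(\eps))\sum_{x\in S}w(x)\,d^z(x,C_2)$, which is exactly the $\eps$-representativeness property after a routine rescaling of $\eps$.

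Thus the core step is to verify the projection hypothesis of Lemma~\ref{lm:projection} for $(S,w)$: for every $C\in\calC$,
\[
\sum_{x\in S} w(x)\bigl(d^z(x,\pi(x))-d^z(x,\pi_C(x))\bigr)\leq 2\eps'\cdot\OPT_z.
\]
I would bound the left hand side by the ``relaxation''
\[
\sum_{x\in S} w(x)\, d^z(x,\pi(x)) - \min_{C'\in\calC}\sum_{x\in S}w(x)\, d^z(x,\pi_{C'}(x)),
\]
and handle the two pieces using the two hypotheses in turn. Condition 2 directly controls the first piece by $\sum_{x\in X}d^z(x,\pi(x))+\eps'\OPT_z$. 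For the second piece, I would use the standard identification (made explicit in the proof of Lemma~\ref{lm:preserve}) that viewing $X$ inside $\Gamma^\perp$ makes each center set $C$ correspond to a subspace $H\subseteq \Gamma^\perp$ of dimension at most $k$ with $d(x,\pi_C(x))=d(\pi_{\Gamma^\perp}(x),H)$; under this identification condition 1 gives
\[
\min_{C'\in\calC}\sum_{x\in S}w(x)\, d^z(x,\pi_{C'}(x))\ge(1-\eps')\min_{C'\in\calC}\sum_{x\in X}d^z(x,\pi_{C'}(x)).
\]
Since the latter minimum is bounded above by $\sum_{x\in X}d^z(x,\pi(x))\le\cost_z(X,C^\star)=O(\OPT_z)$ (using $C^\star\subseteq\Gamma$), the multiplicative $\eps'$ error translates into an additive $O(\eps')\cdot \OPT_z$ loss.

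Combining the two bounds, the projection hypothesis for $S$ reduces to
\[
\sum_{x\in X}d^z(x,\pi(x))-\min_{C'\in\calC}\sum_{x\in X}d^z(x,\pi_{C'}(x))\le O(\eps')\cdot\OPT_z,
\]
which is the Sohler--Woodruff property of the ambient subspace $\Gamma$ (property P1 in the proof of Theorem~\ref{thm:upper}), and which is already built into the way $\Gamma$ is chosen in the discussion preceding the observation. Applying Lemma~\ref{lm:projection} then delivers the representativeness property.

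The main technical nuisance I anticipate is careful bookkeeping between the two different ``optima'' at play: Lemma~\ref{lm:projection} is naturally phrased with $\OPT_z^{(S)}$ on the right hand side, whereas the bounds above naturally produce $\OPT_z^{(X)}$. The cleanest route is to note that under conditions 1 and 2 the weighted set $S$ approximately preserves the cost of $C^\star$ in $\Gamma'$, so $\OPT_z^{(S)}=\Theta(\OPT_z)$; alternatively, one can simply re-run the argument inside the proof of Lemma~\ref{lm:projection} with $\OPT_z$ replaced by $\OPT_z^{(X)}$, since the lemma only uses the optimum as a normalizing scale. Either way the slack is absorbed into the definition $\eps'=\eps^{z+3}/(6\cdot(84z)^{2z})$, which leaves a comfortable constant margin.
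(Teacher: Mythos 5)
Your proposal is correct and follows essentially the same route as the paper: the paper's proof simply invokes the argument inside Lemma~\ref{lm:preserve} (condition 2 controls $\sum_{x\in S}w(x)\,d^z(x,\pi(x))$, the weak-coreset condition controls $\min_{C'}\sum_{x\in S}w(x)\,d^z(x,\pi_{C'}(x))$ via the identification of center sets with $k$-dimensional subspaces of $\Gamma^\perp$, and property P1 of $\Gamma$ closes the gap), then applies Lemma~\ref{lm:projection} to $(S,w)$ and the equivalence-class invariance from Claim~\ref{claim:alg}. Your remark about normalizing by $\OPT_z^{(S)}$ versus $\OPT_z^{(X)}$ in Lemma~\ref{lm:projection} is a legitimate bookkeeping point that the paper glosses over, and either of your two fixes handles it.
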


\begin{proof}
	By the proof of Lemma~\ref{lm:preserve}, these two conditions imply that for any $k$-center set $C\in \calC$,
	\[
	\sum_{x\in S} w(x)\cdot d^z(x,\pi(x)) - \sum_{x\in S} w(x)\cdot d^z(x,\pi_C(x))\leq O(\eps')\cdot \OPT_z,
	\]
	where $\pi$ and $\pi_C$ denote the projection from $X$ to $\Gamma$ and $\mathrm{Span}(\Gamma\cup C)$ respectively.
	Then by Lemma~\ref{lm:preserve}, Inequality~\eqref{eq:coreset_preserve} holds.
	By Claim~\ref{claim:alg}, we complete the proof.
\end{proof}

\noindent
Now suppose we have a weighted subset $S\subseteq X$ that satisfies the $\eps$-representativeness property.
By Definition~\ref{def:representativeness}, if $S$ approximately preserves the \kzC objective for some $k$-center set $C\in \calC_\Gamma$ over $X$, then we directly have that $S$ approximately preserves all \kzC objectives with respect to $k$-center sets within the whole equivalence class $\Delta^\Gamma_C$.
Hence, we only need to consider those $k$-center sets in $\Gamma$ instead of $\R^d$ and conclude the following corollary. 
The corollary indicates that coreset for clustering in low dimensional subspace plus weak-coreset for subspace approximation implies coreset for clustering in $\R^d$.

\begin{corollary}[\bf{Dimension reduction for \kzC}]
	\label{cor:dimension_reduction_clustering}
	For every dataset $X$ of $n$ points in $\R^d$, $\eps,\delta \in (0,0.5)$, constant $z\geq 1$ and integer $k \geq 1$, let $\eps' = \frac{ \eps^{z+3}}{6\cdot (84z)^{2z}}$.
	There exists a subspace $\Gamma\subsetneq \R^d$ of dimension $O(k/\eps')$ such that for any weighted point set $S\subseteq X$ together with a weight function $w: S\rightarrow \R_{\geq 0}$, $S$ is an $O(\eps)$-coreset for \kzC if 
	\begin{enumerate}
		\item $S$ is an $\eps$-coreset for \kzC in subspace $\Gamma'$; 
		\item $S$ is an $\eps'$-weak-coreset for the $(k,z)$-subspace approximation problem in $\Gamma^\perp$.
		\item $S$ approximately preserves the $l_z$-subspace approximation objective with respect to $\Gamma$, i.e., 
		\[
		\sum_{x\in S} w(x)\cdot d^z(x,\Gamma) \in \sum_{x\in X} d^z(x,\Gamma)\pm \eps'\cdot \OPT_z.
		\]
	\end{enumerate}
\end{corollary}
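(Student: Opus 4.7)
The plan is to instantiate the subspace $\Gamma$ exactly as in the proof of Theorem~\ref{thm:upper}: namely, apply the modified version of \cite[Algorithm 1]{sohler2018strong} in which the starting subspace is taken to contain an $O(1)$-approximate center set $C^\star$. This gives a subspace $\Gamma$ of dimension $O(k/\eps')$ with $C^\star \subset \Gamma$ for which Inequality~\eqref{eq:projection1} of Lemma~\ref{lm:projection} holds (up to a factor of $1/2$ in $\eps'$). Consequently, by Lemma~\ref{lm:projection}, every \kzC objective over $X$ can be expressed, up to a $(1\pm\eps)$-factor, via the projections $\pi_\Gamma(x)$ plus the perpendicular distances $d(x,\pi_\Gamma(x))$, which is the content of Observation~\ref{ob:equivalent}: $X$ itself satisfies the $\eps$-representativeness property with respect to $\Gamma$.

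The next step is to lift this same property to the candidate $S$. Here I would invoke Observation~\ref{ob:equivalent_subset}, whose hypotheses are precisely conditions~2 and~3 of the corollary: condition~2 says $S$ is an $\eps'$-weak-coreset for $(k,z)$-subspace approximation in $\Gamma^\perp$, and condition~3 says $S$ preserves the mass of the perpendicular distances to $\Gamma$. The observation then concludes that $S$ satisfies the $\eps$-representativeness property with respect to $\Gamma$. At this stage we have both $X$ and $S$ representative with respect to the same low-dimensional $\Gamma$.

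The final step is a three-hop chain along an equivalence class. Let $C \in \calC$ be arbitrary. Because $\Gamma'$ enlarges $\Gamma$ by an orthogonal dimension that absorbs all perpendicular magnitudes, the discussion after Definition~\ref{def:equivalent} shows that $C$ lies in some equivalence class $\Delta^{(\Gamma)}_{C'}$ with $C' \in \calC_\Gamma$. Then I would chain:
\begin{align*}
\cost_z(X,C) &\in (1\pm\eps)\cdot \cost_z(X,C') && \text{(representativeness of $X$)}\\
&\in (1\pm\eps)^2\cdot \sum_{x\in S} w(x)\, d^z(x,C') && \text{(condition~1, since $C'\subset \Gamma'$)}\\
&\in (1\pm\eps)^3\cdot \sum_{x\in S} w(x)\, d^z(x,C) && \text{(representativeness of $S$)}.
\end{align*}
Expanding the product yields the desired $(1\pm O(\eps))$-approximation uniformly over $C\in\calC$, establishing that $S$ is an $O(\eps)$-coreset for \kzC.

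The main conceptual obstacle here is ensuring that the equivalence class reasoning actually covers all of $\calC$ (not just $k$-center sets already in $\Gamma'$); this is where the definition of $\Gamma'$ as $\Gamma$ plus one extra orthogonal direction is crucial, since it lets every center of every $C\in\calC$ be matched to a representative in $\Gamma'$ with identical projection and identical perpendicular distance. Everything else is a routine chaining of the two representativeness statements with the hypothesis of being a coreset inside $\Gamma'$, so no new technical work beyond Observations~\ref{ob:equivalent} and~\ref{ob:equivalent_subset} is required.
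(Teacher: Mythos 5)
Your proposal is correct and follows essentially the same route as the paper: the corollary is exactly the paper's repackaging of Observation~\ref{ob:equivalent} (representativeness of $X$ for the modified Sohler--Woodruff subspace $\Gamma$), Observation~\ref{ob:equivalent_subset} (whose hypotheses are conditions~2 and~3, yielding representativeness of $S$), and the same three-hop chain through an equivalence-class representative $C'\in\calC_\Gamma$ used in the proof of Theorem~\ref{thm:upper}, with condition~1 supplying the middle link. The constants ($\eps$ versus $2\eps$ in the representativeness steps) differ only in the hidden $O(\eps)$ factor, so no gap remains.
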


\noindent
In fact, the above corollary can be generalized to other shape fitting problems.
The main reason is that Lemma~\ref{lm:projection} not only holds for $k$-center sets but also holds for an arbitrary non-empty set that is contained in a $k$-dimensional subspace by~\cite[Theorem 10]{sohler2018strong}.
For instance, if we consider $\calP$ that is the collection of all $j$-flats ($j\leq k$), then Corollary~\ref{cor:dimension_reduction_clustering} can be translated to a dimension reduction result for subspace approximation as follows.

\begin{corollary}[\bf{Dimension reduction for subspace approximation}]
	\label{cor:dimension_reduction_subspace}
	For every dataset $X$ of $n$ points in $\R^d$, $\eps,\delta \in (0,0.5)$, constant $z\geq 1$ and integer $k \geq 1$, let $\eps' = \frac{ \eps^{z+3}}{6\cdot (84z)^{2z}}$.
	There exists a subspace $\Gamma\subsetneq \R^d$ of dimension $O(k/\eps')$ such that for any weighted point set $S\subseteq X$ together with a weight function $w: S\rightarrow \R_{\geq 0}$, $S$ is an $O(\eps)$-coreset for $(k,z)$-subspace approximation if 
	\begin{enumerate}
		\item $S$ is an $\eps$-coreset for $(k,z)$-subspace approximation in subspace $\Gamma'$; 
		\item $S$ is an $\eps'$-weak-coreset for the $(k,z)$-subspace approximation problem in $\Gamma^\perp$.
		\item $S$ approximately preserves the $l_z$-subspace approximation objective with respect to $\Gamma$, i.e., 
		\[
		\sum_{x\in S} w(x)\cdot d^z(x,\Gamma) \in \sum_{x\in X} d^z(x,\Gamma)\pm \min_{P\in \calP} \sum_{x\in X} d^z(x, P).
		\] 
	\end{enumerate}
\end{corollary}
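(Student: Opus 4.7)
The plan is to mirror the template used in Corollary~\ref{cor:dimension_reduction_clustering} (itself distilled from the proof of Theorem~\ref{thm:upper}), replacing the role of $k$-center sets $C \in \calC$ throughout by $k$-flats $P \in \calP$, and relying on the fact, noted in the paper following Corollary~\ref{cor:dimension_reduction_clustering}, that Lemma~\ref{lm:projection} holds (via \cite[Theorem~10]{sohler2018strong}) for every non-empty set contained in a $k$-dimensional subspace, not only for $k$-center sets.

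First, I would construct $\Gamma$ by running the variant of~\cite[Algorithm~1]{sohler2018strong} that initializes with an $O(1)$-approximate optimal $k$-flat $P^\star$ for $(k,z)$-subspace approximation over $X$. This produces a subspace of dimension $O(k/\eps')$ containing $P^\star$ and satisfying, for every $P \in \calP$,
\[
\sum_{x \in X} \bigl(d^z(x,\pi(x)) - d^z(x,\pi_P(x))\bigr) \leq \tfrac{\eps'}{2} \cdot \min_{Q \in \calP} \sum_{x \in X} d^z(x,Q),
\]
where $\pi$ and $\pi_P$ denote the projections from $\R^d$ to $\Gamma$ and $\mathrm{Span}(\Gamma \cup P)$ respectively. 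Applying the generalized Lemma~\ref{lm:projection} with $A = X$ then yields an analogue of Inequality~\eqref{eq:Q1} for $k$-flats, which in turn gives a representativeness property for $X$ (with respect to an equivalence relation on $\calP$ defined analogously to Definition~\ref{def:equivalent} using the encoding $P \mapsto (\pi(P),\, \text{orthogonal offsets})$): equivalent $k$-flats induce nearly the same subspace-approximation objective on $X$.

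Second, I would establish the corresponding representativeness property for $S$. Conditions~2 and~3, viewed together, are the exact analogue of the two ingredients used in Lemma~\ref{lm:preserve}: Condition~3 is the counterpart of Inequality~\eqref{eq:case1_2}, while Condition~2 (a weak-coreset for $(k,z)$-subspace approximation in $\Gamma^\perp$) plays the role of Inequality~\eqref{eq:case1}, since after identifying $\R^d = \Gamma \oplus \Gamma^\perp$ the quantity $\min_{P \in \calP} \sum_x w(x) d^z(x, \pi_P(x))$ is exactly the subspace-approximation objective in $\Gamma^\perp$. Combining them as in the proof of Lemma~\ref{lm:preserve} gives, for every $P \in \calP$,
\[
\sum_{x \in S} w(x) \bigl(d^z(x,\pi(x)) - d^z(x,\pi_P(x))\bigr) \leq O(\eps') \cdot \OPT_{\text{sub}},
\]
and a second invocation of the generalized Lemma~\ref{lm:projection}, this time with $A = S$, delivers the representativeness property for $S$.

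Finally, I would close the argument with the standard three-step chain exactly as in the proof of Theorem~\ref{thm:upper}: given an arbitrary $P \in \calP$, pick a representative $P' \subseteq \Gamma'$ in the same equivalence class; then
\[
\sum_{x \in S} w(x) d^z(x, P) \approx_\eps \sum_{x \in S} w(x) d^z(x, P') \approx_\eps \sum_{x \in X} d^z(x, P') \approx_\eps \sum_{x \in X} d^z(x, P),
\]
where the two outer approximations come from the representativeness of $S$ and $X$ respectively, and the middle one uses Condition~1 (that $S$ is an $\eps$-coreset for $(k,z)$-subspace approximation in $\Gamma'$). Tracking the constants gives an $O(\eps)$-coreset after rescaling $\eps' = \Theta(\eps^{z+3}/(84z)^{2z})$. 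The main obstacle is the formal verification that Definition~\ref{def:equivalent} and Observation~\ref{ob:equivalent_subset} extend cleanly from $k$-center sets to $k$-flats; in particular, one must check that the representative $P' \subseteq \Gamma'$ promised by the new equivalence relation always exists and that the generalized form of Lemma~\ref{lm:projection} from \cite[Theorem~10]{sohler2018strong} really does give the two-sided estimate~\eqref{eq:projection2} when the centers are replaced by an arbitrary $k$-flat. Once these extensions are recorded, the rest of the argument is formally identical to the clustering case.
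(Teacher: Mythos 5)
Your plan is essentially the paper's own argument: the paper offers no separate proof of this corollary beyond the remark that Lemma~\ref{lm:projection} (via \cite[Theorem 10]{sohler2018strong}) applies to any non-empty set contained in a $k$-dimensional subspace, so that the machinery of Theorem~\ref{thm:upper}, Lemma~\ref{lm:preserve} and Corollary~\ref{cor:dimension_reduction_clustering} transfers with $k$-center sets replaced by $k$-flats; your reading of Condition~2 as the analogue of Inequality~\eqref{eq:case1}, Condition~3 as the analogue of Inequality~\eqref{eq:case1_2}, and the closing three-step chain through a representative in the same equivalence class is exactly that translation.

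One concrete warning about the step you flag as ``to be checked'': with $\Gamma'$ defined as in Definition~\ref{def:equivalent} by appending a \emph{single} orthogonal direction $u$, the equivalent representative $P'\subseteq\Gamma'$ need not exist for flats. For $y\in\Gamma$ one has $d^2(y,P)=\|y\|^2-y^\top \Pi_P y$, and the quadratic form $y\mapsto y^\top \Pi_P y$ restricted to $\Gamma$ can have up to $k$ eigenvalues strictly inside $(0,1)$ (e.g.\ $\Gamma=\mathrm{span}(e_1,e_2)$ and $P=\mathrm{span}\left((e_1+e_3)/\sqrt{2},(e_2+e_4)/\sqrt{2}\right)$ gives eigenvalues $1/2,1/2$). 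By contrast, if $P'\subseteq\Gamma\oplus\mathrm{span}(u)$ has an orthonormal basis $v_i=a_i+c_iu$ with $a_i\in\Gamma$, then the restricted form is $\sum_i a_ia_i^\top$ whose Gram matrix is $I-cc^\top$, so at most one eigenvalue differs from $1$; hence no $P'$ in $\Gamma'$ can reproduce $d(\pi(x),P)$ for all projections, and the middle link of your chain breaks. The fix is harmless: take $\Gamma'$ to be $\Gamma$ augmented by $k$ mutually orthogonal directions orthogonal to $\Gamma$; choosing $a_i=\sqrt{\lambda_i}w_i$ and pairwise orthogonal offsets of lengths $\sqrt{1-\lambda_i}$ realizes any admissible restricted form, the dimension stays $O(k/\eps')$, and the rest of your argument (and the constants' bookkeeping) goes through unchanged. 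This adjustment is needed for the paper's phrasing of Condition~1 just as much as for your sketch, so it is a caveat on the statement rather than a defect specific to your proof.
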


\noindent
Similarly, by the \FL framework, this corollary provides an $\eps$-coreset for $(k,z)$-subspace approximation of size $\poly(k/\eps)$, which matches the result in~\cite{sohler2018strong}.
Moreover, the coreset size can be further decreased by applying terminal embedding similar to Theorem~\ref{thm:reduction}.

\subsection{Generalization of Theorem~\ref{thm:coreset} to $\ell_p$-metrics}
\label{sec:generalization}

Given $p\geq 1$, $\ell_p$-metric is induced by distance function $d_p: \R^d\times \R^d\rightarrow \R_{\geq 0}$, where for any two points $x,y\in \R^d$,
\begin{align}
\label{eq:lp_distance}
d_p(x,y) := \left(\sum_{i\in [d]} |x_i-y_i|^p \right)^{1/p}.
\end{align}
The formulation captures classic distances, including Manhattan distance (where $p=1$), Euclidean distance (where $p=2$) and Chebyshev distance (where $p=\infty$).
With $\ell_p$-metric, the \kzC objective with respect to some $C\in \calC$ is defined as follows
\[
\cost_{p,z}(X, C) := \sum_{x \in X}{d_p^z(x, C)},
\]
where, throughout, $d_p^z$ denotes $\ell_p^d$-distance raised to power $z\ge 1$,
and 
\[
d_p(x, C):=\min\left\{d_p(x,c): c\in C\right\}.
\]
We can generalize Definition~\ref{def:coreset} to $\ell_p$-metrics.
\begin{definition}[\bf{Coresets for \kzC with $\ell_p$-metric in $\R^d$}]
	\label{def:coreset_lp}
	Given a collection $X\subseteq \R^d$ of $n$ weighted points and $\eps\in (0,1)$, an $\eps$-coreset for \kzC in $\ell_p^d$ metric spaces is a subset $S \subseteq \R^d$ with weights $w : S \rightarrow \R_{\geq 0}$ such that for any $k$-center set $C\in \calC$, the \kzC objective with respect to $C$ is $\eps$-approximately preserved, i.e.,
	\begin{equation*}
	\sum_{x \in S}w(x) \cdot d_p^z(x, C)
	\in (1 \pm \eps) \cdot \cost_{p,z}(X, C).
	\end{equation*}
\end{definition}

\noindent
Note that Theorem~\ref{thm:coreset} considers the Euclidean distance where $p=2$ and we want to generalize Theorem~\ref{thm:coreset} to all $p\geq 1$.
In this section, we show that Theorem~\ref{thm:coreset} can be generalized to $\ell_p$-metrics for $1\leq p\leq 2$; see the following corollary.
The main idea is that for $1\leq p < 2$, there exists an isometric embedding from $\ell_p$ to $\ell_2$ square~\cite{kahane1993some}. 
By this idea, we can reduce the problem of constructing an $\eps$-coreset for \kzC with $\ell_p$-metric to constructing an $O(\eps)$-coreset for \ProblemName{$(k, 2z)$-Clustering} with $\ell_2$-metric.

\begin{corollary}[\bf{Coresets for \kzC with $\ell_p$-metrics ($1\leq p < 2$)}]
	\label{cor:coreset_lp}
	There exists a randomized algorithm that, for a given dataset $X$ of $n$ points in $\R^d$, integer $k \geq 1$, $1\leq p<2$, constant $z\geq 2$ and $\eps \in (0,0.5)$, with probability at least $1-\delta$, constructs an $\eps$-coreset for \kzC with $\ell_p$-metric of size  
	\[
	O\left(\min\left\{\eps^{-4z-2}, 2^{4z} \eps^{-4} k \right\} k\log k \log\frac{k}{\eps\delta}\right)
	\]
	and runs in time 
	\[
	O\left(ndk+nd \log(n/\delta) + k^2 \log^2 n + \log^2(1/\delta) \log^2 n\right).
	\]
\end{corollary}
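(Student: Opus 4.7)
The plan is to reduce the construction of an $\eps$-coreset for $(k,z)$-Clustering in the $\ell_p$-metric to that of an $O(\eps)$-coreset for $(k, 2z)$-Clustering in the $\ell_2$-metric, and then invoke Theorem~\ref{thm:coreset} with the parameter $2z$ in place of $z$. The starting point is the classical negative-type embedding~\cite{kahane1993some}: for every $1 \leq p < 2$, there exists a map $\phi: \R^d \to \calH$ into a Hilbert space satisfying $\|\phi(x) - \phi(y)\|_2^2 = d_p(x, y)$ for all $x, y \in \R^d$. Since we only care about $X$ and the $k$-center sets $C \subset \R^d$ of interest, the image may be taken in $\R^m$ for some finite $m$; crucially, $m$ never appears in the coreset size because Theorem~\ref{thm:coreset} is dimension-free.

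Raising the isometric identity to the power $z$ gives $d_p^z(x, c) = \|\phi(x) - \phi(c)\|_2^{2z}$, so the $(k, z)$-Clustering cost of $X$ in $\ell_p$ at any center set $C \subset \R^d$ equals the $(k, 2z)$-Clustering cost of $\phi(X)$ in $\ell_2$ at $\phi(C) \subset \R^m$. Apply Theorem~\ref{thm:coreset} with $z' := 2z$ and $\eps' := \eps/2$ to the embedded instance $\phi(X)$, producing a weighted subset $S' \subseteq \phi(X)$ of size $O(\min\{\eps^{-4z-2}, 2^{4z}\eps^{-4} k\} \cdot k \log k \log(k/(\eps\delta)))$ that is an $\eps'$-coreset for $(k, 2z)$-Clustering with $\ell_2$. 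Pulling $S'$ back through the bijection $X \to \phi(X)$ yields $S \subseteq X$ with the same weights. For any $k$-center set $C \subset \R^d$, the coreset guarantee of $S'$ at the valid $\R^m$-center set $\phi(C)$ together with the isometry gives
\[
\sum_{x \in S} w(x) \cdot d_p^z(x, C) = \sum_{y \in S'} w(y) \cdot \|y - \phi(C)\|_2^{2z} \in (1 \pm \eps') \cdot \sum_{y \in \phi(X)} \|y - \phi(C)\|_2^{2z} = (1 \pm \eps') \cdot \cost_{p,z}(X, C),
\]
which, upon rescaling $\eps$, is the required $\eps$-coreset property.

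The main obstacle I anticipate is that the Hilbert embedding $\phi$ is a priori only an existence statement and is not efficient to compute explicitly. I would bypass this by observing that Algorithm~\ref{alg:coreset} applied to $\phi(X)$ only accesses the embedded set through pairwise $\ell_2$-distances (and distances to an $O(1)$-approximate center set $C^\star$); by the isometry every such distance equals $\sqrt{d_p(\cdot, \cdot)}$ computed on $X$. Consequently, the initial $O(1)$-approximation step (via a Mettu-Plaxton-style method adapted to the $(k, 2z)$-Clustering objective, which uses only pairwise distances) and the subsequent two-stage importance sampling can be executed directly in the original $\ell_p$-space in time $O(ndk + nd\log(n/\delta) + k^2 \log^2 n + \log^2(1/\delta)\log^2 n)$, matching the runtime stated in Theorem~\ref{thm:coreset}. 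A union bound over the $1 - \delta$ success event of Theorem~\ref{thm:coreset} completes the proof.
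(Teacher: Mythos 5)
Your high-level strategy is the same as the paper's (embed $\ell_p$ isometrically into $\ell_2$-squared via~\cite{kahane1993some}, then invoke Theorem~\ref{thm:coreset} with exponent $2z$), but there is a genuine gap at the very first step. The coreset property quantifies over \emph{all} $k$-center sets $C\in(\R^d)^k$, so the map $\phi$ must satisfy $d_p(x,y)=\|\phi(x)-\phi(y)\|_2^2$ on all of $\R^d$, not merely on $X$ and finitely many centers. Such an embedding exists only into an infinite-dimensional Hilbert space: the snowflaked space $(\R^d,\sqrt{d_p})$ does not embed isometrically into any finite-dimensional Euclidean space, so the claim ``the image may be taken in $\R^m$ for some finite $m$'' is unjustified. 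Consequently Theorem~\ref{thm:coreset} cannot be invoked verbatim on $\phi(X)$ with center sets $\phi(C)$: the theorem and its analysis live in a finite-dimensional ambient space, and even if one works inside the span of $\phi(X)$, the embedded centers $\phi(c)$, $c\in\R^d$, generally lie outside that span, so the coreset guarantee you cite does not cover them without a further argument (e.g., replacing the component of each center orthogonal to the span by a vector of equal norm in one fixed extra dimension, as in the paper's $\Gamma'$/equivalence-class device, and ensuring $C^\star\subseteq\phi(X)$ so that the output pulls back into $X$). Your distance-only-access observation addresses the computability of Algorithm~\ref{alg:coreset}, but not this analysis gap.

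The paper's proof is structured precisely to avoid this obstacle, and that structure is what your proposal omits: it first shows that any coreset has total weight $\approx n$ (Inequality~\eqref{eq:total_weight}), handles center sets far from $X$ directly (Case 1), restricts the remaining centers to a bounded region $H$, replaces them by points of a finite $\tau$-net $H_\tau$ at the cost of a small multiplicative and additive error absorbed by $\eps\cdot\OPT_{p,z}$ (Inequalities~\eqref{eq:cor2}--\eqref{eq:net}), and only then applies the Kahane embedding to the \emph{finite} set $H_\tau$, where a finite-dimensional image is legitimate. To make your cleaner route rigorous you would have to either carry out the infinite-dimensional argument carefully (data span plus one orthogonal dimension, rotation of arbitrary centers into that space, and verification that the two-stage analysis, including the terminal-embedding step, is dimension-free in this setting) or adopt the paper's net-based discretization; as written, the proposal skips the step that constitutes essentially the entire technical content of the paper's proof of Corollary~\ref{cor:coreset_lp}.
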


\begin{proof}
	Let $\OPT_{p,z}$ denote the optimal \kzC objective of $X$ with $\ell_p$-metric.
	Let $\diam_p(X) = \max_{x,x'\in X} d_p(x,x')$ denote the $\ell_p$-diameter of $X$. 
	Let 
	\[
	H = \left\{y\in \R^d: d_p(y,X)\leq \frac{20z\cdot \diam_p(X)}{\eps}\right\}
	\] 
	denote the collection of points whose $\ell_p$-distance to $X$ is at most $\frac{20}{\eps}$ times the $\ell_p$-diameter of $X$.
	Given $\tau$ with
	\[
	0< \tau \leq \frac{\eps}{20^z}\cdot \left(\frac{\eps \OPT_{p,z}}{n} \right)^{1/z},
	\]
	let $H_\tau$ denote the combination of $X$ and a $\tau$-net of $H$.\footnote{A $\tau$-net $Q$ means that for any point $x\in H$, there exists a point $q\in Q$ such that $d(x,q)\leq \tau$.}
	By~\cite{kahane1993some}, let $f: H_\tau\rightarrow \R^m$ ($1\leq m\leq |H_\tau|$ is an integer) denote an isometric embedding from $\ell_p$ to $\ell_2$ square satisfying that for any $x,y\in H_\tau$,
	\[
	d_p(x,y) = d_2^2(f(x),f(y)).
	\]
	By definition, we know that $f$ is a one-to-one mapping.
	Then for any $k$-center set $C\subseteq H_\tau$, we have
	\begin{align}
	\label{eq:cor_embed}
	\cost_{p,z}(X,C) = \cost_{2,2z}(f(X),f(C)).
	\end{align}
	Suppose $S\subseteq f(X)$ together with a weight function $w:S\rightarrow \R_{\geq 0}$ is a $\eps$-coreset for \ProblemName{$(k, 2z)$-Clustering} of $f(X)$ with $\ell_p$-metric in $\R^m$.
	We first have that 
	\begin{align}
	\label{eq:total_weight}
	\sum_{x\in S} w(x)\in (1\pm 2\eps) \cdot n.
	\end{align}
	This is because letting $C = \left\{c\in \R^m\right\}$ for some point $c$ with $d_2(c, X)\rightarrow +\infty$ and $x^o$ be an arbitrary point in $S$, we have
	\begin{eqnarray*}
	\begin{split}
	\sum_{x\in S}w(x) \cdot d_2^{2z}(x^o, c) & \in && (1\pm \frac{\eps}{2})\cdot \sum_{x\in S} w(x)\cdot d_2^{2z} (x,C)  & (d_2^{2z}(x^o, c)\in (1\pm \frac{\eps}{2})\cdot d_2^{2z}(c, S))\\
	&\in && (1\pm \frac{3\eps}{2})\cdot \cost_{2,2z}(f(X),f(C)) & (\text{Definition~\ref{def:coreset_lp}}) \\
	& \in && (1\pm 2\eps)\cdot  n\cdot d_2^{2z}(x^o, c). & (d_2^{2z}(x^o, c)\in (1\pm \frac{\eps}{2})\cdot d_2^{2z}(c, X))
	\end{split}
	\end{eqnarray*}
	Then we consider the weighted subset $f^{-1}(S)$ together with a weight function $w': f^{-1}(S)\rightarrow \R_{\geq 0}$ satisfying that $w'(x) = w(f(x))$ for all $x\in f^{-1}(S)$.
	We claim that for any $C\in \calC$,
	\begin{align}
	\label{eq:cor_claim}
	\sum_{x\in f^{-1}(S)} w'(x)\cdot d_p^z(x,C) \in (1\pm O(\eps))\cdot \cost_{p,z}(X,C).
	\end{align}
	We discuss the following types of $k$-center sets $C\in \calC$.

	\paragraph{Case 1: $C\cap H=\emptyset$.} Let $x^o$ be an arbitrary point in $f^{-1}(S)$. 
	By definition of $H$, we have that for any $c\in C$ and $x\in X$,
	\begin{eqnarray}
	\label{eq:cor1}
	\begin{split}
	d_p^z(x,c) & \in && \left(d_p(x^o,c) \pm d_p(x^o,x)\right)^z & (\text{triangle ineq.}) \\
	& \in && \left(d_p(x^o,c) \pm \diam_p\right)^z & (\text{Defn. of $\diam_p$}) \\
	& \in && (1\pm \eps)\cdot d_p^z(x^o,c). & (\text{Defn. of $H$})
	\end{split}
	\end{eqnarray}
	Hence, we conclude that
	\begin{align*}
	\sum_{x\in f^{-1}(S)} w'(x)\cdot d_p^z(x,C) & \in && (1\pm \eps) \cdot \sum_{x\in S} w(x)\cdot d_p^z(x^o,C) & (\text{Ineq.~\eqref{eq:cor1} and Defn. of $w'$}) \\
	& \in && (1\pm 3\eps)\cdot n \cdot d_p^z(x^o,C) & (\text{Ineq.~\eqref{eq:total_weight}}) \\
	& \in && (1\pm 4\eps)\cdot \cost_{p,z}(X,C). & (\text{Ineq.~\eqref{eq:cor1}})
	\end{align*}

	\paragraph{Case 2: $C\cap H\neq \emptyset$.} For any $x\in X$, if the closest center to $x$ in $C$ is within $C\cap H$, then we have 
	\[
	d_p^z(x, C\cap H) = d_p^z(x,C).
	\]
	Otherwise, letting $c(x)$ be the closest center to $x$ in $C\cap H$, there must exist point $x^o\in X$ such that 
	\[
	d_p(x^o, c(x)) \leq \frac{20z\cdot \diam_p(X)}{\eps}
	\]
	by the definition of $H$.
	Then we have
	\begin{align*}
	d_p^z(x, C\cap H) &\leq && \left( d_p(x, x^o) + d_p(x^o, c(x)) \right)^z & (\text{triangle ineq.}) \\
	& \leq && \left(\frac{20z\cdot \diam_p(X)}{\eps} + \diam_p(X) \right)^z & (d_p(x^o, c(x)) \leq \frac{20z\cdot \diam_p(X)}{\eps}) \\
	& \leq && (1+\eps)\cdot d_p^z(x,C). & (d_p(x, c(x)) \geq \frac{20z\cdot \diam_p(X)}{\eps}) 
	\end{align*}
	Overall, we have that
	\begin{align}
	\label{eq:cor2}
	\cost_{p,z}(X,C)\leq \cost_{p,z}(X,C\cap H)\leq (1+\eps)\cdot \cost_{p,z}(X,C),
	\end{align}
	and
	\begin{align}
	\label{eq:cor3}
	\sum_{x\in f^{-1}(S)} w'(x)\cdot d_p^z(x, C)\leq \sum_{x\in f^{-1}(S)} w'(x)\cdot d_p^z(x, C\cap H)\leq (1+\eps)\cdot \sum_{x\in f^{-1}(S)} w'(x)\cdot d_p^z(x, C).
	\end{align}
	Next, for each $c\in C$, let $c'$ denote its closest point in $H_\tau$. 
	Let $C'$ be the collection of these points $c'$.
	For each $x\in X$,  we have that if	$d_p^z(x,C\cap H)> \frac{\eps \OPT_{p,z}}{n}$,
	then 
	\begin{align*}
	d_p^z(x,C') &\in && \left(d_p(x,C\cap H)\pm \tau\right)^z  & (\text{triangle ineq. and Defn. of $C'$}) \\
	& \in && (1\pm \eps)\cdot d_p^z(x,C\cap H). & (\tau \leq \frac{\eps}{10z}\cdot d_p(x,C\cap H)) 
	\end{align*}
	Thus, we conclude that
	\begin{align}
	\label{eq:net}
	d_p^z(x,C\cap H)  \in (1\pm 2\eps)\cdot d_p^z(x,C') \pm \frac{\eps \OPT_{p,z}}{n}.
	\end{align}
	This implies that
	\begin{eqnarray}
	\label{eq:cor4}
	\begin{split}
	\cost_{p,z}(X,C) &\in && (1\pm \eps)\cdot \cost_{p,z}(X, C\cap H) &  (\text{Ineq.~\eqref{eq:cor2}})\\
	& \in && (1\pm 3\eps)\cdot \cost_{p,z}(X, C')\pm \eps \OPT_{p,z} & (\text{Ineq.~\eqref{eq:net}}) \\
	& \in && (1\pm 4\eps)\cdot \cost_{p,z}(X, C') & (\text{Defn. of $\OPT_{p,z}$})\\
	& \in && (1\pm 4\eps)\cdot \cost_{2,2z}(f(X),f(C')). &  (\text{Eq.~\eqref{eq:cor_embed}})
	\end{split}
	\end{eqnarray}
	Moreover, we have
	\begin{eqnarray}
	\label{eq:cor5}
	\begin{split}
	& && \sum_{x\in f^{-1}(S)} w'(x)\cdot d_p^z(x,C) &\\
	&\in && (1\pm \eps)\cdot \sum_{x\in f^{-1}(S)} w'(x)\cdot d_p^z(x, C\cap H) & (\text{Ineq.~\eqref{eq:cor2}}) \\
	& \in && (1\pm 3\eps)\cdot \sum_{x\in f^{-1}(S)} w'(x)\cdot d_p^z(X, C')\pm \frac{\sum_{x\in S} w(x) \cdot \eps \OPT_{p,z}}{n} & (\text{Ineq.~\eqref{eq:net}}) \\
	& \in && (1\pm 3\eps)\cdot \sum_{x\in S} w(x)\cdot d_2^{2z}(f(x), f(C')) \pm 2\eps\OPT_{p,z} & (\text{Ineq.~\eqref{eq:total_weight}})\\
	& \in && (1\pm 3\eps)\cdot \sum_{x\in S} w(x)\cdot d_2^{2z}(f(x), f(C')) \pm 2\eps\cdot \cost_{2,2z}(f(X), f(C')) & (\text{Defn. of $\OPT_{p,z}$})\\
	& \in && (1\pm 6\eps)\cdot \cost_{2,2z}(f(X),f(C')). &  (\text{Defn. of $S$})
	\end{split}
	\end{eqnarray}
	Combining with Inequalities~\eqref{eq:cor4} and~\eqref{eq:cor5}, we prove Inequality~\eqref{eq:cor_claim}, which implies that $f^{-1}(S)$ with $w'$ is an $O(\eps)$-coreset for \kzC of $X$ with $\ell_p$-metric in $\R^d$.
	Consequently, the theorem is a direct corollary of Theorem~\ref{thm:coreset}.

	Similar to Theorem~\ref{thm:coreset}, the running time is dominated by finding an $2^{O(z)}$-approximate solution, which is $O\left(ndk+nd \log(n/\delta) + k^2 \log^2 n + \log^2(1/\delta) \log^2 n\right)$~\cite{mettu2004optimal}.
\end{proof}

\begin{remark}
	It is unknown whether we can achieve similar results as Corollary~\ref{cor:coreset_lp} for $p> 2$.
	An interesting case is $p=\infty$.
	However, since all metrics can isometrically embed to $\ell_\infty$~\cite{kahane1993some} and there is a lower bound $\Omega(\log n)$ for the coreset size in general metric spaces~\cite{braverman2019coresets}, it is impossible to remove the dependence of $d$ in the coreset size when $p=\infty$.
	Hence, the left open question is that for constant $p>2$, whether we can efficiently construct coresets for \kzC with $\ell_p$-metrics and of size independent of $d$.
\end{remark}

\section{Size lower bounds}
\label{sec:lower}

In this section, we discuss size lower bounds of coresets for \kzC.
Theorem~\ref{thm:lower} shows that the coreset should have an exponential dependence of size on $z$, which matches Theorem~\ref{thm:coreset}.
For any fixed $k,z$, the idea is to construct a bad instance $X\subseteq \R^{\min\left\{d,z/20\right\}}$ such that any $0.01$-coreset has size $\Omega\left(k\cdot \min\left\{2^{z/20},d\right\}\right)$.

\begin{theorem}[\bf{Restatement of Theorem~\ref{thm:lower}}]
	\label{thm:lower_restated}
	For every $z> 0$ and integers $d,k\geq 1$, there exists a point set $X$ in the Euclidean space $\R^d$ such that any $0.01$-coreset for \kzC over $X$ has size $\Omega\left(k\cdot \min\left\{2^{z/20},d\right\}\right)$.
\end{theorem}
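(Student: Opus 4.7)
The plan is to first prove the lower bound in the single-cluster case $k=1$ and then lift it to general $k$ by a standard well-separated multi-cluster construction. For the lift, I would take $k$ disjoint translates $X_1,\dots,X_k$ of the $k=1$ hard instance, separated by a factor $M$ much larger than their diameters, so that any $k$-center set with one centre per copy yields a cost that decomposes additively across copies; standard arguments then show that a $0.01$-coreset for $X=\bigcup_i X_i$ restricted to each copy must itself be an $O(1)$-coreset for the $k=1$ problem on that copy, multiplying the single-copy size bound by $k$.

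For the core $k=1$ case, set $d' := \min\{d,\lfloor 2^{z/20}\rfloor\}$ and take $X := \{\pm e_1,\dots,\pm e_{d'}\}\subset\mathbb{R}^d$ with unit weights, where $e_i$ denotes the $i$-th standard basis vector. Assume for contradiction that $S$ together with a weight function $w$ is a $0.01$-coreset of size $|S|<|X|=2d'$. Partitioning $\mathbb{R}^d$ into the Voronoi cells $(P_x)_{x\in X}$ induced by $X$, by pigeonhole some cell --- without loss of generality $P_{e_1}$ --- contains no coreset point. Computing $\cost_z(X,0) = 2d'$ and $\cost_z(X,-e_1) = 2^z + (2d'-2)\cdot 2^{z/2}$, so that the ratio of the latter to the former is $\Theta(2^{0.95z})$, the strategy is to bound $\sum_{x\in S} w(x)\cdot d^z(x,-e_1)$ by a quantity strictly less than $0.99\cdot 2^z$ and thereby contradict the coreset guarantee at centre $-e_1$.

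Decomposing $S = (S\cap H)\cup (S\setminus H)$ with $H$ the closed Euclidean unit ball around the origin, three bounds drive the argument. First, probing the coreset at a centre $c$ with $\|c\|\to\infty$ gives $\sum_{x\in S} w(x) \leq O(d')$, since $d^z(\cdot,c)=(1+o(1))\|c\|^z$ uniformly on $X\cup S$ forces the total weight to match $|X|$ up to the coreset error. Second, for $x\in H\setminus P_{e_1}$ a direct geometric check shows $d^z(x,-e_1)\leq 2^{0.9z}$: if $\|x\|\leq 1$ and $d(x,-e_1)>2^{0.9}$, then $x_1>0.74$, which combined with $\|x\|\leq 1$ gives $\max_{j\neq 1}|x_j|<x_1$, placing $x$ in $P_{e_1}$ and contradicting the cell choice. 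Third, for $x\notin H\cup P_{e_1}$, using $x\notin P_{e_1}$ to obtain $|x_j|\geq x_1$ for some $j\neq 1$ (or else $x_1\leq 0$) yields $x_1\leq\|x\|/\sqrt{2}$ and hence $d(x,-e_1)^2/d(x,0)^2\leq 2+\sqrt{2}$ for $\|x\|\geq 1$, i.e.\ $d^z(x,-e_1)\leq 2^{0.9z}\cdot d^z(x,0)$. Assembling these, $\sum_{x\in S} w(x)\cdot d^z(x,-e_1) \leq 2^{0.9z}\cdot O(d') + 2^{0.9z}\cdot 1.01\cdot 2d' = O(2^{0.95z})$, which for $z$ larger than an absolute constant is strictly below $0.99\cdot 2^z$, delivering the contradiction and hence $|S|\geq 2d' = \Omega(\min\{2^{z/20},d\})$.

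The main obstacle is the tight interplay of geometric constants: the radius of $H$, the Voronoi-cell threshold $0.74$, the exponent $0.9$ in the distance and ratio bounds, and the exponent $z/20$ in the choice of $d'$ must all be tuned so that the inside-$H$ distance bound, the outside-$H$ ratio bound, and the total weight budget combine to an amplification strictly less than $2^z$. A secondary subtlety lies in the $k$-fold lift, where one must verify that scale separation across the copies truly forces the coreset to decompose across copies up to $(1\pm 0.01)$ factors; this is handled by probing with centre sets that place one optimal centre at each non-target copy while sweeping the centre on the target copy, so that the internal $k=1$ coreset guarantee is inherited. For the regime of small $z$ where $2^{z/20}=O(1)$ the bound degenerates to $\Omega(k)$, which follows from the trivial observation that any $0.01$-coreset for $k$ sufficiently well-separated singletons must carry nontrivial weight near each singleton.
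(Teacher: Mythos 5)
Your $k=1$ argument is sound and is essentially the paper's own: the same instance $\{\pm e_1,\dots,\pm e_{d'}\}$, the same Voronoi-cell pigeonhole, the same split of $S$ into the unit ball $H$ and its complement with the $(2+\sqrt{2})^{z/2}\le 2^{0.9z}$ distance and ratio bounds, a total-weight bound from a far-away probe center, and the budget $\sum_{x\in S}w(x)\,d^z(x,0)\le 1.01\cdot 2d'$ from probing at the origin; assembling these against $\cost_z(X,-e_1)\ge 2^z$ gives the contradiction exactly as in Lemma~\ref{lm:lower2}.

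The genuine gap is the lift to general $k$. You assert that a $0.01$-coreset for $X=\bigcup_i X_i$, restricted to each copy, is an $O(1)$-coreset for the $k=1$ problem on that copy, to be certified by sweeping one center on the target copy while pinning optimal centers on the others. But with centers pinned at the other copies' centroids, the coreset guarantee only yields
\[
\sum_{x\in S} w(x)\cdot d^z(x,C)\in (1\pm 0.01)\cdot\bigl(\cost_z(X_i,c)+(k-1)\cdot 2d'\bigr),
\]
i.e.\ an \emph{additive} error of order $k\,d'$ on $\cost_z(X_i,c)$, not a multiplicative $(1\pm O(1))$ guarantee for the copy. The $k=1$ contradiction must distinguish values at scale $2^z$ at the center $o_i-e_1$ (a deficiency of about $0.5\cdot 2^z$ versus the required $0.99\cdot 2^z$), and as soon as $k\,d'\gg 2^z$ (e.g.\ $k\ge 2^z$, which the theorem must cover since it holds for every $k$) this discrepancy is completely swallowed by the additive slack, so no per-copy contradiction follows even when $|S|=o(k d')$. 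The paper avoids exactly this by not arguing copy by copy: it takes the set $A$ of at least $0.99k$ copies that contain $o(d')$ coreset points, perturbs \emph{all} of them simultaneously in a single center set $C'$ (replacing $o_i$ by $o_i-e_1$ for every $i\in A$), and shows the coreset value at $C'$ is below $0.5\cdot 2^z k$ while the true cost is at least $0.99k\cdot 2^z$, so the discrepancy remains a constant fraction of the total objective. A secondary repair you would need: the ``center at infinity'' weight bound does not survive the multi-copy setting, because once the swept center is farther than the inter-copy separation the points of copy $i$ switch to the other copies' centers; as in Claim~\ref{claim:lowerk1} you must probe with a finite far center and settle for bounding only the weight inside the unit ball around each $o_i$.
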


\begin{proof}
	We first consider the case that $d\leq 2^{\lfloor z/20 \rfloor}$.
	If $z\leq 100$, we consider $X$ as a collection of $k$ distinct points.
	Since $\cost_z(X,X)=0$, a $0.01$-coreset for \kzC over $X$ must have size $k=\Omega(2^{d} k)$, which completes the proof.
	Hence, we assume that $z\geq 100$ and $d\leq 2^{\lfloor z/20 \rfloor}$ in the following.

	\noindent \textbf{Case $k$=1.}
	We first consider a simple case that $k=1$.
	Let $X=\left\{e_1,-e_1,\ldots, e_d,-e_d\right\}$ denote the collection of unit coordinate vectors and suppose $S\subseteq \R^d$ is a $0.01$-coreset together with a weight function $w:S\rightarrow \R_{\geq 0}$.
	By contradiction we assume that $|S|=o(d)$.

	\eat{
		We first assume that $S$ is symmetric with respect to the origin.
		If not, we construct $S'= \left\{x,-x\mid x\in S\right\}$ together with a weight function $w':S'\rightarrow \R_{\geq 0}$ satisfying that $w'(x)=w'(-x)=w(x)/2$ for every $x\in S$.
		Then $|S'|=2\cdot|S|$.
		Moreover, for any center $c\in \R^d$, we have
		\begin{eqnarray*}
			\begin{split}
				\sum_{x\in S'} w'(x)\cdot d^z(x,c) &= && \sum_{x\in S} \frac{w(x)}{2}\cdot\left(d^z(x,c)+d^z(-x,c)\right) && (\text{Defn. of $S$}) \\
				& = && \frac{1}{2}\sum_{x\in S} w(x)\cdot d^z(x,c) + \frac{1}{2}\sum_{x\in S} w(x)\cdot d^z(x,c') && (d(-x,c)=d(x,-c)) \\
				&\in && \frac{1}{2}(1\pm 0.01)\cdot \cost_z(X,c) + \frac{1}{2}(1\pm 0.01)\cdot \cost_z(X,-c) && (\text{Coreset $S$}) \\
				&\in &&(1\pm 0.01)\cdot  \cost_z(X,c). && (\text{Symmetry of $X$})
			\end{split}
		\end{eqnarray*}
		Hence, $S'$ is also a 0.01-coreset for \kzC of size $o(2^{z/2} k)$.
	}
	Let $H = \left\{x\in \R^d: \|x\|_2\leq 1\right\}$ denote the unit ball centered at the origin.
	We first have the following claim.
	\begin{claim}
		\label{claim:lower1}
		$\sum_{x\in S\cap H} w(x)\leq 1.02\cdot 2d$.
	\end{claim}
	\begin{proof}
		By contradiction assume that $\sum_{x\in S\cap H} w(x)> 1.02\cdot 2d$.
		Consider center $c=10^5 z\cdot e_1$.
		On one hand, we have
		\begin{eqnarray*}
			\begin{split}
				\sum_{x\in S} w(x)\cdot d^z(x,c)&\geq &&\sum_{x\in S\cap H} w(x)\cdot d^z(x,c)&&\\
				&\geq&& \left(\sum_{x\in S\cap H} w(x)\right)\cdot (10^5 z-1)^z && (\text{Defn. of $H$}) \\
				&\geq && 1.02\cdot 2d\cdot (10^5 z-1)^z. &&
			\end{split}
		\end{eqnarray*}
		On the other hand, we have
		\begin{eqnarray*}
			\begin{split}
				\sum_{x\in S} w(x)\cdot d^z(x,c)&\leq &&1.01\cdot \sum_{x\in X} d^z(x,c)&& (\text{Defn. of $S$}) \\
				&\leq&& 1.01\cdot 2d\cdot (10^5 z+1)^z. && (\text{Construction of $X$}) \\
				&< && 1.02\cdot 2d\cdot (10^5 z-1)^z, &&(z>100)
			\end{split}
		\end{eqnarray*}
		which is a contradiction. 
		This completes the proof.
	\end{proof}
	We also have the following lemma.
	\begin{lemma}
		\label{lm:lower2}
		At least one of the following properties holds:
		\begin{itemize}
			\item $\sum_{x\in S} w(x)\cdot d^z(x,0)> 2\cdot 2d$;
			\item There exists $y\in X$ such that $\sum_{x\in S} w(x)\cdot d^z(x,y)< 0.5\cdot 2^z$.
		\end{itemize}
	\end{lemma}
	\begin{proof}
		We assume that $\sum_{x\in S} w(x)\cdot d^z(x,0)\leq 2\cdot 2d$.
		Otherwise, we have done.

		We partition $\R^d$ into $2d$ cones $P_x s$, where for each $x\in X$, cone $P_x$ is the collection of points in $\R^d$ that is closest to $x$ among $X$.
		Since $|S|=o(d)$, there must exist a cone $P_x$ such that $P_x\cap S=\emptyset$.
		W.l.o.g., we assume that $P_{e_1}$ satisfying that $P_{e_1}\cap S=\emptyset$.
		Then it suffices to prove $\sum_{x\in S} w(x)\cdot d^z(x,-e_1)< 0.5\cdot 2^z$.
		We consider points in $S\cap H$ and $S\setminus H$ separately.

		We first discuss those points in $S\cap H$. 
		Since $P_{e_1}\cap S=\emptyset$, every $x\in S$ satisfies that $x_1\leq \max\left\{|x_2|,\ldots,|x_d|\right\}$.
		Then for any $x\in S\cap H$, we have 
		\begin{eqnarray}
		\begin{split}
		\label{eq:lower1}
		x_1^2&\leq && \frac{x_1^2}{2}+\frac{\max\left\{|x_2|,\ldots,|x_d|\right\}^2}{2} &&\\
		&\leq && \frac{\|x\|_2^2}{2} &&\\
		&\leq && \frac{1}{2}, && (x\in H)
		\end{split}
		\end{eqnarray} 
		i.e., $x_1\leq \frac{1}{\sqrt{2}}$.
		Moreover, for each $x\in S\cap H$, we know that
		\begin{eqnarray}
		\label{eq:lower2}
		\begin{split}
		d^z(x, -e_1) & = && \left((1+x_1)^2+\sum_{i=2}^{d} x_i^2\right)^{z/2} && (\text{by definition}) \\
		& \leq && \left( (1+x_1)^2 + 1-x_1^2 \right)^{z/2} && (x\in H) \\
		& = && \left( 2+2 x_1 \right)^{z/2} && \\
		& \leq && (2+\sqrt{2})^{z/2} && (\text{Ineq.~\eqref{eq:lower1}}) \\
		& \leq &&  2^{0.9z}. &&
		\end{split}
		\end{eqnarray}
		It implies that
		\begin{eqnarray}
		\label{eq:lower3}
		\begin{split}
		\sum_{x\in S\cap H} w(x)\cdot d^z(x,-e_1) & \leq && \left(\sum_{x\in S\cap H} w(x) \right)\cdot 2^{0.9z} && (\text{Ineq.~\eqref{eq:lower2}}) \\
		& \leq && 1.02\cdot 2d\cdot 2^{0.9z} && (\text{Claim~\ref{claim:lower1}}) \\
		& \leq && 2.04\cdot 2^{0.95z}. && (d\leq 2^{\lfloor z/20 \rfloor})
		\end{split}
		\end{eqnarray}

		We then discuss those points in $S\setminus H$.
		For each $x\in S\setminus H$, we have that
		\begin{eqnarray}
		\label{eq:lower4}
		\begin{split}
		\frac{d^z(x,-e_1)}{d^z(x,0)} & = && \frac{\left((1+x_1)^2+\sum_{i=2}^{d} x_i^2\right)^{z/2}}{\left(\sum_{i=1}^{d} x_i^2\right)^{z/2}} && (\text{by definition}) \\
		& = && \left(1+\frac{1+2x_1}{\sum_{i=1}^{d} x_i^2}\right)^{z/2}&& \\
		& \leq && \left(1+\frac{1+\sqrt{2}\cdot \sqrt{\sum_{i=1}^{d} x_i^2}}{\sum_{i=1}^{d} x_i^2}\right)^{z/2} && (x_1\leq \max\left\{|x_2|,\ldots,|x_d|\right\}) \\
		& \leq && \left(1+1+\sqrt{2} \right)^{z/2} && (\sum_{i=1}^{d} x_i^2\geq 1) \\
		& \leq && 2^{0.9z}. &&
		\end{split}
		\end{eqnarray}
		It implies that
		\begin{eqnarray}
		\label{eq:lower5}
		\begin{split}
		\sum_{x\in S\setminus H} w(x)\cdot d^z(x,-e_1) & = && \sum_{x\in S\setminus H} w(x)\cdot d^z(x,0) \cdot \frac{d^z(x,-e_1)}{d^z(x,0)} && \\
		& \leq && 2^{0.9z}\cdot \sum_{x\in S\setminus H} w(x)\cdot d^z(x,0) && (\text{Ineq.~\eqref{eq:lower4}}) \\
		& \leq && 2^{0.9z}\cdot  \sum_{x\in S} w(x)\cdot d^z(x,0) && \\
		& \leq && 2\cdot 2d\cdot 2^{0.9z} && (\text{by assumption}) \\
		& \leq && 4\cdot 2^{0.95z}. && (d\leq 2^{\lfloor z/20 \rfloor})
		\end{split}
		\end{eqnarray}

		Combining Inequalities~\eqref{eq:lower3} and~\eqref{eq:lower5}, we directly conclude that
		\begin{eqnarray*}
			\begin{split}
				\sum_{x\in S} w(x) \cdot d^z(x,-e_1) & = && \sum_{x\in S\cap H} w(x) \cdot d^z(x,-e_1)+\sum_{x\in S\setminus H} w(x) \cdot d^z(x,-e_1) && \\
				& \leq && 2.04\cdot 2^{0.95z}+4\cdot 2^{0.95z} && (\text{Ineqs.~\eqref{eq:lower3} and~\eqref{eq:lower5}}) \\
				& \leq && 0.5\cdot 2^z. && (z\geq 100)
			\end{split}
		\end{eqnarray*}
		We complete the proof.
	\end{proof}
	By Lemma~\ref{lm:lower2}, we conclude that $S$ is not a 0.01-coreset.
	The reason is that if $\sum_{x\in S} w(x)\cdot d^z(x,0)> 2\cdot 2d$ holds, then
	\[
	\sum_{x\in S} w(x)\cdot d^z(x,0) > 2\cdot \sum_{x\in X} d^z(x,0),
	\]
	which is a contradiction.
	Otherwise if there exists $y\in X$ such that $\sum_{x\in S} w(x)\cdot d^z(x,y)< 0.5\cdot 2^z$, then since $-y\in X$ and $d(y,-y)=2$,
	\[
	\sum_{x\in S} w(x)\cdot d^z(x,y)<0.5 \cdot d^z(-y,y)< 0.5\cdot \sum_{x\in X} d^z(x,y),
	\]
	which is also a contradiction. \\
	
	\noindent \textbf{Case $k\geq 1$.} Next, we prove for the general $k$.
	We construct $X$ as a copy of $k$ pieces $X^{(1)},\ldots,X^{(k)}$ of the instance $X$ in Case $k=1$, where each copy is far away from each other, say for any $i\neq j\in [k]$, we let 
	\[
	d(X^{(i)},X^{(j)}) = \min_{x\in X^{(i)}, y\in X^{(j)}}d(x,y)\geq 10^{100 z k}.
	\]
	For each $i\in [k]$, we assume that $o_i$ is the average of $X^{(i)}$, which plays the same role as the origin in Case $k=1$.
	Similarly, for each $i\in [k]$, we denote $H^{(i)}$ to be the unit ball centered at $o_i$.
	Suppose $S\subseteq \R^d$ is a 0.01-coreset together with a weight function $w:S\rightarrow \R_{\geq 0}$.
	By contradiction we assume that $|S|=o(2^{z/20} k)=o(d k)$.

	By a similar argument as in Claim~\ref{claim:lower1}, we first have the following claim.
	\begin{claim}
		\label{claim:lowerk1}
		for each $i\in [k]$, $\sum_{x\in S\cap H^{(i)}} w(x) \leq 1.02\cdot 2d$.
	\end{claim}
	\begin{proof}
		By contradiction and w.l.o.g., we assume that 
		\[
		\sum_{x\in S\cap H^{(1)}} w(x) > 1.02\cdot 2d.
		\]
		We also assume that $o_1=0$.
		Consider a $k$-center set $C=\left\{c_1=10^5 zk\cdot e_1, c_2 = o_2, \ldots, c_k=o_k\right\}$.
		On one hand, we have
		\begin{eqnarray*}
			\begin{split}
				\sum_{x\in S} w(x)\cdot d^z(x,C)&\geq &&\sum_{x\in S\cap H^{(1)}} w(x)\cdot d^z(x,c_1)&&\\
				&\geq&& \left(\sum_{x\in S\cap H^{(1)}} w(x)\right)\cdot (10^5 zk-1)^z && (\text{Defn. of $H^{(1)}$}) \\
				&\geq && 1.02\cdot 2d\cdot (10^5 zk-1)^z. && (\text{by assumption})
			\end{split}
		\end{eqnarray*}
		On the other hand, we have
		\begin{eqnarray*}
			\begin{split}
				\sum_{x\in S} w(x)\cdot d^z(x,C)&\leq &&1.01\cdot \sum_{x\in X} d^z(x,C)&& (\text{Defn. of $S$}) \\
				& = && 1.01\cdot \sum_{i\in [k]} \sum_{x\in X^{(i)}} d^z(x,c_i) && (\text{Construction of $X$}) \\
				&\leq&& 1.01\cdot \left(2d\cdot (10^5 zk+1)^z + (k-1)\cdot 2d \right). && (\text{Defn. of $C$}) \\
				&< && 1.02\cdot 2d\cdot (10^5 zk-1)^z, &&
			\end{split}
		\end{eqnarray*}
		which is a contradiction. 
		This completes the proof.
	\end{proof}
	For each $i\in [k]$, we denote $S^{(i)}$ to be the collection of points in $S$ that is closest to $o_i$ (breaking ties arbitrarily).
	Let $A\subseteq [k]$ denote the collection of $i$ satisfying that $|S^{(i)}|=o(2^{z/20}) = o(d)$.
	Since $|S|=o(2^{z/20} k)$, we have that $|A|\geq 0.99 k$.
	We partition $\R^d$ into $2d k$ cells $P^{(i)}_x s$, where for each $x\in X$, cell $P^{(i)}_x$ is the collection of points in $\R^d$ that is closest to point $x\in X^{(i)}$ among $X$.
	Then for each $i\in A$, there must exist a cell $P^{(i)}_x$ such that $P^{(i)}_x\cap S^{(i)}=\emptyset$. 
	W.l.o.g., we assume this cell to be $P^{(i)}_{o_i+e_1}$.
	Let $C = \left\{c_1=o_1,\ldots, c_k = o_k\right\}$.
	Let $C'$ denote the $k$-center set in which $c'_i = o^{i}$ if $i\notin A$ and $c'_i = o_i-e_1$ if $i\in A$.
	We have the following lemma.
	\begin{lemma}
		\label{lm:lowerk2}
		At least one of the following properties holds:
		\begin{itemize}
			\item $\sum_{x\in S} w(x)\cdot d^z(x,C)> 1.01\cdot 2d k$;
			\item $\sum_{x\in S} w(x)\cdot d^z(x,C')< 0.5\cdot 2^z k$.
		\end{itemize}
	\end{lemma}
	\begin{proof}
		We assume that $\sum_{x\in S} w(x)\cdot d^z(x,C)\leq 1.01\cdot 2d k$.
		Otherwise, we have done.

		For each $i\in [k]$, we denote $\overline{H^{(i)}} = \left(\bigcup_{x\in X^{(i)}} P^{(i)}_x \right) \setminus H^{(i)}$.
		By the construction of $X$ and $P^{(i)}$, we know that for any $x\notin \overline{H^{(i)}}$,
		\begin{align}
		\label{eq:diam}
		d(x, c_i) > 100.
		\end{align}
		Similar to Lemma~\ref{lm:lower2}, we consider points in $S^{(i)}\cap H^{(i)}$ and $S^{(i)}\setminus H^{(i)}$ for $i\in A$ separately.

		We first discuss those points in $S^{(i)}\cap H^{(i)}$ for $i\in A$. 
		By the same argument as in Inequality~\eqref{eq:lower3}, we directly have
		\begin{eqnarray*}
			\begin{split}
				\sum_{x\in S^{(i)}\cap H^{(i)}} w(x)\cdot d^z(x,c'_i) & \leq && 	\sum_{x\in S\cap H^{(i)}} w(x)\cdot d^z(x,-e_1) &&\\
				& \leq && \left(\sum_{x\in S\cap H^{(i)}} w(x) \right)\cdot 2^{0.9z} && (\text{Ineq.~\eqref{eq:lower2}}) \\
				& \leq && 1.02\cdot 2d\cdot 2^{0.9z} && (\text{Claim~\ref{claim:lowerk1}}) \\
				& \leq && 2.04\cdot 2^{0.95z}. && (d\leq 2^{\lfloor z/20 \rfloor})
			\end{split}
		\end{eqnarray*}
		Hence, we conclude that
		\begin{align}
		\label{eq:lowerk1}
		\sum_{i\in A} \sum_{x\in S^{(i)}\cap H^{(i)}} w(x)\cdot d^z(x,c'_i) \leq 2.04\cdot 2^{0.95z} k.
		\end{align}

		Then we discuss those points in $S^{(i)}\setminus H^{(i)}$ for $i\in A$. 
		If $x\in \overline{H^{(i)}}\setminus H^{(i)}$, then by the same argument as in Inequality~\eqref{eq:lower4} we have
		\[
		\frac{d^z(x,c'_i)}{d^z(x,c_i)} \leq 2^{0.9z}.
		\]
		If $x\in S^{(i)}\setminus \overline{H^{(i)}}$, since $d(x,c_i)> 100 $ by Inequality~\eqref{eq:diam}, we have
		\[
		\frac{d^z(x,c'_i)}{d^z(x,c_i)} \leq \frac{\left(d(x,c_i)+1\right)^z}{d^z(x,c_i)}\leq 2^{0.9z}.
		\]
		Combining with the above two inequalities, we have that for any $x\in S^{(i)}\setminus H^{(i)}$ for $i\in A$,
		\begin{align}
		\label{eq:ratio}
		\frac{d^z(x,c'_i)}{d^z(x,c_i)} \leq 2^{0.9z}.
		\end{align}
		Then by a similar argument as in Inequality~\eqref{eq:lower5}, we have
		\begin{eqnarray}
		\label{eq:lowerk2}
		\begin{split}
		\sum_{i\in A}\sum_{x\in S^{(i)}\setminus H^{(i)}} w(x)\cdot d^z(x,c'_i) & = && \sum_{i\in A} \sum_{x\in S^{(i)}\setminus H^{(i)}} w(x)\cdot d^z(x,c_i) \cdot \frac{d^z(x,c'_i)}{d^z(x,c_i)} && \\
		& \leq && 2^{0.9z}\cdot \sum_{i\in A} \sum_{x\in S^{(i)}\setminus H^{(i)}} w(x)\cdot d^z(x,c_i) && (\text{Ineq.~\eqref{eq:ratio}}) \\
		& \leq && 2^{0.9z}\cdot  \sum_{x\in S} w(x)\cdot d^z(x,C) && (\text{Defn. of $S^{(i)}$})\\
		& \leq && 1.01\cdot 2d k\cdot 2^{0.9z} && (\text{by assumption}) \\
		& \leq && 2.02\cdot 2^{0.95z} k. && (d\leq 2^{\lfloor z/20 \rfloor})
		\end{split}
		\end{eqnarray}

		Overall, we have the following inequality.
		\begin{eqnarray*}
			\begin{split}
				& && \sum_{x\in S} w(x) \cdot d^z(x,C') &&\\
				& \leq && \sum_{i\in [k]} \sum_{x\in S^{(i)}} w(x) \cdot d^z(x,c'_i) && \\
				& = && \sum_{i\in A} \sum_{x\in S^{(i)}} w(x) \cdot d^z(x,c'_i)+\sum_{i \notin A} \sum_{x\in S^{(i)}} w(x) \cdot d^z(x,c'_i)&& \\
				& = && \sum_{i\in A} \sum_{x\in S^{(i)}} w(x) \cdot d^z(x,c'_i)+\sum_{i\notin A} \sum_{x\in S^{(i)}} w(x) \cdot d^z(x,c_i)&& (c'_i=c_i, \forall i\notin A)\\
				& \leq && \sum_{i\in A} \sum_{x\in S^{(i)}} w(x) \cdot d^z(x,c'_i)+ 1.01\cdot 2d k&& (\text{by assumption})\\
				& = && \sum_{i\in A} \left(\sum_{x\in S^{(i)}\cap H^{(i)}} w(x) \cdot d^z(x,c'_i) +\sum_{x\in S^{(i)}\setminus H^{(i)}} w(x) \cdot d^z(x,c'_i)\right) &&\\
				& &&+ 2.02\cdot 2^{0.05z} k&& (\text{Defn. of $d$})\\
				& \leq && 2.04\cdot 2^{0.95z} k+2.02\cdot 2^{0.95z} k + 2.02\cdot 2^{0.05z} k&& (\text{Ineqs.~\eqref{eq:lowerk1} and~\eqref{eq:lowerk2}}) \\
				& \leq && 0.5\cdot 2^z. && (z\geq 100)
			\end{split}
		\end{eqnarray*}
		We complete the proof.
	\end{proof}

	By Lemma~\ref{lm:lowerk2}, we conclude that $S$ is not a 0.01-coreset.
	The reason is that if $\sum_{x\in S} w(x)\cdot d^z(x,C)> 1.01\cdot 2d$ holds, then
	\[
	\sum_{x\in S} w(x)\cdot d^z(x,C) > 1.01\cdot \sum_{x\in X} d^z(x,C),
	\]
	which is a contradiction.
	Otherwise if $\sum_{x\in S} w(x)\cdot d^z(x,C')< 0.5\cdot 2^z k$, then since $o_i+e_1\in X$ and $d(o_i+e_1,c'_i)=2$ for each $i\in A$,
	\begin{eqnarray*}
		\begin{split}
			\sum_{x\in S} w(x)\cdot d^z(x,C')&<&& 0.99\cdot \sum_{i\in A} 2^z && (|A|\geq 0.99 k) \\
			&=&& 0.99\cdot \sum_{i\in A} d^z(o_i+e_1,C') && (d(o_i+e_1,c'_i)=2) \\
			&\leq && 0.99\cdot \sum_{x\in X} d^z(x,C'),
		\end{split}
	\end{eqnarray*}
	which is also a contradiction. 
	
	For the case that $d\geq 2^{\lfloor z/20 \rfloor}$, we construct an instance $X=\left\{e_1,-e_1,\ldots, e_{2^{\lfloor z/20 \rfloor}},-e_{2^{\lfloor z/20 \rfloor}}\right\}$. 
	By the same argument as above, we can prove that any $0.01$-coreset that preserves for all \kzC objectives with respect to center sets on the subspace spanned by $X$ should have size $\Omega(2^{\lfloor z/20 \rfloor} k)$, which completes the proof. 
\end{proof}

\section{A failed attempt: projecting $X$ using Johnson–Lindenstrauss}
\label{sec:failed}

Our first attempt is motivated by a recent paper~\cite{makarychev2019performance}, which states that by a Johnson–Lindenstrauss transform $g: \R^d\rightarrow \R^m$ where $m=O\left(\eps^{-2}\log (k/\eps)\right)$, the optimal \kMedian objective with respect to each ``partition'' of $X$ can be approximately preserved, i.e., for any partition $\left(X_1,\ldots,X_k\right)$ of $X$, 
\begin{align}
\label{eq:JL}
\sum_{i=1}^{k} \min_{c\in \R^m} \sum_{x\in X_i} d(g(x), c) \in (1\pm \eps)\cdot \sum_{i=1}^{k} \min_{c\in \R^d} \sum_{x\in X_i} d(x, c).
\end{align}
%
%
Recall that $\calC$ denotes the collection of all ordered subsets (repeats allowed) of $\R^d$ of size $k$ ($k$-center sets).
We want to generalize~\cite{makarychev2019performance} such that all \kMedian objectives in $\calC$ are approximately preserved. 
Concretely speaking, we want to construct (randomized) mappings $f:\R^d\rightarrow \R^m$ and $\phi: (\R^d)^k \rightarrow (\R^m)^k$ 
(independent of the choice of $X$) such that for any $k$-center set $C\in\calC$,
\begin{align}
\label{eq:JL1}
\sum_{x\in X} d\left(f(x),\phi(C)\right)\in (1\pm \eps)\cdot \cost_1(X,C),
\end{align}
where $\cost_1(X,C) = \sum_{x\in X} d(x,C)$ by Equation~\eqref{eq:DefCost}.
Denote $f(A):=\left\{f(x): x\in A\right\}$ for any point set $A\subset \R^d$.
By the \FL framework (Theorem~\ref{thm:fl11_bfl16}), there exists a weighted subset $S\subseteq X$ together with a weight function $u:D\rightarrow \R_{\geq 0}$ and of size $\poly(k/\eps)$, such that $f(S)$ is an $\eps$-coreset for \kMedian over $f(X)$, i.e., for any $k$-center set $C\subset \R^m$,
\begin{align}
\label{eq:JL_coreset}
\sum_{x\in S} w(x)\cdot d(f(x),C)\in (1\pm \eps)\cdot \sum_{x\in X} d(f(x), C).
\end{align}
Then, as in Inequality~\eqref{eq:JL1}, we may argue that for any $k$-center set $C\in \calC$,
\begin{align}
\label{eq:JL2}
\sum_{x\in S} w(x)\cdot d\left(f(x),\phi(C)\right)\in (1\pm \eps)\cdot \sum_{x\in S} w(x)\cdot d(x,C),
\end{align}
%
%
since mappings $f,\phi$ are independent on the choice of $X$.
Combining Inequalities~\eqref{eq:JL1}-\eqref{eq:JL2}, we have that for any $k$-center set $C\in \calC$, 
\begin{eqnarray*}
	\begin{split}
		\sum_{x\in S} w(x)\cdot d(x,C) &\approx  && \sum_{x\in S} w(x)\cdot d\left(f(x),\phi(C)\right) && (\text{Ineq.~\eqref{eq:JL2}}) \\
		& \approx && \sum_{x\in X} d(f(x), \phi(C)) && (\text{Ineq.~\eqref{eq:JL_coreset}}) \\
		&\approx  && \cost_1(X,C). && (\text{Ineq.~\eqref{eq:JL1}})
	\end{split}
\end{eqnarray*}
This indicates that $S$ is an $O(\eps)$-coreset for \kMedian over $X$.
One attempt of constructing $f,\phi$ is letting $f=g$ be a Johnson–Lindenstrauss transform and $\phi = g^k$ be defined by $\phi(C)=\left\{g(c): c\in C\right\}$.
It is not hard to see that Inequality~\eqref{eq:JL1} does not hold by the following example.
Assume that $X$ consists of one point, then since $f=\phi = g$, there always exists a point $y\in \R^d$ such that 
\[
d(f(x),\phi(y))> (1+\eps)\cdot d(x,y),
\]
which violates Inequality~\eqref{eq:JL1} by letting $C=y$.

Another attempt is to let $f=g$ be a Johnson–Lindenstrauss transform but to construct a different mapping $\phi: (\R^d)^k \rightarrow (\R^m)^k$ such that Inequality~\eqref{eq:JL1} holds.
\cite[Theorem 4.3]{makarychev2019performance} shows that for any point $c\in \R^d$, there exists a point $c'\in \R^m$ such that $d(x,c)\approx d(f(x),c')$ holds for almost all $x\in X$.
Based on this fact, we can show the existence of $k$-center sets $C_1,C_2\subseteq \R^m$ for any $k$-center set $C\in \calC$ such that
\[
\sum_{x\in X} d\left(f(x),C_1\right)\in (1\pm \eps)\cdot \cost_z(X,C),
\] 
and
\[
\sum_{x\in S} w(x)\cdot d\left(f(x),C_2\right)\in (1\pm \eps)\cdot \sum_{x\in S} w(x)\cdot d(x,C).
\] 
If $C_1=C_2$ always hold, we are done by letting $\phi(C)=C_1=C_2$.
However, we do not know how to construct $C_1$ ($C_2$) explicitly, i.e., we only know the existence of $C'_1$ such that 
\[
\sum_{x\in X} d(f(x),C'_1)\leq (1+ \eps)\cdot \cost_z(X,C)
\] 
and $C''_1$ such that 
\[
\sum_{x\in X} d(f(x),C''_1)\geq (1- \eps)\cdot \cost_z(X,C).
\]
It is unknown whether there exists a mapping $\phi$ satisfying Inequality~\eqref{eq:JL1} and~\eqref{eq:JL2} simultaneously.

\section{Proof of~\cite[Theorem 15.6]{feldman2011unified}}
\label{sec:fl11}

The proof of Theorem 15.6 in~\cite{feldman2011unified} has some typos for proving (85) by (84), where (84) does not satisfy the condition of ~\cite[Lemma 14.2]{feldman2011unified}.
To fix the typo, it suffices to prove the following lemma.

\begin{lemma}
	\label{lm:proof_fl11}
	Let $a,b,c\geq 0$, $\eps>0$ and $z\geq 1$.
	If $|a-b|\leq c$ and $|a^z-b^z|>\frac{z c^z}{\eps^{z-1}}$, then we have $|a^z-b^z|\leq z\eps\cdot (\max\left\{a,b\right\})^z$.
\end{lemma}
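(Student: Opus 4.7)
The plan is to use the mean value theorem twice: once to produce a routine upper bound on $|a^z-b^z|$, and a second time after extracting from the hypotheses a nontrivial sharpening of the crude estimate $|a-b|\leq c$. Without loss of generality I would assume $a\geq b\geq 0$, so that $\max\{a,b\}=a$ and $0\leq a-b\leq c$. The boundary cases are easy to dispose of upfront: $z=1$ is vacuous because the hypothesis $|a^z-b^z|>zc^z/\eps^{z-1}$ then reduces to $a-b>c$, contradicting $a-b\leq c$; and $c=0$ forces $a=b$, making the conclusion trivial. So I assume $z>1$ and $c>0$ from here on.

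First I would apply MVT to $t\mapsto t^z$ on $[b,a]$ to get the standard estimate $a^z-b^z\leq z a^{z-1}(a-b)\leq z a^{z-1}c$. Second, I would feed this upper bound into the hypothesis $a^z-b^z>zc^z/\eps^{z-1}$ to obtain $zc^z/\eps^{z-1}<z a^{z-1}c$; dividing by $zc$ and taking $(z-1)$-th roots (legitimate since $z-1>0$ and both sides are nonnegative) yields the key sharpening $c<\eps a$. Third, I would substitute this improved bound back into the first MVT estimate to conclude
\[
a^z-b^z\leq z a^{z-1}c < z a^{z-1}\cdot\eps a = z\eps a^z = z\eps\cdot(\max\{a,b\})^z,
\]
which is the desired inequality.

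The proof is short and I do not anticipate any genuine obstacle. The one subtle point worth flagging is conceptual rather than computational: the hypothesis $|a^z-b^z|>zc^z/\eps^{z-1}$, which at first sight looks like a lower bound on the very quantity we want to bound from above, is in fact exactly what lets us rule out the ``bad'' regime $c\geq \eps a$ via the MVT upper bound. Once this is seen, the conclusion is effectively a one-line MVT estimate plus an arithmetic rearrangement, with the only bookkeeping being the WLOG reduction to $a\geq b$ and the trivial handling of $z=1$ and $c=0$.
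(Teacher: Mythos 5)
Your proof is correct and follows essentially the same route as the paper's: both rest on the estimate $a^z-b^z\leq z a^{z-1}(a-b)$ (which the paper proves via a monotonicity argument rather than quoting the mean value theorem) and both feed the hypothesis back in to conclude that the gap is at most $\eps a$ before substituting into that estimate. Your explicit handling of the edge cases $z=1$ and $c=0$, and working with $c<\eps a$ instead of the paper's $(a-1)/a\leq\eps$ after normalizing $b=1$, are only cosmetic differences.
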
 

\begin{proof}
	Without loss of generality, assume that $a>b>0$.
	By scalability, we can also assume that $b=1$.
	Then we have 
	\[
	a^z-1 > \frac{z c^z}{ \eps^{z-1}} \geq \frac{z (a-1)^z}{\eps^{z-1}}.
	\]
	Moreover, we claim that
	\[
	a^z-1\leq (a-1)z\cdot a^{z-1}.
	\]
	This is because that considering function $f(a)=(a-1)z\cdot a^{z-1}-(a^z-1)$, we have $\nabla_a f(a) = (z-1)z (a^{z-1}-a^{z-2})\geq 0$ when $a\geq 1$ and, hence, $f(a)\geq f(1)=0$.
	Combining the above inequalities, we have $\frac{a-1}{a}\leq \eps$.
	Then we have
	\[
	a^z-1\leq (a-1)z\cdot a^{z-1} \stackrel{\frac{a-1}{a}<\eps}{\leq} z\eps\cdot a^z,
	\]
	which completes the proof.
\end{proof}

\noindent
Let $a = \mathrm{dist}(p,x), b = \mathrm{dist}(p',x)$, $c = \mathrm{dist}(p,p')$ and $\eps'=\eps/z$ in Lemma~\ref{lm:proof_fl11}, we complete the proof of Theorem 15.6 in~\cite{feldman2011unified} from (84) to (85).

\section*{Acknowledgments}
This research was supported in part by NSF CCF-1908347 grant. 

\bibliography{references}
\bibliographystyle{plain}

\end{document}